\setlist{nosep} 
\newcommand{\symd}{\mathbin{\Delta}\xspace}
\newcommand{\symdi}[1]{{\scalebox{1.5}{$\symd$}}_{#1}\,}
\newcommand\etc{etc\@ifnextchar.{}{.\@}\xspace}
\newcommand\ie{i.e.\@\xspace}  
\newcommand\eg{e.g.\@\xspace}
\newcommand{\odd}[1]{\mathsf{Odd}\left(#1\right)}
\newcommand{\codd}[1]{\mathsf{Odd}\left[#1\right]}
\newcommand{\odds}[2]{\mathsf{Odd}_{#1}\left(#2\right)}
\newcommand{\codds}[2]{\mathsf{Odd}_{#1}\left[#2\right]}
\newcommand{\pow}[1]{\ensuremath{2^{ #1 }}}
\newcommand{\trl}{\triangleleft} 
\newcommand{\Wpred}{\mathcal{W}_{pred}}
\newcommand{\Vsucc}{\mathcal{V}_{succ}}
\newcommand{\YZ}{\normalfont YZ\xspace}
\newcommand{\LOG}{labelled open graph}
\newcommand{\ketbra}[2]{\ensuremath{\ket{#1}\!\bra{#2}}}
\newcommand{\abs}[1]{\ensuremath{\left| #1 \right|}}
\newcommand{\someset}{\mathcal{A}}
\newcommand{\otherset}{\mathcal{D}}
\newcommand{\bigO}{\mathcal{O}}
\newcommand{\triplecase}[6]{
    #1 \quad & \text{and} \quad #2, \text{or}\\
    #3 \quad & \text{and} \quad #4, \text{or}\\
    #5 \quad & \text{and} \quad #6.
}
\newcommand{\A}{{A}}
\newcommand{\I}{Id}
\newcommand{\Xlike}{\mathcal{X}}
\newcommand{\Zlike}{\mathcal{Z}}
\newcommand{\planar}{\mathcal{L}}
\tikzstyle{box}=[shape=rectangle, text height=1.5ex, text depth=0.25ex, yshift=0.5mm, fill=white, draw=black, minimum height=5mm, yshift=-0.5mm, minimum width=5mm, font={\small}]
\tikzstyle{Z dot}=[inner sep=0mm, minimum size=2mm, shape=circle, draw=black, fill={rgb,255: red,221; green,255; blue,221}]
\tikzstyle{Z phase dot}=[minimum size=1.2em, font={\footnotesize\boldmath}, shape=rectangle, rounded corners=0.5em, inner sep=0.2em, outer sep=-0.2em, scale=0.8, tikzit shape=circle, draw=black, fill={rgb,255: red,221; green,255; blue,221}, tikzit draw=blue]
\tikzstyle{X dot}=[Z dot, shape=circle, draw=black, fill={rgb,255: red,255; green,136; blue,136}]
\tikzstyle{X phase dot}=[Z phase dot, tikzit shape=circle, tikzit draw=blue, fill={rgb,255: red,255; green,136; blue,136}, font={\footnotesize\boldmath}]
\tikzstyle{hadamard}=[fill=yellow, draw=black, shape=rectangle, inner sep=0.6mm, minimum height=1.5mm, minimum width=1.5mm]
\tikzstyle{vertex}=[inner sep=0mm, minimum size=1mm, shape=circle, draw=black, fill=black]
\tikzstyle{vertex set}=[inner sep=0mm, minimum size=1mm, shape=circle, draw=black, fill=white, font={\footnotesize\boldmath}]
\tikzstyle{target}=[inner sep=0mm, minimum size=3mm, shape=circle, draw=black]
\tikzstyle{hadamard edge}=[-, dashed, dash pattern=on 2pt off 1.5pt, thick, draw={rgb,255: red,68; green,136; blue,255}]
\tikzstyle{brace edge}=[-, tikzit draw=blue, decorate, decoration={brace,amplitude=1mm,raise=-1mm}]
\tikzstyle{diredge}=[->]
\tikzstyle{dashed edge}=[-, dashed, dash pattern=on 2pt off 0.5pt, draw=black]
\theoremstyle{definition}
\newtheorem{theorem}{Theorem}[section]
\newtheorem{lemma}[theorem]{Lemma}
\newtheorem{proposition}[theorem]{Proposition}
\newtheorem{corollary}[theorem]{Corollary}
\newtheorem{observation}[theorem]{Observation}
\newtheorem{definition}[theorem]{Definition}
\newtheorem{example}[theorem]{Example}
\newtheorem{remark}[theorem]{Remark}
\def\abs#1{\left| #1 \right|}
\newcommand{\comp}[1]{\bar{#1}} 
\newcommand{\ld}{\lambda}
\newcommand{\sse}{\subseteq}
\newcommand{\pmm}[1]{\begin{pmatrix} #1 \end{pmatrix}}
\newcommand{\smm}[1]{\left(\begin{smallmatrix} #1 \end{smallmatrix}\right)}
\renewcommand{\v}[1]{\mathbf{#1}}
\tikzstyle{every picture}=[baseline=-0.25em]
\tikzset{every path/.style={draw=black!80, line width=0.7pt}}
\tikzstyle{none}=[inner sep=0mm]
\tikzset{
	tickedge/.style={
		decoration={ markings,
			mark=at position .5 with {\draw (0,2pt) -- (0,-2pt);}
		},
		postaction={decorate}
	},
}
\newcommand{\interp}[1]{\left\llbracket #1 \right\rrbracket}
\newcommand{\eq}[2][]{
	#1
	\underset{\substack{#2}}{=}
	#1
}
\newcommand{\DtoD}[2][]{
	#1
	\underset{\substack{#2}}{\to}
	#1
}
\title{Inserting Planar-Measured Qubits into MBQC Patterns while Preserving Flow}
\author{Miriam Backens
\institute{Université de Lorraine, CNRS, Inria, LORIA, F-54000 Nancy, France}
\and
Thomas Perez
\institute{Inria, Palaiseau, France\\
LIX, CNRS, Ecole Polytechnique, Institut Polytechnique de Paris, Palaiseau, France\\
CPHT, CNRS, Ecole Polytechnique, Institut Polytechnique de Paris, Palaiseau, France
}
}
\begin{document}
\maketitle

\begin{abstract}
 In the one-way model of measurement-based quantum computation (MBQC), computation proceeds via single-qubit measurements on a resource state.
 Flow conditions ensure that the overall computation is deterministic in a suitable sense, and are required for efficient translation into quantum circuits.
 Procedures that rewrite MBQC patterns -- e.g. for optimisation, or adapting to hardware constraints -- thus need to preserve the existence of flow.
 Most previous work has focused on rewrites that reduce the number of qubits in the computation, or that introduce new Pauli-measured qubits.

 Here, we consider the insertion of planar-measured qubits into MBQC patterns, \ie arbitrary measurements in a plane of the Bloch sphere spanned by a pair of Pauli operators; such measurements are necessary for universal MBQC.
 We extend the definition of causal flow, previously restricted to $XY$-measurements only, to also permit $YZ$-measurements and derive the conditions under which a $YZ$-insertion preserves causal flow.
 Then we derive conditions for $YZ$-insertion into patterns with gflow or Pauli flow, in which case the argument straightforwardly extends to $XZ$-insertions as well.
 We also show that the `vertex splitting' or `neighbour unfusion' rule previously used in the literature can be derived from $YZ$-insertion and pivoting.
 This work contributes to understanding the broad properties of flow-preserving rewriting in MBQC and in the ZX-calculus more broadly, and it will enable more efficient optimisation, obfuscation, or routing.
\end{abstract}

\section{Introduction}

The one-way model of measurement-based quantum computing (MBQC) is a universal model of quantum computation driven by single-qubit measurements on a graph state \cite{raussendorfOneWayQuantumComputer2001}.
Individual measurements are probabilistic, yet computations can be deterministic overall by modifying later measurements depending on the outcomes of earlier ones.
MBQC is a promising model for physically building quantum computers (\eg based on photonics \cite{takedaToward2019,zilkCompiler2022,deFeliceFusion2024}) as well as for secure delegated quantum computation \cite{broadbentUniversalBlindQuantum2009,kapourniotisUnifying2024}.
It also has theoretical applications in quantum secret-sharing protocols \cite{kashefiInformation2009} and verifying quantum computations \cite{gheorghiuVerification2019,kapourniotisUnifying2024}.
The higher flexibility of MBQC compared to quantum circuits means that it is easier to optimise computations in the one-way model, \eg to trade depth against number of ancillas \cite{broadbentParallelizingQuantumCircuits2009}, reduce the number of non-stabiliser operations \cite{duncanGraphtheoreticSimplificationQuantum2020}, or the number of entangling operations \cite{staudacherReducing2QuBitGate2023}.

Many of these applications involve translating a computation given as a quantum circuit into MBQC and then back to a circuit \cite{duncanGraphtheoreticSimplificationQuantum2020,backensThereBackAgain2021,mcelvanneyCompleteFlowPreservingRewrite2023,mcelvanneyFlowpreservingZXcalculusRewrite2023} using the ZX-calculus, a diagrammatic language that can represent both quantum circuits over the CNOT and single-qubit gate set, and MBQC \cite{coeckeInteractingQuantumObservables2011}.
The ZX-calculus also comes with multiple complete equational theories, that allow purely graphical reasoning for translation between the two models as well as for optimisation or other applications \cite{vandeweteringZXcalculusWorkingQuantum2020}.
Yet the existing axioms do not always preserve the `MBQC-form', nor the property of being robustly deterministic.

The widely-used notion of \emph{robust determinism} is captured by a family of flow properties that depend on the graph state underlying the MBQC, as well as the types of measurements, which can be `Pauli' or `planar' (see Section~\ref{s:preli}) \cite{browneGeneralizedFlowDeterminism2007,mhallaCharacterisingDeterminismMBQCs2022}.
This information is also captured in the ZX-diagram describing a one-way computation, so it is meaningful to consider `flow-preserving rewrite rules' for MBQC-form ZX-diagrams.
This research is of broader importance, as there exist polynomial-time algorithms for translating ZX-diagrams with flow into quantum circuits \cite{duncanGraphtheoreticSimplificationQuantum2020,backensThereBackAgain2021,simmonsRelatingMeasurementPatterns2021} for physical implementation, whereas translation to circuits is \#P-hard in general \cite{debeaudrapCircuitExtractionZXdiagrams2022}.
A complete set of flow-preserving rewrite rules is known for the Clifford fragment: one-way computations containing only Pauli measurements; this includes the insertion of Pauli-$Z$ measured qubits \cite{mcelvanneyCompleteFlowPreservingRewrite2023}.
Other flow-preserving rewrite rules either reduce or leave invariant the number of qubits \cite{duncanGraphtheoreticSimplificationQuantum2020,backensThereBackAgain2021,holkerCausal2023}.
Additionally, a version of `vertex splitting' (where one qubit is replaced by a chain of three qubits instead) is known to preserve Pauli flow \cite{mcelvanneyFlowpreservingZXcalculusRewrite2023}.

Here, we look more broadly at the idea of inserting new planar-measured qubits into one-way computations.
We first straightforwardly extend the definition of causal flow (the most restricted type of flow, which forces MBQC to be most `circuit-like') to allow measurements in the $YZ$-plane as well as the standard $XY$-plane\footnote{This extension was also suggested independently by Calum Holker \cite{holkerCausal2023}.}, and then derive the conditions under which $YZ$-measurements can be added to a computation with causal flow.
Next, we derive similar conditions for the insertion of $YZ$-measured qubits into MBQC with the more general gflow and Pauli flow: the algebraic formulation of focused Pauli flow \cite{mitosekAlgebraic2024} allows both cases to be treated simultaneously.
A simple parity change allows those conditions to be adapted to $XZ$-measurements instead.
Finally, we combine $YZ$-insertions with the flow-preserving pivot operation \cite{duncanGraphtheoreticSimplificationQuantum2020,backensThereBackAgain2021,simmonsRelatingMeasurementPatterns2021} (which can change measurement labels) to generalise the vertex splitting rule: this now effectively allows the insertion of pairs of $XY$-measured qubits under certain conditions.
Such insertions of planar-measured vertices instead of Pauli-measured ones are useful if one wishes to stay within the setting of gflow (instead of Pauli flow), which is sometimes desirable.
They also have applications in generating variational ans\"atze in a measurement- or ZX-based setting such as \cite{fergusonMeasurement-based2021,ewenApplication2024}.

In the following, we give the preliminary definitions and existing results in Section~\ref{s:preli}.
$YZ$-insertion for MBQC with causal flow is analysed in Section~\ref{s:causal}, and for Pauli flow in Section~\ref{s:gflow}. Finally, insertion of measurements in other planes is considered in Section~\ref{s:other-labels}, followed by the conclusions in Section~\ref{s:conclusions}.

\section{Preliminaries}\label{s:preli}

We begin by formalising the graph-theoretic notions and related results that will be used later.

\begin{definition}
 A \emph{labelled open graph} is a tuple $(G,I,O,\ld)$ consisting of a simple graph $G=(V,E)$, subsets $I,O\sse V$ called the \emph{input} and \emph{output vertices}, and a function $\ld:\comp{O}\to\{X,Y,Z,XY,XZ,YZ\}$ called the \emph{measurement labelling}.
\end{definition}

Given a labelled open graph, let $\Xlike := \{v\in V\setminus(I\cup O)\mid \ld(v)\in\{XY,X,Y\}\}$ be the set of `$X$-like' internal vertices and $\Zlike := \{v\in V\setminus(I\cup O)\mid \ld(v)\in\{XZ,YZ,Z\}\}$ the set of `$Z$-like' internal vertices. Define $\planar := \{v\in V\setminus(I\cup O)\mid \ld(v)\in\{XY,XZ,YZ\}\}$ to be the set of planar measured internal vertices.

Throughout this paper, we denote by $\symd$ the symmetric difference of two sets, which consists of those elements contained in exactly one of the original sets: \ie $\someset\symd\otherset = (\someset\cup\otherset)\setminus(\someset\cap\otherset)$.
The operation is associative so it straightforwardly extends to a family of sets: the symmetric difference $\symdi{1\leq k\leq n} \someset_k$ contains exactly those elements that appear in an odd number of the constituent sets $\someset_k$.

\begin{definition}
 Let $G = (V,E)$ have vertices $V$ and edges $E$.
 For any vertex $u \in V$, we denote by $N_G(u) := \{v \in V \mid \{u,v\}\in E\}$ the \emph{neighbourhood of $u$} and by $N_G[u] := N_G(u)\symd\{u\}$ the \emph{closed neighbourhood}.
 Similarly, for any subset $\someset\sse V$, we denote by $\odds{G}{\someset} := \symdi{v\in\someset} N_G(v)$ the \emph{odd neighbourhood of $\someset$} and by $\codds{G}{\someset} := \symdi{v\in\someset} N_G[v] = \odds{G}{\someset} \symd \someset$ the \emph{closed odd neighbourhood of $\someset$}.
 The subscript designating the graph may be left out if no confusion is likely to result.
\end{definition}

\begin{definition}
 Let $u\in V$, then the graph $G\star u$ resulting from a \emph{local complementation about $u$} is $G\star u := (V, E \symd \{\{v,w\}\mid v,w\in N_G(u), v\neq w\})$.
 Let $u\in V$ and $v\in N_G(u)$, then the graph $G\wedge uv$ resulting from a \emph{pivot about the edge $\{u,v\}$} is $G\wedge uv := G\star u \star v \star u$, \ie it arises by a sequence of local complementations.
\end{definition}

Labelling the pivot by the edge $\{u,v\}$ is well-defined as $G\star u \star v \star u = G \star v \star u \star v$~\cite{bouchetGraphic1988}.

\subsection{Flow properties}

Having established the relevant graph-theoretic concepts, we are now ready to define flow properties.

\begin{definition}[Causal flow {\cite[Definition~2]{danosDeterminismOnewayModel2006}}]\label{def:causal-flow}
 A \emph{causal flow} for a \LOG\ $\Gamma = (G,I,O,\lambda)$ is a pair $(c,\prec)$ where $c : \overline{O} \rightarrow \overline{I}$ and $\prec$ is a strict partial order, such that for all $u\in V$:

 ${\scriptstyle\bullet}$ $c(u)\in N(u)$ \hfill ${\scriptstyle\bullet}$ $u \prec c(u)$ \hfill ${\scriptstyle\bullet}$ $\forall v \in N(c(u)), u \neq v \Rightarrow u \prec v$. \hfill\;
\end{definition}

The presence of causal flow implies that a one-way computation can be implemented robustly deterministically, but it is not necessary \cite{browneGeneralizedFlowDeterminism2007}.
For labelled open graphs (\ie computations where no partial order is specified \textit{a priori}), the following property is both necessary and sufficient \cite{browneGeneralizedFlowDeterminism2007,mhallaCharacterisingDeterminismMBQCs2022}.

\begin{definition}[Pauli flow {\cite[Definition~5]{browneGeneralizedFlowDeterminism2007}}]\label{def:Pauli-flow}
    A \emph{Pauli flow} for a labelled open graph $(G,I,O,\lambda)$ is a pair $(c,\prec)$, where $c:\comp{O}\to\pow{\comp{I}}$ and $\prec$ is a strict partial order on $\comp{O}$, such that for all $u \in \comp{O}$:
    \begin{multicols}{2}
    \begin{enumerate}
        \item[{\crtcrossreflabel{(P1)}[P1]}] $\forall v \in c(u) . u \ne v \wedge \lambda(v) \notin \{ X, Y \} \Rightarrow u \prec v$
        \item[{\crtcrossreflabel{(P2)}[P2]}] $\forall v \in \odd{c(u)} . u \ne v \wedge \lambda(v) \notin \{ Y, Z \} \Rightarrow u \prec v$
        \item[{\crtcrossreflabel{(P3)}[P3]}] $\forall v \in \comp{O} . \neg (u \prec v) \wedge u \ne v \wedge \lambda(v) = Y \Rightarrow v\notin\codd{c(u)}$
        \item[{\crtcrossreflabel{(P4)}[P4]}] $\lambda(u) = XY \Rightarrow u \notin c(u) \wedge u \in \odd{c(u)}$
        \item[{\crtcrossreflabel{(P5)}[P5]}] $\lambda(u) = XZ \Rightarrow u \in c(u) \wedge u \in \odd{c(u)}$
        \item[{\crtcrossreflabel{(P6)}[P6]}] $\lambda(u) = YZ \Rightarrow u \in c(u) \wedge u \notin \odd{c(u)}$
        \item[{\crtcrossreflabel{(P7)}[P7]}] $\lambda(u) = X \Rightarrow u \in \odd{c(u)}$
        \item[{\crtcrossreflabel{(P8)}[P8]}] $\lambda(u) = Z \Rightarrow u \in c(u)$
        \item[{\crtcrossreflabel{(P9)}[P9]}] $\lambda(u) = Y \Rightarrow u \in \codd{c(u)}$.
    \end{enumerate}
    \end{multicols}
    The sets $c(v)$ for $v \in \comp{O}$ are called the \textit{correction sets} and $c$ is called the \emph{correction function}.
\end{definition}

We will sometimes use the term \emph{correction function} to refer to any function $c:\comp{O}\to\pow{\comp{I}}$ on some given \LOG\ $(G,I,O,\ld)$ which satisfies properties \ref{P4}--\ref{P9}.

A Pauli flow on a \LOG\ in which all measurements are planar is called a gflow\footnote{In the literature, the term `gflow' is sometimes used to refer specifically to a Pauli flow in which all measurement labels are $XY$, in which case the property allowing $XZ$ and $YZ$ measurements may be called `extended gflow'.} and has been widely studied in its own right.
Both gflow and Pauli flow are not generally unique; thus a more restricted version satisfying the following `focusing conditions' is often useful to work with.
Focusing was defined first for $XY$-only gflow \cite[Definition~3.1]{mhallaWhichGraphStates2014} before being extended \cite[Proposition~3.14]{backensThereBackAgain2021}.

\begin{definition}[{\cite[Definition~4.3]{simmonsRelatingMeasurementPatterns2021}}]\label{def:focused}
 Given $(G,I,O,\ld)$, a set $\someset\sse\comp{I}$ is \emph{focused over $S\sse\comp{O}$} if:
 \begin{multicols}{2}
 \begin{enumerate}
  \item[{\crtcrossreflabel{(F1)}[F1]}] $\forall w \in S \cap \someset. \lambda(w) \in \{ XY, X, Y \}$,
  \item[{\crtcrossreflabel{(F2)}[F2]}] $\forall w \in S \cap \odd{\someset} . \lambda(w) \in \{ XZ, YZ, Y, Z \}$,
  \item[{\crtcrossreflabel{(F3)}[F3]}] $\forall w \in S . \lambda(w) = Y \Rightarrow w \notin \codd{\someset}$.
 \end{enumerate}
 \end{multicols}
 A \emph{focused set} $\someset$ for $\Gamma$ is focused over $\comp{O}$.
 A correction function $c$ is \emph{focused} if $c(v)$ is focused over $\comp{O}\setminus\{v\}$ for all $v\in\comp{O}$.
 A Pauli flow is called \emph{focused} if its correction function is focused.
\end{definition}

\begin{lemma}[{\cite[Lemma~4.6]{simmonsRelatingMeasurementPatterns2021}}]\label{lem:focused-Pauli-flow}
    For any labelled open graph, if a Pauli flow exists, then there also exists a focused Pauli flow.
\end{lemma}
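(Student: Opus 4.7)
The strategy is to take an arbitrary Pauli flow $(c, \prec)$ on the \LOG\ $(G, I, O, \ld)$ and iteratively modify $c$, keeping $\prec$ fixed, until every correction set is focused over $\comp{O}\setminus\{v\}$. A key preliminary observation motivates the construction: conditions \ref{P1}--\ref{P3} already imply the focusing conditions \ref{F1}--\ref{F3} for every $w$ that is \emph{not} strictly later than $v$. For example, if $w \in c(v)$ with $w \neq v$ and $\ld(w) \in \{XZ, YZ, Z\}$, then \ref{P1} forces $v \prec w$; an analogous statement for $\odd{c(v)}$ follows from \ref{P2}, and \ref{P3} handles the $Y$-labelled case. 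Consequently, any violation of focusing for $c(v)$ occurs only at some $w$ with $v \prec w$, which suggests an inductive construction working backwards along $\prec$.

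Process the vertices in some linear extension of the reverse of $\prec$. When it is $v$'s turn, assume $c(w)$ is already focused for every $w$ with $v \prec w$. If $c(v)$ has a focusing violation at such a $w$, I would replace $c(v)$ by $c(v) \symd c(w)$. Conditions \ref{P4}--\ref{P9} guarantee that $c(w)$ interacts with $w$ in exactly the way needed to neutralise the violation: if $\ld(w) = YZ$, then by \ref{P6} we have $w \in c(w)$ but $w \notin \odd{c(w)}$, so the XOR removes $w$ from the correction set (repairing an \ref{F1} violation) while leaving its odd-neighbourhood status intact. Parallel case analyses cover $\ld(w) \in \{XZ, Z\}$ for \ref{F1} violations, $\ld(w) \in \{XY, X\}$ for \ref{F2} violations, and $\ld(w) = Y$ for \ref{F3} violations, in each case reading off from the relevant axiom in \ref{P4}--\ref{P9} that $w$ belongs to exactly the right combination of $c(w)$, $\odd{c(w)}$, $\codd{c(w)}$.

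Because $v \prec w$ strictly, XOR-ing $c(w)$ into $c(v)$ acts as a ``strictly upper-triangular'' row operation with respect to $\prec$. This preserves \ref{P1}--\ref{P3} (any newly-introduced membership at $u \neq w$ was either already present in $c(v)$ or inherited from $c(w)$, and in the latter case $v \prec w \prec u$ or $v \prec w = u$ transfers the required ordering), and leaves the self-interaction at $v$ untouched, so that \ref{P4}--\ref{P9} at $v$ are unaffected. Hence $(c', \prec)$ remains a Pauli flow. Termination follows because each step strictly reduces the set of violating pairs $(v, w)$ under a well-ordering that is lexicographic on $v$ (w.r.t.\ the linear extension) and then on $w$; once no violations remain, $c$ is focused.

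The main obstacle is the case analysis: in delicate cases such as $\ld(w) = XZ$ (where \ref{P5} gives both $w \in c(w)$ and $w \in \odd{c(w)}$), a single XOR with $c(w)$ toggles both memberships of $w$ simultaneously, so one must verify that the resulting configuration is consistent with \emph{all} three focusing conditions — and \ref{F3} for $Y$-labelled $w$ must be tracked carefully since it mixes $c(\cdot)$ and $\odd{c(\cdot)}$. I expect each case to reduce to a direct application of the relevant axiom from \ref{P4}--\ref{P9}, but the bookkeeping across every label-violation combination is the bulk of the technical effort.
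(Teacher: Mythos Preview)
The paper does not give its own proof of this lemma; it is quoted from \cite[Lemma~4.6]{simmonsRelatingMeasurementPatterns2021} and used as a black box. Your sketch is the standard focusing construction used in that reference—process non-outputs from $\prec$-maximal downward and, for each focusing violation of $c(v)$ at some $w\succ v$, replace $c(v)$ by $c(v)\symd c(w)$—and it is correct; the one point to make explicit is that ``self-interaction at $v$ untouched'' relies not only on $c(w)$ being focused but also on \ref{P1}--\ref{P3} still holding for $c(w)$ (which your invariant guarantees), since focusing alone does not forbid, e.g., $v\in c(w)$ when $\ld(v)=XY$.
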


\subsection{Algebraic Pauli flow}

A major advantage of focused Pauli flow is that it is amenable to an algebraic characterisation, replacing conditions \ref{P1}--\ref{P9} and \ref{F1}--\ref{F3} by two conditions on certain matrices derived from the \LOG.
First, we define an `extended adjacency matrix' to be the adjacency matrix of the graph that results from some \LOG\ $(G,I,O,\ld)$ by adding self-loops to all vertices measured $Y$ or $XZ$.
This convention will make the following definitions simpler.

\begin{definition}[{\cite[Definition~3.16]{mitosekAlgebraic2024}}]
  Let $\Gamma=(G,I,O,\ld)$ be a labelled open graph.
    Then the \emph{extended adjacency matrix} $\A$ of $\Gamma$ satisfies $\A_{v,w} = 1$ if and only if either $\{v,w\}\in E$ or $v=w \wedge \ld(v)\in\{Y,XZ\}$.
\end{definition}

\begin{definition}[{\cite[Definition~3.4]{mitosekAlgebraic2024}}]\label{def:flow-demand}
    Let $\Gamma = (G,I,O,\lambda)$ be a labelled open graph with extended adjacency matrix $\A$.
    The \emph{flow-demand matrix} $M_{\Gamma}$ is the $(n-n_O) \times (n-n_I)$ matrix with rows corresponding to non-outputs and columns corresponding to non-inputs, whose $v$-labelled row satisfies the following properties:
    if $\lambda(v) \in \{ X, Y, XY \}$, then $M_{v,w} = \A_{v,w}$ for all $w\in\comp{I}$, and
    if $\lambda(v) \in \{Z, YZ, XZ\}$, then $M_{v,v} = 1$ and $M_{v,w} = 0$ for all $w\in\comp{I}\setminus\{v\}$.
\end{definition}

\begin{definition}[{\cite[Definition~3.5]{mitosekAlgebraic2024}}]\label{def:order-demand}
     Let $\Gamma = (G,I,O,\lambda)$ be a labelled open graph with extended adjacency matrix $\A$.
     The \emph{order-demand matrix} $N_{\Gamma}$ is the $(n-n_O) \times (n-n_I)$ matrix with rows corresponding to non-outputs and columns corresponding to non-inputs, whose $v$-labelled row satisfies the following properties:
     if $\lambda(v) \in \{ X, Y, Z \}$, then $N_{v,w} = 0$ for all $w\in\comp{I}$;
     if $\lambda(v) \in \{ XZ, YZ \}$, then $N_{v,w} = \A_{v,w}$ for all $w\in\comp{I}$; and
     if $\lambda(v) = XY$, then $N_{v,v} = 1$ (provided that the $v$ column exists) and $N_{v,w} = 0$ for all $w\in\comp{I}\setminus\{v\}$.
\end{definition}

If we represent a set of vertices $\someset\sse\comp{I}$ as a column vector, which contains a 1 in the row labelled $v$ if and only if $v\in\someset$, and multiply this vector by the flow-demand matrix, then the result encodes specific relationships between each non-output vertex $u$ and the set $\someset$ that depend on the measurement label $\ld(u)$ of the vertex under consideration.

\begin{lemma}[{\cite[Lemma~3.9]{mitosekAlgebraic2024}}]\label{lem:row-by-col-meaning}
    Let $(G,I,O,\lambda)$ be a labelled open graph.
    Let $\someset \subseteq \comp{I}$ and let $\v{a}$ be the indicator vector for this set.
    Let $u \in \comp{O}$.
    Then the product of the $u$-th row of $M$ with the indicator vector of $\someset$ satisfies $M_{u,*}\v{a} = \left(M\v{a}\right)_u = 1$ if and only if:
    $\lambda(u) \in \{ X, XY \}$ and $u \in \odd{\someset}$, or
    $\lambda(u) \in \{ XZ, YZ, Z \}$ and $u \in \someset$, or
    $\lambda(u) = Y$ and $u \in \codd{\someset}$.
\end{lemma}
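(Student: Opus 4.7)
The plan is to prove the statement by a direct case analysis on the measurement label $\lambda(u)$, using the definition of the flow-demand matrix in terms of the extended adjacency matrix $\A$. Since $\v{a}$ is the indicator vector of $\someset$, the arithmetic mod 2 expression $M_{u,*}\v{a} = \sum_{w\in\comp{I}} M_{u,w}\, \ind{w\in\someset}{}$ counts, modulo two, the number of columns $w\in\comp{I}\cap\someset$ for which $M_{u,w}=1$. The crucial observation to use throughout is that $\someset\sse\comp{I}$, so restricting the sum to the columns indexed by $\comp{I}$ does not miss any element of $\someset$.

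First I would handle $\lambda(u)\in\{X, XY\}$. By Definition~\ref{def:flow-demand}, $M_{u,w}=\A_{u,w}$ for all $w\in\comp{I}$, and because $\lambda(u)\notin\{Y,XZ\}$ the extended adjacency matrix has no self-loop at $u$, so $\A_{u,w}=1$ iff $w\in N(u)$. Hence $M_{u,*}\v{a}$ is the parity of $|N(u)\cap\someset|$, which equals $1$ exactly when $u\in\odd\someset$. Next, for $\lambda(u)\in\{Z,XZ,YZ\}$, the definition gives $M_{u,u}=1$ (when the column exists, \ie when $u\in\comp{I}$) and $M_{u,w}=0$ otherwise; thus $M_{u,*}\v{a}=\ind{u\in\someset}{}$, which equals $1$ exactly when $u\in\someset$. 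If instead $u\in I$, then no $u$-column exists and the row is identically zero, but $u\notin\someset$ in that case, so both sides of the claimed equivalence are still $0$.

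The $\lambda(u)=Y$ case is the only one that requires a little care, since the extended adjacency matrix now has a self-loop at $u$. If $u\in\comp{I}$, then $M_{u,u}=\A_{u,u}=1$, and $M_{u,w}=\A_{u,w}=\ind{w\in N(u)}{}$ for $w\neq u$. The sum therefore evaluates to $|N(u)\cap\someset| + \ind{u\in\someset}{}$ mod $2$, which equals $1$ exactly when $u\in\odd\someset\symd\someset=\codd\someset$. If $u\in I$, the self-loop contribution is absent from the row, but $u\notin\someset$, so $u\in\codd\someset$ reduces to $u\in\odd\someset$, which matches the resulting parity of $|N(u)\cap\someset|$.

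The argument is essentially a bookkeeping exercise; the only potential obstacle is being consistent about whether $u$ is an input, which affects whether a $u$-column exists and whether the self-loop in $\A$ contributes to the row-column product. In every case the asymmetry disappears because $\someset\sse\comp{I}$ forces $u\notin\someset$ whenever $u\in I$, reconciling the closed and open odd-neighbourhood conditions.
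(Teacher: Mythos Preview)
Your proof is correct: the case split on $\lambda(u)$ together with the definition of $M$ via the extended adjacency matrix $\A$ is exactly the right approach, and you handle the boundary case $u\in I$ (where the $u$-column is absent but $u\notin\someset$) cleanly. The paper itself does not prove this lemma but cites it from \cite{mitosekAlgebraic2024}; your argument is essentially the canonical one and matches what that reference does.
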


In addition to the flow-demand and order-demand matrix, we now also encode the correction function as a matrix by assembling the column indicator vectors for each correction set.
That definition then enables the statement of the algebraic formulation of Pauli flow theorem.

\begin{definition}[{\cite[Definition~3.6]{mitosekAlgebraic2024}}]\label{def:correction-matrix}
    Let $\Gamma = (G,I,O,\lambda)$ be a labelled open graph and let $c$ be a correction function on $\Gamma$.
    The \textit{correction matrix $C$} encoding the function $c$ is the $(n-n_I) \times (n-n_O)$ matrix
    where $C_{u,v} = 1$ if and only if $u \in c(v)$.
    In other words, the $v$-labelled column of $C$ encodes the set $c(v)$.
\end{definition}

\begin{theorem}[Algebraic formulation of Pauli flow {\cite[Theorem~3.1]{mitosekAlgebraic2024}}]\label{thm:algebraic-Pauli}
 Let $\Gamma = (G,I,O,\lambda)$ be a labelled open graph, $c$ a correction function on $\Gamma$, $M$ the flow-demand matrix of $\Gamma$, and $N$ the order-demand matrix of $\Gamma$.
 Then there exists a strict partial order $\prec$ such that $(c,\prec)$ is a focused Pauli flow on $\Gamma$ if and only if
 $M_{\Gamma}C = \I_{\comp{O}}$ and
 $N_{\Gamma}C$ is the adjacency matrix of a directed acyclic graph,
 where $C$ is the correction matrix for $c$.
\end{theorem}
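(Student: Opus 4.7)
The plan is to prove each direction of the equivalence separately, using Lemma~\ref{lem:row-by-col-meaning} and an analogous unfolding for $N_\Gamma C$ to convert matrix entries into the set-theoretic conditions that appear in the Pauli-flow axioms.

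First I would unfold the products pointwise. Fixing $u,v\in\comp{O}$, Lemma~\ref{lem:row-by-col-meaning} identifies $(M_\Gamma C)_{u,v}$ with a membership condition on $c(v)$ whose form depends on $\ld(u)$. A direct calculation on the $u$-th row of $N_\Gamma$, mindful of the self-loops at $Y$- and $XZ$-labelled vertices in the extended adjacency matrix, gives that $(N_\Gamma C)_{u,v}$ is the indicator of $u\in c(v)$ when $\ld(u)=XY$, of $u\in\codd{c(v)}$ when $\ld(u)=XZ$, of $u\in\odd{c(v)}$ when $\ld(u)=YZ$, and is $0$ whenever $\ld(u)$ is a Pauli label. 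Both halves of the theorem then become a matching exercise between these entries and the axioms of focused Pauli flow, carried out by case analysis on $\ld(u)$ (or on the label of the quantified vertex, in the quantified axioms).

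For the forward direction, I would assume $(c,\prec)$ is a focused Pauli flow. The diagonal condition $(M_\Gamma C)_{v,v}=1$ is, label by label, exactly the content of whichever of P4--P9 applies to $\ld(v)$. For $u\neq v$, a hypothetical off-diagonal $(M_\Gamma C)_{u,v}=1$ would violate precisely one of F1--F3 applied to $c(v)$ (for example, $\ld(u)\in\{XZ,YZ,Z\}$ with $u\in c(v)$ contradicts F1), so $M_\Gamma C = I_{\comp{O}}$. To see that $N_\Gamma C$ is a DAG, I would check that an off-diagonal $1$ at $(u,v)$ forces $v\prec u$: using F1 to rule out $u\in c(v)$ in the $XZ$ case, each of the three nontrivial label cases is supplied by P1 (when $\ld(u)=XY$) or P2 (when $\ld(u)\in\{XZ,YZ\}$). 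Self-loops in $N_\Gamma C$ are excluded by P4--P6. Since $\prec$ is a strict partial order, this subrelation is acyclic.

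For the backward direction, I would assume $M_\Gamma C=I_{\comp{O}}$ and that $N_\Gamma C$ is a DAG. Reading the diagonal and off-diagonal of $M_\Gamma C$ through the unfolding above recovers P4--P9 and F1--F3 respectively. I would then take $\prec$ to be any strict partial order extending the acyclic relation $\{(v,u)\mid u\neq v,\ (N_\Gamma C)_{u,v}=1\}$, which exists by hypothesis. To verify P1--P3 I would split on the label of the quantified vertex: cases incompatible with the focusing conditions already established make the premise of the axiom vacuous, and the remaining cases are exactly those producing a $1$ in $N_\Gamma C$, so the required order relation holds by construction of $\prec$; P3 reduces straight to F3 with no order argument. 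The main obstacle throughout is bookkeeping: six labels, four entry classes (diagonal/off-diagonal of each product), and the self-loop convention in the extended adjacency matrix all interact, and one must track carefully which axiom is discharged by which entry so that no case is missed or double-counted.
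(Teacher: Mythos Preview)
The paper does not prove this theorem: it is quoted verbatim from \cite{mitosekAlgebraic2024} in the preliminaries, so there is no in-paper proof to compare against. That said, your argument is correct and is essentially the proof one would reconstruct from the surrounding imported machinery: your pointwise unfolding of $M_\Gamma C$ is exactly Lemma~\ref{lem:row-by-col-meaning}, and your unfolding of $N_\Gamma C$ (with the $XZ$ case yielding $u\in\codd{c(v)}$ via the self-loop convention) matches Lemma~\ref{lem:NC-induced-order} and Observation~\ref{obs:NC-interpretation}. The forward direction correctly reads the diagonal of $M_\Gamma C$ as \ref{P4}--\ref{P9}, the off-diagonal as \ref{F1}--\ref{F3}, and then uses focusing to reduce the $XZ$ row of $N_\Gamma C$ to an odd-neighbourhood condition so that every off-diagonal $1$ is witnessed by \ref{P1} or \ref{P2}; the diagonal of $N_\Gamma C$ vanishes by \ref{P4}--\ref{P6} as you say. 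The backward direction is also sound: taking $\prec$ to be the transitive closure of $\{(v,u)\mid u\neq v,\ (N_\Gamma C)_{u,v}=1\}$ works because an acyclic relation has an irreflexive transitive closure, and your case split on the label of the quantified vertex in \ref{P1}--\ref{P3} is complete (the focusing conditions already recovered from $M_\Gamma C$ make the ``bad'' label cases vacuous, and \ref{P3} is indeed just \ref{F3}). Nothing is missing.
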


\subsection{Further properties of Pauli flow}
\label{s:further-properties}

We will not need the following definition and lemma individually, but together they yield a useful observation that will simplify handling changes to the partial order of a Pauli flow that result from vertex insertions in later sections.

\begin{definition}[{\cite[Definition~2.10]{mitosekAlgebraic2024}}]\label{def:trl_c}
    Let $\Gamma = (G,I,O,\lambda)$ be a labelled open graph and let $c$ be a correction function on $\Gamma$.
    The \emph{induced relation} $\trl_c$ is the minimal relation on $\comp{O}$ implied by \ref{P1}, \ref{P2}, \ref{P3}.
    That is, $u \trl_c v$ if and only if at least one of the following holds:
    $v \in c(u) \wedge u \ne v \wedge \lambda(v) \notin \{ X, Y \}$ (corresponding to \ref{P1}), $v \in \odd{c(u)} \wedge u \ne v \wedge \lambda(v) \notin \{ Y, Z \}$ (corresponding to \ref{P2}), and $v \in \codd{c(u)} \wedge u \ne v \wedge \lambda(v) = Y$ (corresponding to \ref{P3}).
    If the transitive closure of $\trl_c$ is a partial order, denote it by $\prec_c$ and call it the \emph{induced partial order of $c$}.
\end{definition}

\begin{lemma}[{\cite[Lemma~3.12]{mitosekAlgebraic2024}}]\label{lem:NC-induced-order}
    Let $\Gamma = (G,I,O,\lambda)$ be a labelled open graph and let $c$ be a focused correction function on $\Gamma$.
    Then, for any $u, v \in \comp{O}$ such that $u \ne v$, we have $(NC)_{u,v} = 1$ if and only if $v \trl_c u$.
    In other words, the off-diagonal part of the matrix $NC$ encodes the relation $\trl_c$.
\end{lemma}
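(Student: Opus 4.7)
The plan is to analyse both sides of the equivalence by cases on $\lambda(u)$, using $u\neq v$ throughout, and then reconcile the two via the focusing conditions \ref{F1}--\ref{F3}. All arguments are by direct inspection; the only real difficulty is keeping the bookkeeping straight, and the conceptual takeaway is that focusing removes exactly the ``extra'' triggers that would otherwise spoil the equivalence between the off-diagonal part of $NC$ and $\trl_c$.

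First, I would read $(NC)_{u,v}$ off from Definitions~\ref{def:order-demand} and~\ref{def:correction-matrix}: since $C_{*,v}$ is the indicator vector of $c(v)$, the product $(NC)_{u,v} = N_{u,*}\cdot C_{*,v}$ admits a direct analogue of Lemma~\ref{lem:row-by-col-meaning}, but with $N$ in place of $M$. Unpacking the three row-types of $N$, the outcome is: $(NC)_{u,v} = 0$ whenever $\lambda(u)\in\{X,Y,Z\}$ (the $u$-th row of $N$ is zero); $(NC)_{u,v}=1 \iff u\in c(v)$ when $\lambda(u)=XY$; $(NC)_{u,v}=1 \iff u\in\odd{c(v)}$ when $\lambda(u)=YZ$; and $(NC)_{u,v}=1 \iff u\in\codd{c(v)}$ when $\lambda(u)=XZ$, the closed odd neighbourhood appearing because $\A$ has a self-loop at $u$ in that case.

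Next, I would enumerate the triggers of $v \trl_c u$ from Definition~\ref{def:trl_c}. Clause \ref{P1} fires exactly when $u\in c(v)$ and $\lambda(u)\notin\{X,Y\}$; clause \ref{P2} fires when $u\in\odd{c(v)}$ and $\lambda(u)\notin\{Y,Z\}$; and clause \ref{P3} fires when $u\in\codd{c(v)}$ and $\lambda(u)=Y$. For each fixed label $\lambda(u)$ this yields a short disjunction characterising $v\trl_c u$.

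The remaining step is to match the two case analyses, using the focusing conditions for $c(v)$ (which hold over $\comp{O}\setminus\{v\}$ and therefore apply to $u$). For $\lambda(u)=XY$, \ref{F2} forbids $u\in\odd{c(v)}$, leaving only the trigger $u\in c(v)$. For $\lambda(u)\in\{XZ,YZ\}$, \ref{F1} forbids $u\in c(v)$, leaving only the trigger $u\in\odd{c(v)}$; in the $XZ$ case, the absence of $u$ from $c(v)$ additionally collapses $\codd{c(v)}$ to $\odd{c(v)}$ at $u$. For $\lambda(u)=X$, the sole candidate trigger $u\in\odd{c(v)}$ is forbidden by \ref{F2}; for $\lambda(u)=Y$, the sole candidate $u\in\codd{c(v)}$ is forbidden by \ref{F3}; and for $\lambda(u)=Z$, the sole candidate $u\in c(v)$ is forbidden by \ref{F1}. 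In every case the resulting condition for $v\trl_c u$ matches the condition for $(NC)_{u,v}=1$ derived above. The main ``obstacle'' is simply organising the six-label case split; there is no deeper technical hurdle.
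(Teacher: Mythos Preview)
Your proof is correct. The paper does not prove this lemma itself but cites it from \cite{mitosekAlgebraic2024}; your six-label case split, using the row structure of $N$ to read off $(NC)_{u,v}$ and then invoking \ref{F1}--\ref{F3} on $c(v)$ (which is focused over $\comp{O}\setminus\{v\}\ni u$) to eliminate the redundant triggers in Definition~\ref{def:trl_c}, is exactly the intended argument and matches the derivation implicit in Observation~\ref{obs:NC-interpretation}.
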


Combining Definition~\ref{def:trl_c} and Lemma~\ref{lem:NC-induced-order}, we find the following.

\begin{observation}\label{obs:NC-interpretation}
 Let $\Gamma = (G,I,O,\lambda)$ be a labelled open graph and let $C$ be the matrix of a focused correction function $c$ on $\Gamma$.
 Then, for any $u, v \in \comp{O}$ such that $u \ne v$, we have $(NC)_{u,v} = 1$ if and only if at least one of the following holds:

 ${\scriptstyle\bullet}$ $v \in c(u) \wedge \lambda(v) \notin \{ X, Y \}$, \hfill
 ${\scriptstyle\bullet}$ $v \in \odd{c(u)} \wedge \lambda(v) \notin \{ Y, Z \}$, \hfill
 ${\scriptstyle\bullet}$ $v \in \codd{c(u)} \wedge \lambda(v) = Y$. \hfill\;

 \noindent Using the property that $c$ is focused (\ie Definition~\ref{def:focused} holds), the above conditions reduce to $v\in c(u)$ and $\ld(v)=XY$, or $v\in\odd{c(u)}$ and $\ld(v)\in\{XZ,YZ\}$.
\end{observation}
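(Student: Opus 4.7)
The plan is to chain together Lemma~\ref{lem:NC-induced-order} and Definition~\ref{def:trl_c} for the first half, and then apply the focusing conditions from Definition~\ref{def:focused} component-by-component to obtain the promised simplification. No new calculation is really needed; the work is just careful bookkeeping of which label restrictions apply to which clause.

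For the first half, I would begin by invoking Lemma~\ref{lem:NC-induced-order}: since $c$ is focused and $u\neq v$, we have that $(NC)_{u,v}=1$ iff the corresponding pair lies in $\trl_c$. I would then substitute the definition of $\trl_c$ from Definition~\ref{def:trl_c}, which is precisely the disjunction of the three bulleted conditions stated in the observation (the hypothesis $u\ne v$ absorbs the $u\ne v$ conjunct appearing in each clause of Definition~\ref{def:trl_c}). This already yields the initial three-bullet characterisation.

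For the simplification in the second half, the key point is that $c$ focused means each $c(u)$ is focused over $\comp{O}\setminus\{u\}$ in the sense of Definition~\ref{def:focused}. Because the statement assumes $u\neq v$ and $v\in\comp{O}$, we have $v\in\comp{O}\setminus\{u\}$, so each of \ref{F1}, \ref{F2}, \ref{F3} applies to $v$. I would then process the three bullets in turn: bullet one says $v\in c(u)$ and $\lambda(v)\notin\{X,Y\}$, and by \ref{F1} we also have $\lambda(v)\in\{XY,X,Y\}$; intersecting leaves exactly $\lambda(v)=XY$. Bullet two says $v\in\odd{c(u)}$ and $\lambda(v)\notin\{Y,Z\}$, and by \ref{F2} we also have $\lambda(v)\in\{XZ,YZ,Y,Z\}$; intersecting leaves $\lambda(v)\in\{XZ,YZ\}$. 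Bullet three asks for $v\in\codd{c(u)}$ and $\lambda(v)=Y$, but \ref{F3} says $\lambda(v)=Y$ forces $v\notin\codd{c(u)}$, so this clause is vacuous and can be dropped.

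The only mild pitfall I anticipate is making sure the focusing set matches: one needs to confirm that whenever $v\in\comp{O}$ and $v\neq u$, the focusing of $c$ (over $\comp{O}\setminus\{u\}$ for the correction set $c(u)$) really does constrain $\lambda(v)$. That is essentially bookkeeping, so the whole argument should compress to a short paragraph in the paper.
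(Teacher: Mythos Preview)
Your proposal is correct and mirrors the paper's approach exactly: the paper simply states that the observation follows by ``combining Definition~\ref{def:trl_c} and Lemma~\ref{lem:NC-induced-order}'', and your two-step plan (chain the lemma with the definition, then intersect each bullet with the corresponding focusing condition \ref{F1}--\ref{F3} applied at $v\in\comp{O}\setminus\{u\}$) is precisely the intended unpacking of that sentence.
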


The induced partial order of Definition~\ref{def:trl_c} plays an important role: if a correction function $c$ forms a Pauli flow with some partial order $\prec$, then ${\prec_c} \sse {\prec}$, \ie any Pauli flow partial order must contain the induced partial order \cite[Lemma~2.11]{mitosekAlgebraic2024}.
Moreover, there exists a partial order $\prec$ such that $(c,\prec)$ is a Pauli flow if and only if $(c, \prec_c)$ is a Pauli flow \cite[Theorem~2.12]{mitosekAlgebraic2024}.
Therefore, in Sections~\ref{s:gflow} and~\ref{s:other-labels}, we will generally work with induced partial orders as that simplifies the proofs.
By the above results, this is without loss of generality.
One advantage of the induced partial order for a \emph{focused} correction function is that all Pauli measurements are initial, which can be seen e.g.\ from Observation~\ref{obs:NC-interpretation}.

While in this paper we will consider the conditions under which inserting planar measured vertices into one-way computations preserves the existence of flow, for Pauli-$Z$ measured vertices, this question has already been resolved: such vertices can be inserted with any set of neighbours.

\begin{proposition}[$Z$-insertion {\cite[Proposition~4.1]{mcelvanneyCompleteFlowPreservingRewrite2023}}]\label{prop:Z-insertion}
 Let $\Gamma = (G, I, O,\ld)$ be a labelled open graph with Pauli flow, where $G=(V,E)$ and let $W\sse V$ be some arbitrary subset of the vertices.
 Then $\Gamma' = (G', I, O, \ld')$ has Pauli flow, where $G'=(V',E')$ with $V'= V\cup\{z\}$, $E' = E \cup \{\{w,z\}\mid w\in W\}$, and $\ld'(v) = \ld(v)$ for $v\neq z$ and $\ld'(z)=Z$.
\end{proposition}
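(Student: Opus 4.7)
The plan is to construct an explicit Pauli flow on $\Gamma'$ by extending the given flow $(c,\prec)$ on $\Gamma$ in the most natural way possible: we set $c'(z) := \{z\}$, keep $c'(u) := c(u)$ for every $u \in \comp{O}$, and define $\prec'$ to extend $\prec$ by making $z$ the unique new minimum element, \ie $z \prec' v$ for all $v \in \comp{O'} \setminus \{z\}$ and $\prec'$ agrees with $\prec$ on $\comp{O}$. That $\prec'$ is a strict partial order is immediate since $z$ is a fresh element placed below everything. Also $c(u) \sse \comp{I} = \comp{I'}$, so the codomain condition on $c'$ is satisfied.

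Next I would verify the nine Pauli flow conditions, splitting into the cases $u = z$ and $u \neq z$. For $u = z$: since $\lambda(z) = Z$, only \ref{P1}, \ref{P2}, \ref{P3}, and \ref{P8} are relevant. Condition \ref{P8} is immediate from $z \in c'(z)$. Condition \ref{P1} is vacuous because $c'(z) = \{z\}$ and the premise requires $u \ne v$. Condition \ref{P2} requires that every $v \in \odd{c'(z)} = N_{G'}(z) = W$ with $\lambda(v) \notin \{Y,Z\}$ satisfies $z \prec' v$; this holds by construction since $z$ is below every other non-output. Condition \ref{P3} is likewise satisfied because $\neg(z \prec' v)$ fails for all $v \neq z$.

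For $u \neq z$, the only data that change when passing from $G$ to $G'$ are the neighbourhoods involving $z$: explicitly, $\odd{c(u)}_{G'} = \odd{c(u)}_G \symd \{z\}$ when $|c(u) \cap W|$ is odd, and equals $\odd{c(u)}_G$ otherwise, with the analogous statement for $\codd{c(u)}$. The key observation is that the only way a condition could fail in $G'$ while having held in $G$ is if the element $z$ newly appears in $\odd{c(u)}_{G'}$ or $\codd{c(u)}_{G'}$ and triggers a violation. But $\lambda(z) = Z$: the precondition of \ref{P2} excludes labels in $\{Y,Z\}$ on $v$, and \ref{P3} requires $\lambda(v) = Y$, so in both cases $v = z$ is ruled out. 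Moreover, conditions \ref{P4}--\ref{P9} depend only on whether $u$ itself belongs to $c(u)$, $\odd{c(u)}$, or $\codd{c(u)}$; since $u \neq z$, these memberships are unchanged. Finally, \ref{P1} is unaffected because $c'(u) = c(u)$ and no new element is added. Thus all nine conditions continue to hold for every $u \neq z$ under $\prec'$ (which agrees with $\prec$ on the original non-outputs).

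I expect no real obstacle: the proof reduces to the simple parity bookkeeping above, and the crucial coincidence is that the label $Z$ of the inserted vertex is precisely the label that is excluded from the hypotheses of \ref{P2} and is incompatible with the Y-label hypothesis of \ref{P3}. This is exactly what makes $Z$-insertion unconditionally flow-preserving, in contrast to the conditional nature of the $YZ$- and $XZ$-insertions analysed in later sections.
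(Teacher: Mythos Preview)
The paper does not supply its own proof of this proposition: it is quoted as a known result from \cite{mcelvanneyCompleteFlowPreservingRewrite2023} and used as a black box (see the sentence following Proposition~\ref{prop:Z-insertion} and the remark in Section~\ref{s:other-labels} that ``it was already known that $Z$-insertion preserves the existence of Pauli flow''). There is therefore nothing in the paper to compare your argument against directly.

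That said, your proof is correct and is the standard direct verification. The choice $c'(z)=\{z\}$ together with making $z$ $\prec'$-minimal is exactly what one expects, and your case analysis is sound: for $u=z$ conditions \ref{P1}--\ref{P3} are vacuous or immediate, and for $u\neq z$ the only possible new element in $\odd{c(u)}$ or $\codd{c(u)}$ is $z$ itself, whose label $Z$ is excluded by the hypotheses of \ref{P2} and \ref{P3}. One very minor point: you might mention explicitly that for $v\neq z$ the membership $v\in\odd{c(u)}$ is unchanged when passing from $G$ to $G'$ (since the only new edges are incident to $z$), which is what makes \ref{P4}--\ref{P9} go through for $u\neq z$; you state this but it is worth saying once that the equality $\odds{G'}{c(u)}\cap V=\odds{G}{c(u)}$ holds. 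Otherwise the argument is complete.
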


\subsection{ZX-calculus toolkit for MBQC}

Labelled open graphs provide a way to reason about the properties of a resource state that are necessary for robust determinism.
Yet we are interested in rewrites that preserve both the existence of flow and also the computation itself.
The latter is ensured by the ZX-calculus \cite{coeckeInteractingQuantumObservables2011}, a graphical language for reasoning about quantum processes.
We give a short introduction, more detail can be found e.g.\ in \cite{coeckePicturingQuantumProcesses2017,vandeweteringZXcalculusWorkingQuantum2020}.

\begin{definition}
    The ZX-calculus is the language generated by green $Z$-spiders, red $X$-spiders and yellow Hadamard nodes with the following interpretations:
    \begin{align*}
     \interp{\tikzfig{def/def_zspider}} &= \ketbra{0...0}{0...0} + e^{i\alpha} \ketbra{1...1}{1...1} \\
     \interp{\tikzfig{def/def_xspider}} &= \ketbra{+...+}{+...+} + e^{i\alpha}\ketbra{-...-}{-...-} \\
     \interp{\tikzfig{def/def_hspider}} &= \ketbra{0}{+} + \ketbra{1}{-}
    \end{align*}
    When green or red nodes have phase $\alpha \equiv 0 \bmod 2\pi$, we omit the phase label in the diagram.
    The components compose in parallel and in series like gates in a quantum circuit.
\end{definition}
A Hadamard node between two spiders can be thought of as a special sort of edge, represented as follows: $\tikzfig{def/def_hadamard-edge1} := \tikzfig{def/def_hadamard-edge0}$.
The ZX-calculus comes with an equational theory, which allows rewriting of diagrams in a way that preserves their interpretation.
There are multiple complete sets of rewrite rule \cite{backensZXcalculusCompleteStabilizer2014,jeandelCompleteAxiomatisationZXCalculus2018,vilmartNearOptimalAxiomatisationZXCalculus2018}, but we introduce only a subset of rules that is useful for our purposes, see Figure~\ref{fig:ZX-rules}.
Furthermore, we implicitly use the \textit{only connectivity matters} rule, which says that ZX-diagrams can in fact be treated as graphs without changing their interpretation.
This comes in handy to represent the open graphs underlying one way computations:
each qubit can be represented by a green node, connected to its neighbours via Hadamard edges, with a dangling edge on which we can plug the measurement effects, and potentially an additional dangling edge for
input qubits.
The measurement effects are represented by \tikzfig{measurement-effects/XY} for $XY$, \tikzfig{measurement-effects/YZ} for $YZ$, \tikzfig{measurement-effects/XZ} for $XZ$, \tikzfig{measurement-effects/X} for $X$,\tikzfig{measurement-effects/Y} for $Y$, and \tikzfig{measurement-effects/Z} for $Z$.
    
\begin{figure}
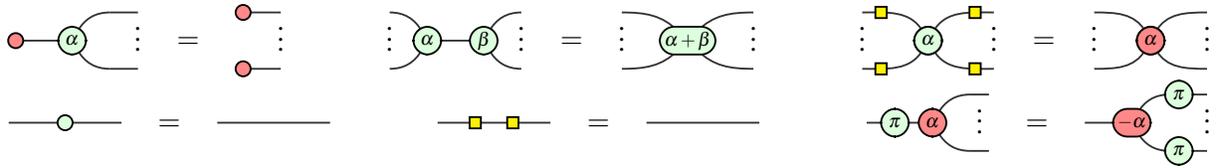

 \centering
  $\tikzfig{zx-rewrite-rules/0copy1}\hspace*{1em}\eq{}\hspace*{1em}\tikzfig{zx-rewrite-rules/0copy2}$
  \hfill
  $\tikzfig{zx-rewrite-rules/fusion1}\hspace*{1em}\eq{}\hspace*{1em}\tikzfig{zx-rewrite-rules/fusion2}$
  \hfill
  $\tikzfig{zx-rewrite-rules/color1}\hspace*{1em}\eq{}\hspace*{1em}\tikzfig{zx-rewrite-rules/color2}$

  $\tikzfig{zx-rewrite-rules/binZ1}\hspace*{1em}\eq{}\hspace*{1em}\tikzfig{zx-rewrite-rules/involh2}$
  \hfill
  $\tikzfig{zx-rewrite-rules/involh1}\hspace*{1em}\eq{}\hspace*{1em}\tikzfig{zx-rewrite-rules/involh2}$
  \hfill
  $\tikzfig{zx-rewrite-rules/copy1}\hspace*{1em}\eq{}\hspace*{1em}\tikzfig{zx-rewrite-rules/copy2}$
 \caption{The above rewrite rules for the ZX-calculus are sound.}
 \label{fig:ZX-rules}
\end{figure}

\begin{example}
 The open graph where $G$ is drawn on the left below, with $I = \{i_1,i_2\}$ and $O = \{o_1,o_2\}$ is represented the ZX-diagram on the right:

 \hfill\tikzfig{graph-state/graph} \hfill \tikzfig{graph-state/zx_graph}\hfill\;
\end{example}

\section{YZ-insertion into patterns with extended causal flow}\label{s:causal}

It might seem strange to discuss the insertion of $YZ$-measured vertices into measurement patterns with causal flow, since the traditional definition of causal flow allows only $XY$-measured vertices (cf.\ Definition~\ref{def:causal-flow}).
Yet the following generalisation to include $YZ$-measurements is straightforward\footnote{Indeed, while the authors came up with this definition during an internship in the spring/summer of 2023 \cite{perezMeasurementBased2023}, Calum Holker independently introduced an analogous generalisation around the same time \cite[Definition~27]{holkerCausal2023}.}.

    \begin{definition}\label{def_extcflow}
        Let $(G,I,O,\lambda)$ be a labelled open graph such that $\lambda(V) \subset \{XY,YZ\}$. $G$ has an \emph{extended causal flow} when there exists
        a map $c : \overline{O} \rightarrow \overline{I}$ and a strict partial order $\prec$ over $V$ such that $\forall u \in V$:
        \begin{multicols}{2}
            \begin{enumerate}
                \item[{\crtcrossreflabel{(C1)}[extcflow1]}] $u \prec c(u)$ or $u = c(u)$
                \item[{\crtcrossreflabel{(C2)}[extcflow2]}] $\lambda(u) = XY \Rightarrow c(u) \in N_G(u)$
                \item[{\crtcrossreflabel{(C3)}[extcflow3]}] $\lambda(u) = YZ \Rightarrow u = c(u)$
                \item[{\crtcrossreflabel{(C4)}[extcflow4]}] $\forall v \in N_G(c(u)), u \neq v \Rightarrow u \prec v$
            \end{enumerate}
        \end{multicols}
    \end{definition}

Measurement patterns with causal flow are very close to quantum circuits over the CNOT and single-qubit gate set.
Allowing $YZ$-measurements corresponds to allowing a new kind of gate: \emph{phase gadgets}, multi-qubit diagonal gates whose phase depends on the parity of the values of all qubits involved.
In other words, an $n$-qubit phase gadget with phase $\varphi$ acts on computational basis states as $\ket{x_1\ldots x_n} \mapsto e^{i\varphi (x_1\oplus\ldots\oplus x_n)}\ket{x_1\ldots x_n}$, \ie it uses XOR where a standard multi-controlled phase gate would use AND.

    \begin{definition}\label{def:YZ-insertion-cflow}
        Let $\Gamma = (G,I,O,\ld)$ be a \LOG.
        Define $\Gamma'=(G',I,O,\ld')$ to be the \LOG\ that results from inserting a new $YZ$-measured vertex $z$ with neighbourhood $S\sse V$, where $G' = (V\cup\{z\}, E\cup\{\{z,v\} \mid v \in S\})$ and $\ld':(\comp{O}\cup\{z\})\to\{XY, YZ\}$ is the extension of $\ld$ satisfying $\ld'(z)=YZ$ and $\ld'(v)=\ld(v)$ for all $v\in\comp{O}$.
    \end{definition}
    
    \begin{observation}\label{obs_XY_restriction}
        Put together, conditions \ref{extcflow3} and \ref{extcflow4} of Definition \ref{def_extcflow} forbid 
        $YZ$-measured vertices to be adjacent to one another in a \LOG\ with extended causal flow.
        We can therefore restrict ourselves to the case $\lambda(S) = \{XY\}$ without loss of generality.
    \end{observation}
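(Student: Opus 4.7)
The proof of Observation~\ref{obs_XY_restriction} will be a short contradiction argument exploiting the antisymmetry of the strict partial order $\prec$. The key insight is that for a $YZ$-measured vertex $u$, condition \ref{extcflow3} forces $c(u) = u$, so $N_G(c(u)) = N_G(u)$; condition \ref{extcflow4} then becomes a statement purely about the neighbours of $u$ itself in the partial order.

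First I would suppose for contradiction that $u, v \in \comp{O}$ are both $YZ$-measured and satisfy $\{u,v\} \in E$ in some \LOG\ with extended causal flow $(c,\prec)$. Applying \ref{extcflow3} to both vertices gives $c(u) = u$ and $c(v) = v$. Applying \ref{extcflow4} with $u$: since $v \in N_G(u) = N_G(c(u))$ and $u \neq v$, we obtain $u \prec v$. Symmetrically applying \ref{extcflow4} with $v$ and the neighbour $u \in N_G(v) = N_G(c(v))$ yields $v \prec u$. This contradicts the fact that $\prec$ is a strict partial order, so no two $YZ$-measured vertices in a \LOG\ with extended causal flow can be adjacent.

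For the second assertion, note that in Definition~\ref{def:YZ-insertion-cflow} we are inserting a fresh $YZ$-measured vertex $z$ with neighbourhood $S$ into $\Gamma$, and we want the result $\Gamma'$ to still possess extended causal flow. By the first part of the observation applied to $\Gamma'$, no vertex in $S \cap \comp{O}$ may be $YZ$-measured. Since the extended-causal-flow setting restricts measurement labels to $\{XY, YZ\}$, every measured neighbour of $z$ must be $XY$-labelled, i.e.\ $\lambda(v) = XY$ for all $v \in S \cap \comp{O}$. Output vertices in $S$ carry no measurement label and are thus unconstrained. This justifies restricting attention to $\lambda(S) = \{XY\}$ in the subsequent analysis.

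No part of this argument is delicate: the only subtlety is being careful that the statement $\lambda(S) = \{XY\}$ is understood to apply only to the non-output vertices of $S$, since outputs carry no measurement label under the convention $\ld : \comp{O} \to \{XY, YZ\}$ used here.
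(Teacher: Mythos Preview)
Your argument is correct and is exactly the reasoning the paper leaves implicit: the observation in the paper is stated without a separate proof, merely pointing to conditions \ref{extcflow3} and \ref{extcflow4}, and your contradiction via $u\prec v$ and $v\prec u$ is the intended unpacking of that hint. Your remark that $\lambda(S)=\{XY\}$ should be read as constraining only the non-output neighbours is also the right caveat.
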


    Analogous to gflow or Pauli flow \cite{duncanGraphtheoreticSimplificationQuantum2020,backensThereBackAgain2021,simmonsRelatingMeasurementPatterns2021}, $YZ$-measured vertices can be removed while preserving the existence of extended causal flow, as stated in the following lemma.
    Proofs are in Appendix~\ref{s:appendix-causal}.

    \begin{lemma}\label{lemma_remove_z}
        Let $ \Gamma = (G,I,O,\ld)$ be a labelled open graph, 
        and let $\Gamma'$ be the \LOG\ from Definition \ref{def:YZ-insertion-cflow}, with $\ld(S) \sse \{XY\}$.
        If $(c', \prec')$ is an extended causal flow for $\Gamma'$, then $(c'_{|V}, \prec'_{|V\times V})$ is an extended causal flow for $\Gamma$.
	\end{lemma}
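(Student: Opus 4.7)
The plan is to verify the four conditions of Definition~\ref{def_extcflow} for $(c'_{|V}, \prec'_{|V\times V})$ on $\Gamma$, but first I must handle a well-definedness issue: since $c':\comp{O}\cup\{z\}\to\comp{I}\cup\{z\}$, the restriction to $\comp{O}$ only yields a correction function for $\Gamma$ (whose codomain is $\comp{I}$) if $c'(u)\neq z$ for every $u\in\comp{O}$. Showing this is the main obstacle; once established, the rest is essentially bookkeeping, as the neighbourhood structure and partial order on $V$ are unchanged by the presence of $z$.

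For the main step, I would argue by contradiction. Suppose $c'(u)=z$ for some $u\in\comp{O}$. By Observation~\ref{obs_XY_restriction} we may take $\ld(S)\sse\{XY\}$. Since $\ld'(z)=YZ$, condition \ref{extcflow3} applied to $z$ forces $c'(z)=z$, and then \ref{extcflow4} applied to $z$ gives $z\prec' v$ for every $v\in N_{G'}(z)\setminus\{z\}=S$. Now $c'(u)=z\neq u$ together with \ref{extcflow3} rules out $\ld(u)=YZ$, so $\ld(u)=XY$ and \ref{extcflow2} forces $z\in N_{G'}(u)$, i.e.\ $u\in S$. Thus $z\prec' u$. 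On the other hand, \ref{extcflow1} applied to $u$ with $c'(u)=z\neq u$ gives $u\prec' z$. These two strict inequalities contradict the fact that $\prec'$ is a strict partial order.

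With well-definedness secured, the four conditions transfer straightforwardly. Conditions \ref{extcflow1} and \ref{extcflow3} involve only $u$ and $c'(u)$, both already in $V$, so they are inherited verbatim from $\Gamma'$. For \ref{extcflow2}, if $\ld(u)=XY$ then $c'(u)\in N_{G'}(u)$, and since $c'(u)\neq z$ we have $c'(u)\in N_{G'}(u)\setminus\{z\}=N_G(u)$. For \ref{extcflow4}, note that $N_G(c'(u))\sse N_{G'}(c'(u))$ (the two differ by at most $\{z\}$), so quantifying over the smaller set $N_G(c'(u))$ preserves the implication. Finally, the restriction of a strict partial order to a subset is again a strict partial order, so $(c'_{|V},\prec'_{|V\times V})$ is indeed an extended causal flow on $\Gamma$.
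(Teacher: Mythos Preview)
Your proposal is correct and follows essentially the same approach as the paper's proof: both first establish that no $u\in V$ can satisfy $c'(u)=z$ via the same contradiction (using \ref{extcflow3} to get $\ld(u)=XY$, then \ref{extcflow2} to place $u\in S$, then \ref{extcflow4} at $z$ together with \ref{extcflow1} at $u$ to violate strictness), and then transfer the four conditions by restriction. Your write-up is in fact slightly more explicit than the paper's, spelling out $c'(z)=z$ before invoking \ref{extcflow4} and making the neighbourhood inclusions $N_{G'}(u)\setminus\{z\}=N_G(u)$ and $N_G(c'(u))\sse N_{G'}(c'(u))$ visible when checking \ref{extcflow2} and \ref{extcflow4}.
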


    To determine when $YZ$-insertion into a \LOG\ $\Gamma$ preserves the existence of extended causal flow,
    we will first consider a fixed causal flow $(c, \prec)$ for $\Gamma$, and then generalise to any causal flow on $\Gamma$ using Lemma~\ref{lemma_remove_z}.
    Note that the uniqueness result for causal flow correction functions on \LOG{}s where $\abs{I}=\abs{O}$ \cite[p.~6]{deBeaudrapFinding2008} immediately generalises to extended causal flow since there is no choice for the correction set of a $YZ$-measured vertex.
    For the first step, we will need to extend the strict partial order $\prec$ to the newly inserted vertex $z$. The following lemma characterises when this is possible.

    \begin{lemma}\label{lemma_order}
		Let $\prec$ be a strict partial order on a set $V$, and suppose $z \not\in V$.
        Let $S_1, S_2 \subset V$, and let $\prec'$ be the transitive closure of $\prec''$, with ${\prec''} := {\prec} \cup (S_1 \times \{z\}) \cup (\{z\} \times S_2)$.
        Then $\prec'$ is a strict order if and only if $\forall v \in S_1,\forall w \in S_2. \neg(w \prec v \vee v = w)$.

        In other words, provided that no element of $S_2$ is equal to or precedes any element of $S_1$ in the original order $\prec$,
        we can extend the order $\prec$ to include $z$ such that $z$ succeeds all elements of $S_1$ and precedes all elements
        of $S_2$.
	\end{lemma}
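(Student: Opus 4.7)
The plan is to observe that $\prec'$ is transitive by construction (as the transitive closure of $\prec''$), so the iff reduces to characterising when $\prec'$ is additionally irreflexive. Transitivity plus irreflexivity is exactly a strict partial order, so the lemma becomes: $\prec'$ has no element related to itself iff the stated condition on $S_1, S_2$ holds.

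For necessity I would argue by contrapositive. Suppose some $v \in S_1, w \in S_2$ satisfy $w \prec v$ or $v = w$. By definition of $\prec''$ we have $z \prec'' w$ and $v \prec'' z$; chaining these with $w \prec v$ (or, in the degenerate case, using $v = w$ directly) yields $z \prec' z$ in the transitive closure, so $\prec'$ is not irreflexive.

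For sufficiency, assume the condition holds and suppose for contradiction that $\prec'$ is not irreflexive, so there is a $\prec''$-cycle $y_0 \prec'' y_1 \prec'' \cdots \prec'' y_n = y_0$. The edges of $\prec''$ whose endpoints both lie in $V$ coincide with $\prec$, which is acyclic, so the cycle must pass through $z$. Since the only pairs in $\prec''$ touching $z$ are $S_1 \times \{z\}$ (edges entering $z$) and $\{z\} \times S_2$ (edges leaving $z$), I would take a cycle of minimal length, which cannot revisit $z$ in its interior. In the case $y_0 = z$, the first step $z \prec'' y_1$ forces $y_1 \in S_2$ and the last step $y_{n-1} \prec'' z$ forces $y_{n-1} \in S_1$, with $y_1, \ldots, y_{n-1} \in V$ connected by $\prec$-edges, so transitivity of $\prec$ gives $y_1 \prec y_{n-1}$ or $y_1 = y_{n-1}$, contradicting the hypothesis. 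In the case $y_0 \in V$, I would instead examine the first $\prec''$-edge entering $z$ and the last one leaving $z$ in the cycle, combining them with the $\prec$-segments from $y_0$ to the former and from the latter back to $y_0$ to again extract some $v \in S_1, w \in S_2$ with $w \prec v$ or $w = v$.

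The main obstacle is purely the bookkeeping needed to isolate a pair $(v,w) \in S_1 \times S_2$ from a cycle that may traverse $z$ several times; the trick of passing to a minimal-length cycle (so that $z$ appears only at the endpoints, or only once overall) avoids any induction and keeps the extraction of the contradictory pair transparent.
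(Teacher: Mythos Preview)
Your proposal is correct and follows essentially the same approach as the paper: contrapositive for necessity, and for sufficiency unwinding a $\prec''$-chain to locate an $S_1$-vertex entering $z$ and an $S_2$-vertex leaving $z$, then using the $\prec$-segments on either side to derive the forbidden relation. The only cosmetic difference is that you pass to a minimal-length cycle so that $z$ appears at most once, whereas the paper keeps an arbitrary chain and indexes the first and last occurrences of $z$ via $i_{\min}$ and $i_{\max}$; both devices accomplish the same bookkeeping.
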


    Due to the above lemma, it is natural to define an extended causal flow $(c', \prec')$ for $\Gamma'$ given an
    extended causal flow $(c, \prec)$ for $\Gamma$ as follows.
			
   \begin{definition}\label{def:c_prime}    
        Let $ \Gamma = (G,I,O,\ld)$ be a labelled open graph with extended causal flow $(c,\prec)$, 
        and let $\Gamma'$ be the \LOG\ from definition \ref{def:YZ-insertion-cflow}, with $\ld(S) \sse \{XY\}$. 
        Further define $c'$ to be the extension of $c$ to domain $(V\setminus O)\cup\{z\}$ that satisfies $c'(z)=z$, and let $\prec'$ be the transitive closure of ${\prec} \cup (\{z\}\times S) \cup (c^{-1}(S) \times \{z\})$.
    \end{definition}

    We can now state the main theorem of this section, characterising when the insertion of a $YZ$-measured vertex into a
    \LOG\ preserves the existence of extended causal flow.

	\begin{theorem}\label{characcausal}
		Let $\Gamma = (G,I,O,\lambda)$ be a \LOG. 
        Let $\Gamma'=(G',I,O,\ld')$ be the \LOG\ that results from inserting a new $YZ$-measured vertex $z$ with neighbourhood $S\sse V$ as in Definition~\ref{def:YZ-insertion-cflow},
        with $\ld(S) \sse \{XY\}$.
        Then $\Gamma'$ has extended causal flow if and only if there exists an extended causal flow $(c,\prec)$ for  $\Gamma$ such that $\forall v \in S, \forall w \in c^{-1}(S). \neg(v \prec w \vee v = w)$.

        In other words we can extend the causal flow to $\Gamma'$ if and only if no vertex of $S$ is equal to or precedes any vertex of $c^{-1}(S)$ in the original order $\prec$.
	\end{theorem}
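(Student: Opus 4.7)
The plan is to prove both directions, leveraging Lemma~\ref{lemma_remove_z} for the forward direction and Definition~\ref{def:c_prime} together with Lemma~\ref{lemma_order} for the backward direction.

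For the forward direction ($\Rightarrow$), suppose $\Gamma'$ has an extended causal flow $(c',\prec')$. By Lemma~\ref{lemma_remove_z}, the restriction $(c,\prec) := (c'_{|V}, \prec'_{|V\times V})$ is an extended causal flow on $\Gamma$. I then need to show this witnesses the condition. Applying \ref{extcflow3} to $z$ gives $c'(z) = z$; condition \ref{extcflow4} applied to $z$ then yields $z \prec' v$ for all $v \in N_{G'}(z) = S$, since $z\notin V$ ensures $z\ne v$. Now pick any $w \in c^{-1}(S)$, so $c(w) = s$ for some $s\in S$. Because the new edges $\{z,s\}$ were added in $G'$, we have $z \in N_{G'}(c'(w))$, and $z \ne w$, so \ref{extcflow4} applied to $w$ gives $w \prec' z$. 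Combining by transitivity, $w \prec' v$ for every $v\in S$ and every $w\in c^{-1}(S)$. Antisymmetry/irreflexivity of the strict order $\prec'$ then rules out both $v \prec' w$ and $v = w$; restricting to $V$ gives the desired condition on $\prec$.

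For the backward direction ($\Leftarrow$), I take the extended causal flow $(c,\prec)$ on $\Gamma$ satisfying the hypothesis and build $(c',\prec')$ as in Definition~\ref{def:c_prime}. The first step is to verify that $\prec'$ is a strict partial order: apply Lemma~\ref{lemma_order} with $S_1 = c^{-1}(S)$ and $S_2 = S$. The premise of Lemma~\ref{lemma_order} becomes $\forall v \in c^{-1}(S),\forall w \in S.\ \neg(w\prec v\vee v=w)$, which is precisely the theorem's hypothesis after renaming. Next I verify conditions \ref{extcflow1}--\ref{extcflow4} for $(c',\prec')$, splitting on whether the vertex under consideration is $z$ or lies in $V$. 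For $u=z$: \ref{extcflow1} and \ref{extcflow3} hold because $c'(z)=z$, while \ref{extcflow4} follows directly because we added $\{z\}\times S$ to the order. For $u\in V$: \ref{extcflow1}, \ref{extcflow2}, \ref{extcflow3} are inherited from $(c,\prec)$ together with $\prec'\supseteq\prec$ and $N_{G'}(u)\supseteq N_G(u)$.

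The only subtle case is \ref{extcflow4} for $u\in V$, because inserting $z$ can enlarge the neighbourhood $N_{G'}(c(u))$ whenever $c(u)\in S$, i.e., whenever $u\in c^{-1}(S)$. For the old neighbours $v\in N_G(c(u))$ with $v\ne u$, the conclusion $u\prec' v$ follows from the corresponding property in $\Gamma$ and $\prec'\supseteq\prec$. For the new neighbour $v=z$, we need $u\prec' z$, which holds precisely because we added $c^{-1}(S)\times\{z\}$ to $\prec'$. This case analysis is where the exact form of Definition~\ref{def:c_prime} matters, and it is the main thing to get right; everything else is bookkeeping. Once all conditions are checked, $(c',\prec')$ is an extended causal flow on $\Gamma'$, completing the proof.
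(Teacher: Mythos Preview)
Your proof is correct and follows essentially the same approach as the paper: both directions use Lemma~\ref{lemma_remove_z}, Definition~\ref{def:c_prime}, and Lemma~\ref{lemma_order} in the same way, with the same case split for verifying \ref{extcflow4}. The only cosmetic difference is that in the forward direction the paper argues by contradiction on $(c',\prec')$ before restricting via Lemma~\ref{lemma_remove_z}, whereas you restrict first and then argue directly; the content is identical.
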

    
\begin{example}\label{ex:causal-flow}
Consider $YZ$-insertion into the \LOG\ $\Gamma$ in \eqref{eq:example}, which has extended causal flow $(c,\prec)$ with $c$ given in \eqref{eq:example} and $i_1\prec i_2 \prec b \prec a$.
Let $S = \{a,b\}$. Then we have $c^{-1}(S) = \{i_1,i_2\}$, and $\forall v \in S, \forall w \in c^{-1}(S).\neg(v \prec w \vee v = w)$, so
by Theorem~\ref{characcausal}, the \LOG\ $\Gamma'$ of \eqref{eq:example} has extended causal flow where $c'(z)=z$, $c'(v)=c(v)$ otherwise, and $i_1 \prec' i_2 \prec' z \prec' b \prec' a$.
\begin{equation}\label{eq:example}
 \Gamma : \hspace*{1em} \tikzfig{graph-state/ex-caus1}
 \hspace*{4em}
 c :  \left\{
    \begin{array}{l c l }
        i_1 & \mapsto & a \\
        i_2 & \mapsto & b \\
        a & \mapsto & o_1 \\
        b & \mapsto & o_2 \\
    \end{array}
    \right.
 \hspace*{4em}
 \Gamma' : \hspace*{1em} \tikzfig{graph-state/ex-caus2}
\end{equation}
\end{example}

As a straightforward consequence of Theorem~\ref{characcausal}, if the $YZ$-insertion is to be flow preserving, then the number of neighbours of the new output can be no more than the number of outputs.

\begin{corollary}\label{cor:number-neighbours-outputs}
 Suppose inserting a $YZ$-measured vertex $z$ with neighbourhood $S$ as in Definition~\ref{def:YZ-insertion-cflow} into a \LOG\ $(G,I,O,\lambda)$ preserves the existence of extended causal flow.
 Then $\abs{S}\leq\abs{O}$.
\end{corollary}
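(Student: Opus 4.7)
The plan is to construct an injection $\phi : S \to O$, which immediately yields $|S| \leq |O|$. By Theorem~\ref{characcausal}, the hypothesis supplies an extended causal flow $(c,\prec)$ on $\Gamma$ satisfying both $S \cap c^{-1}(S) = \emptyset$ and the stronger condition that no $v \in S$ satisfies $v \prec w$ for any $w \in c^{-1}(S)$. By Observation~\ref{obs_XY_restriction}, every non-output vertex in $S$ is necessarily $XY$-measured, so for each $v \in S \cap \overline{O}$ the sequence $v, c(v), c^2(v), \ldots$ is strictly increasing in $\prec$ by \ref{extcflow1}, hence must terminate at an output of $G$ (since $V$ is finite); call this output $\phi(v)$. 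For $v \in S \cap O$, define $\phi(v) = v$.

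To show $\phi$ is injective, suppose $\phi(v_1) = \phi(v_2)$ with $v_1 \neq v_2$. The case both in $S\cap O$ is immediate. If $v_1 \in S \cap O$ and $v_2 \in S \cap \overline{O}$, then $v_1 = c^k(v_2)$ for some $k \geq 1$, so $w := c^{k-1}(v_2) \in c^{-1}(S)$; then $k=1$ yields $v_2 = w \in S \cap c^{-1}(S)$, while $k \geq 2$ yields $v_2 \prec w$ with $v_2 \in S$ — both contradicting the extracted condition. The case both in $S \cap \overline{O}$ reduces to the previous: since the flow function $c$ is injective on $\overline{O}$ (standard for causal flow, see below), two paths sharing an endpoint must share a full tail, so WLOG $v_1 = c^k(v_2)$ for some $k \geq 1$, and the same argument applies.

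The only delicate ingredient is injectivity of $c$, which I expect to be the main technical point since Definition~\ref{def_extcflow} permits $YZ$-measured vertices $v$ with $c(v)=v$. However, two distinct $XY$-measured vertices $u \neq u'$ with $c(u) = c(u')$ would each lie in $N(c(u'))$ respectively $N(c(u))$, so \ref{extcflow4} would force both $u \prec u'$ and $u' \prec u$; and an $XY$-measured $u$ cannot have $c(u) = v$ for a $YZ$-measured $v$, since \ref{extcflow4} applied to $v$ (with $c(v)=v$) gives $v \prec u$, contradicting $u \prec c(u) = v$ from \ref{extcflow1}. With injectivity secured, the flow paths in $\overline{O}$ are vertex-disjoint, each terminating at a distinct output, and the injection $\phi$ is well-defined; the bound $|S| \leq |O|$ follows.
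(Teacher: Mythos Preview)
Your proof is correct and follows essentially the same approach as the paper: both trace each element of $S$ along its flow path to an output and use the condition extracted from Theorem~\ref{characcausal} to show that two distinct elements of $S$ cannot lie on the same path. The paper phrases this via the path-covering interpretation of causal flow (citing de~Beaudrap) together with pigeonhole, whereas you build the injection $S\to O$ directly and supply the injectivity of $c$ from first principles, making your version slightly more self-contained.
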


We consider now the complexity of checking the conditions of Theorem~\ref{characcausal}.
For comparison, the complexity of the ($XY$-only) causal flow finding algorithm is linear in the number of edges in the \LOG\ \cite{mhallaFindingOptimalFlows2008}.
The algorithm generalises to extended causal flow by correcting $YZ$-measurements as soon as all their neighbours have been corrected.
Extended causal flows are unique for \LOG{}s where $\abs{I}=\abs{O}$ since causal flows are unique \cite{deBeaudrapFinding2008} and each $YZ$-measured vertex corrects itself.

The flow-preservation conditions involve only the neighbours of the new vertex and their preimages under the flow function.
On the other hand, the theorem gives an existence condition, so for \LOG{}s with $\abs{I}<\abs{O}$, different extended causal flows may need to be considered.
Additionally, local verification of the partial order conditions may not be possible since the flow finding algorithm does not return the induced partial order, but a layering of the vertices \cite{mhallaFindingOptimalFlows2008}.
Thus the flow-preservation conditions for $YZ$-insertion into extended causal flow may be of theoretical more than practical interest.

\section{YZ-insertion into patterns with gflow and Pauli flow}\label{s:gflow}

We will now consider the insertion of $YZ$-measured vertices into MBQC patterns with gflow and Pauli flow.
As gflow is a special case of Pauli flow, the two can be considered together: if a pattern has Pauli flow and all the measurements are planar, then it has gflow.
Inserting a $YZ$-measurement, which is planar, does not change that.
Thus, considering only Pauli flow is without loss of generality.
Furthermore, by Lemma~\ref{lem:focused-Pauli-flow}, it suffices to consider only focused Pauli flow.
Throughout this section we will assume that in any Pauli flow $(c,\prec)$, $\prec$ is the partial order induced by $c$ (cf.\ Definition~\ref{def:trl_c}), which is also without loss of generality.
We begin by proving an intermediate lemma that will be useful for the main theorem.
Proofs for this section are in Appendix~\ref{s:appendix-gflow}.

\begin{lemma}\label{lem:focusing-flow-demand}
 Let $\Gamma = (G,I,O,\ld)$ be a \LOG\ with flow-demand matrix $M$.
 Let $\someset\sse\comp{I}$ and denote the corresponding indicator vector (within $\comp{I}$) by $\v{a}$: \ie $a_u = 1 \Leftrightarrow u\in\someset$ and $a_u=0 \Leftrightarrow u\in\comp{I}\setminus\someset$.
 Then $\someset$ is focused over the set $S\sse\comp{O}$ if and only if $(M\v{a})_u = 0$ for all $u\in S$.
\end{lemma}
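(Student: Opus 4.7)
The plan is to prove this essentially by case analysis on the measurement label $\ld(u)$ of each $u \in S$, using Lemma~\ref{lem:row-by-col-meaning} to translate the matrix condition $(M\v{a})_u = 0$ into graph-theoretic statements, and then matching these up with the focusing conditions \ref{F1}--\ref{F3}. Since the statement is an ``if and only if'' and both sides are universally quantified over $u \in S$, it suffices to show that for each fixed $u \in S$, the condition $(M\v{a})_u = 0$ is equivalent to the conjunction of the instances of \ref{F1}, \ref{F2}, \ref{F3} at $u$.

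First I would fix $u \in S$ and apply Lemma~\ref{lem:row-by-col-meaning} to characterise $(M\v{a})_u = 0$. Specifically, taking the contrapositive of that lemma, $(M\v{a})_u = 0$ holds if and only if the following three negated disjuncts hold: (i) $\ld(u) \in \{X, XY\}$ implies $u \notin \odd{\someset}$; (ii) $\ld(u) \in \{XZ, YZ, Z\}$ implies $u \notin \someset$; (iii) $\ld(u) = Y$ implies $u \notin \codd{\someset}$. Since at most one of the hypotheses on $\ld(u)$ can hold simultaneously, this is essentially a case split on the label.

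Next I would verify that these three implications are precisely the contrapositives of \ref{F1}, \ref{F2}, \ref{F3} restricted to $u$. Condition \ref{F1} says that if $u \in S$ lies in $\someset$, then $\ld(u) \in \{XY, X, Y\}$; contrapositively, $\ld(u) \in \{XZ, YZ, Z\}$ forces $u \notin \someset$, matching (ii). Condition \ref{F2} contrapositively says $\ld(u) \in \{XY, X\}$ forces $u \notin \odd{\someset}$, which together with the $X$-case of (i) covers (i); note that the $XY$ and $X$ cases line up because $\{X,XY\} \cap \{XY,X\} = \{X,XY\}$. Finally, \ref{F3} is exactly (iii). Running the equivalence in both directions gives the lemma.

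The only subtle step is a careful bookkeeping of which labels appear in which condition, in particular ensuring no case is double-counted or missed (e.g.\ that the $Y$ case in the focusing conditions is handled by \ref{F3} and not by \ref{F1} or \ref{F2}, and that labels such as $Z$ never give a nontrivial constraint on $\odd{\someset}$). Once this bookkeeping is laid out in a small table of the six measurement labels, both implications follow immediately and no further calculation is needed.
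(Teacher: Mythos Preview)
Your proposal is correct and follows essentially the same approach as the paper: both apply Lemma~\ref{lem:row-by-col-meaning} to characterise $(M\v{a})_u=0$ in terms of $u\notin\odd{\someset}$, $u\notin\someset$, or $u\notin\codd{\someset}$ depending on $\ld(u)$, and then match these case by case against the focusing conditions \ref{F1}--\ref{F3}. The paper writes out the two implications separately rather than framing them as contrapositives, but the underlying case analysis is identical.
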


We are now ready to define the $YZ$-insertion and prove the associated flow-preservation theorem for gflow and Pauli flow.

\begin{definition}\label{def:YZ-insertion-gflow}
 Let $\Gamma = (G,I,O,\ld)$ be a \LOG.
 Define $\Gamma'=(G',I,O,\ld')$ to be the \LOG\ that results from inserting a new YZ-measured vertex $z$ with neighbourhood $S\sse V$, where
 $G' = (V\cup\{z\}, E\cup\{\{z,v\} \mid v \in S\})$,
 and $\ld':(\comp{O}\cup\{z\})\to\{X,Y,Z,XY,XZ,YZ\}$ is the extension of $\ld$ satisfying $\ld'(z)=YZ$ and $\ld'(v)=\ld(v)$ for all $v\in\comp{O}$.
\end{definition}

In the following, (unprimed) $\comp{I}$, $\comp{O}$, $\planar$ \etc will always refer to the vertices of the original \LOG\ $\Gamma$ only.
The $YZ$-insertion with a given set of neighbours is flow-preserving only if certain properties are satisfied: there must be a suitable focused correction set for the newly-inserted vertex, and the relation induced by the new correction function must be a strict partial order.
The first two conditions of the theorem ensure the existence of the focused correction set and the third condition captures the requirements for the partial order.

\begin{theorem}\label{thm:YZ-into-Pauli}
 Let $\Gamma = (G,I,O,\ld)$ be a \LOG\ and let $\Gamma'$ be the \LOG\ that results from inserting a new $YZ$-measured vertex $z$ with neighbourhood $S\sse V$ as in Definition~\ref{def:YZ-insertion-gflow}.
 Then $\Gamma'$ has Pauli flow if and only if there exists a Pauli flow $(c,\prec)$ on $\Gamma$ and a set $K\sse\comp{I}$ such that:
 \begin{enumerate}
  \item\label{it:K-focused} $K$ is focused over $\comp{O}\setminus(S\cap\Xlike)$, and moreover $\comp{O}\setminus(S\cap\Xlike)$ is the largest set over which $K$ is focused.
  \item\label{it:K-S-even} $\abs{S\cap K} \equiv 0 \bmod 2$.
  \item\label{it:order} If $\Wpred := \{w\in\comp{O}: \abs{c(w)\cap S}\equiv 1 \bmod 2\}$ and $\Vsucc := \planar\cap (K \cup (\odds{G}{K} \symd S))$, then all $w\in\Wpred$ and all $v\in\Vsucc$ satisfy $\neg(v\prec w \vee v = w)$;
  \ie no vertex of $\Vsucc$ is equal to or precedes any vertex of $\Wpred$ for the original order $\prec$.
 \end{enumerate}
 If the properties hold, the focused Pauli flow on $\Gamma'$is given by the extension $c'$ of $c$ to domain $\comp{O}\cup\{z\}$ satisfying $c'(z) = K\cup\{z\}$,
 and the transitive closure of ${\prec} \cup \{(w,z)\mid w\in\Wpred\} \cup \{(z,v)\mid v\in\Vsucc \}$.
\end{theorem}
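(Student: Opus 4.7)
The plan is to use the algebraic formulation of focused Pauli flow (Theorem~\ref{thm:algebraic-Pauli}): rather than verifying conditions \ref{P1}--\ref{P9} directly, I would check the matrix identities $M_{\Gamma'}C' = I$ and that $N_{\Gamma'}C'$ is the adjacency matrix of a DAG. Both $M_{\Gamma'}$ and $C'$ decompose into blocks separating the original vertices from the new vertex $z$. Because $\ld'(z)=YZ$, the $z$-row of $M_{\Gamma'}$ has a single $1$ at the $z$-column, and the $z$-column of $M_{\Gamma'}$ satisfies $(M_{\Gamma'})_{u,z}=[u\in S\cap\Xlike]$, since the new edges from $z$ to $S$ only affect flow-demand rows of vertices labelled $X$, $Y$, or $XY$. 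With $c'(z)=K\cup\{z\}$, the $z$-column of $C'$ is the indicator of $K\cup\{z\}$; and since $c'(v)=c(v)\sse\comp{I}$ for $v\in\comp{O}$, the $z$-row of $C'$ vanishes on $\comp{O}$.

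For the ($\Leftarrow$) direction, the block computation immediately gives $(M_{\Gamma'}C')_{u,v} = (M_\Gamma C)_{u,v} = \delta_{uv}$ for $u,v\in\comp{O}$ and $(M_{\Gamma'}C')_{z,v} = \delta_{zv}$. The only non-trivial block is $(M_{\Gamma'}C')_{u,z} = (M_\Gamma\v{k})_u + [u\in S\cap\Xlike]$ for $u\in\comp{O}$, where $\v{k}$ is the indicator of $K$. By Lemma~\ref{lem:focusing-flow-demand}, condition~\ref{it:K-focused} translates to $(M_\Gamma\v{k})_u=0$ on $\comp{O}\setminus(S\cap\Xlike)$, and its maximality portion forces $(M_\Gamma\v{k})_u=1$ on $S\cap\Xlike$; together these make the block zero. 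For the DAG condition, I would apply Observation~\ref{obs:NC-interpretation} to the new off-diagonal entries of $N_{\Gamma'}C'$ involving $z$: since $\ld'(z)=YZ$, the relation $w\trl_{c'} z$ reduces to $z\in\odds{G'}{c(w)}$, i.e.\ $|c(w)\cap S|$ odd, giving exactly $w\in\Wpred$; and using the focused property of $c'(z)$, $z\trl_{c'} v$ reduces to $v\in\planar\cap(K\cup(\odds{G}{K}\symd S))=\Vsucc$. The diagonal entry $(N_{\Gamma'}C')_{z,z}=|K\cap S| \bmod 2$ vanishes by condition~\ref{it:K-S-even}. Extending the existing DAG on $\comp{O}$ by adding $z$ with predecessors $\Wpred$ and successors $\Vsucc$ remains acyclic iff no $v\in\Vsucc$ equals or precedes any $w\in\Wpred$, which is exactly condition~\ref{it:order}.

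For the ($\Rightarrow$) direction, given a focused Pauli flow $(c'',\prec'')$ on $\Gamma'$ (which exists by Lemma~\ref{lem:focused-Pauli-flow}), I set $K := c''(z)\setminus\{z\}$ and $c := c''|_{\comp{O}}$. Since $c''(z)$ is focused over $\comp{O}$ and $\ld'(z)=YZ$, property \ref{F1} forces $z\notin c''(v)$ for all $v\in\comp{O}$, so $c(v)\sse\comp{I}$. Reading the $\comp{O}\times\comp{O}$ block of the algebraic equations on $\Gamma'$ yields $M_\Gamma C=I$ and $N_\Gamma C$ DAG, so $(c,\prec_c)$ is a focused Pauli flow on $\Gamma$. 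The conditions on $K$ then emerge from the $z$-column analysis: $(M_{\Gamma'}C'')_{u,z}=\delta_{uz}$ gives condition~\ref{it:K-focused} including maximality (note that since focus sets are closed under union, not being focused over $(\comp{O}\setminus(S\cap\Xlike))\cup\{u\}$ for each $u\in S\cap\Xlike$ is equivalent to $\comp{O}\setminus(S\cap\Xlike)$ being maximal); the diagonal $(N_{\Gamma'}C'')_{z,z}=0$ gives condition~\ref{it:K-S-even}; and the DAG property of $N_{\Gamma'}C''$ together with the $\Wpred$/$\Vsucc$ characterization gives condition~\ref{it:order}.

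The main obstacle is the case analysis required to translate between the algebraic condition $z\trl_{c'} v$ (coming from Observation~\ref{obs:NC-interpretation}) and the explicit set $\Vsucc = \planar\cap(K\cup(\odds{G}{K}\symd S))$. Showing the coincidence requires using both the focus of $K$ over $\comp{O}\setminus(S\cap\Xlike)$ (which constrains labels of vertices in $K$ and $\odds{G}{K}$ via \ref{F1} and \ref{F2}) and the maximality condition (which via Lemma~\ref{lem:row-by-col-meaning} forces $u\in\odds{G}{K}$ whenever $u\in S\cap\Xlike$ with $\ld(u)\in\{X,XY\}$, and $u\in\codds{G}{K}$ when $\ld(u)=Y$). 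Together these handle the potentially problematic case of a vertex lying in $S\cap\Xlike\cap\planar\cap(\odds{G}{K}\symd S)$, showing that maximality forces such vertices to be in $\odds{G}{K}\cap S$ and thus excluded from $\odds{G}{K}\symd S$.
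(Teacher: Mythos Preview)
Your proposal is correct and follows essentially the same approach as the paper: block-decompose $M_{\Gamma'}$, $N_{\Gamma'}$, and $C'$ to isolate the $z$-row and $z$-column, use Lemma~\ref{lem:focusing-flow-demand} to translate condition~\ref{it:K-focused} into $M\v{k}=\v{x}$, read condition~\ref{it:K-S-even} off the diagonal entry $\v{n}\cdot\v{k}$, and interpret the new off-diagonal entries of $N_{\Gamma'}C'$ via Observation~\ref{obs:NC-interpretation} to recover $\Wpred$ and $\Vsucc$, then appeal to the order-extension Lemma~\ref{lemma_order}. The only minor difference is that for the identification of $\Vsucc$ you argue directly from condition~\ref{it:K-focused} (focus plus maximality), whereas the paper first establishes $M_{\Gamma'}C'=\I$ and then uses that $c'(z)=K\cup\{z\}$ is therefore focused over $\comp{O}$ in $\Gamma'$, which makes the case analysis in your final paragraph unnecessary: focusing immediately gives that $XY$-vertices avoid $\odds{G'}{K\cup\{z\}}=\odds{G}{K}\symd S$ and that $XZ/YZ$-vertices avoid $K$.
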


\begin{example}
Consider the \LOG\ $\Gamma$ from Example~\ref{ex:causal-flow}, this has focused Pauli flow $(c,\prec)$ (which is actually a gflow) with
$c(i_1) = \{a,b\}$, $c(i_2) = \{b\}$, $c(a) = \{o_1\}$, $c(b) = \{o_1, o_2\}$ and $i_1,i_2 \prec a,b$.
Let $S = \{i_1,a,b\}$ and $K = \{a,b,o_2\}$. First, $\comp{O}\setminus(S\cap\Xlike) = \{ i_2\}$. As $\odds{G}{K} = \{i_1,a,b\}$,
one can check that indeed $\{i_2\}$ is the largest set on which $K$ is focused.
Furthermore, $|K\cap S| = |\{1,2\}| \equiv 0 \bmod 2$. Finally, $\Wpred = \{i_2\}$ and $\Vsucc = \{a,b\}$, 
thus for all $w\in\Wpred$ and $v\in\Vsucc$ we have $\neg(v\prec w \vee v = w)$.
Thus by Theorem~\ref{thm:YZ-into-Pauli}, the following \LOG\ has focused gflow $(c', \prec')$:
    \[\tikzfig{graph-state/ex-caus3}\hspace*{2em} \text{with } c'(z) = \{z, 1, 2, o_2\}, \text{ and }
    \left\{\begin{array}{ c l c l c l c}
    i_2 & \prec' & z & \prec' & a,b & \prec' & o_1, o_2\\
     & & i_1 & \prec' & a,b & \prec' & o_1, o_2.
    \end{array}
    \right.
           \]
\end{example}

It has long been known that the `$Z$-deletion' rule is sound also for $YZ$-measurements: they can always be deleted from a \LOG\ without breaking the existence of flow \cite{duncanGraphtheoreticSimplificationQuantum2020,backensThereBackAgain2021,simmonsRelatingMeasurementPatterns2021}.
This is captured by our theorem via the following corollary, which follows from the second part of the proof above.

\begin{corollary}\label{cor:YZ-deletion}
 The three conditions in Theorem~\ref{thm:YZ-into-Pauli} are satisfied for any focused Pauli flow on $\Gamma'$, therefore $YZ$-deletion is always flow-preserving.
\end{corollary}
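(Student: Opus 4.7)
The plan is to recognise that this corollary is exactly the consequence of the ``only if'' direction of Theorem~\ref{thm:YZ-into-Pauli}. I would start from any focused Pauli flow $(c',\prec')$ on $\Gamma'$ and make the natural choices $K := c'(z)\setminus\{z\}$, $c := c'|_{\comp{O}}$, and $\prec := \prec'|_{\comp{O}\times\comp{O}}$. Note that $K\sse\comp{I}$ because focus condition \ref{F1} applied to any $c'(v)$ with $v\neq z$ forbids $z$ from appearing in $c'(v)$ (since $\ld(z)=YZ$), so also $c(v)=c'(v)\sse\comp{I}$. Showing $(c,\prec)$ is a Pauli flow on $\Gamma$ is then routine: since $z$ does not appear in any $c(v)$, the odd neighbourhoods $\odd{c(v)}$ in $G$ coincide with $\odd{c'(v)}$ in $G'$ when restricted to $V$, so conditions \ref{P1}--\ref{P9} carry over directly.

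For the three conditions, condition~\ref{it:K-S-even} follows directly from \ref{P6} applied to $z$ in $\Gamma'$: $z \notin \odd{c'(z)} = \odds{G'}{K} \symd S$, and since $z\notin S$ this reduces to $z\notin\odds{G'}{K}$, i.e.\ $|S\cap K|$ is even. Condition~\ref{it:K-focused} follows from the focus of $c'(z) = K\cup\{z\}$ over $\comp{O}$ in $\Gamma'$: using the identities $\odds{G'}{K\cup\{z\}}\cap V = \odds{G}{K}\symd S$ and $\codds{G'}{K\cup\{z\}}\cap V = \codds{G}{K}\symd S$, a case analysis on whether $w\in S$ shows that $K$ is focused over $\comp{O}\setminus(S\cap\Xlike)$. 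Condition~\ref{it:order} then follows by noting that $w\in\Wpred$ iff $z\in\odd{c'(w)}$, which by \ref{P2} (using $\ld(z)=YZ\notin\{Y,Z\}$) forces $w\prec' z$; similarly, $v\in\Vsucc$ decomposes via the focus of $K$ into the cases where $v\in K$ has $\ld(v)=XY$ (giving $z\prec' v$ via \ref{P1}) or $v\in\odds{G}{K}\symd S$ has $\ld(v)\in\{XZ,YZ\}$ (giving $z\prec' v$ via \ref{P2}). Transitivity of the strict partial order $\prec'$ then yields $\neg(v\prec w \vee v = w)$ for $w\in\Wpred$, $v\in\Vsucc$.

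The main obstacle will be the maximality claim in condition~\ref{it:K-focused}: for every $w \in S\cap\Xlike$ I need to show that some focus property of $K$ in $\Gamma$ fails at $w$. For $\ld(w)\in\{XY,X\}$, the $\Gamma'$-focus condition \ref{F2} on $c'(z)$ forces $w\notin\odds{G}{K}\symd S$, hence (using $w\in S$) $w\in\odds{G}{K}$, violating \ref{F2} for $K$ alone; for $\ld(w)=Y$, the $\Gamma'$-focus condition \ref{F3} forces $w\notin\codds{G}{K}\symd S$, hence $w\in\codds{G}{K}$, violating \ref{F3}. Once this case analysis is complete the remaining verifications are direct bookkeeping on the definitions, and the flow-preservation of $YZ$-deletion follows immediately by invoking Theorem~\ref{thm:YZ-into-Pauli}.
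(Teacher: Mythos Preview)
Your proposal is correct and follows essentially the same route as the paper: the paper simply points to the ``only if'' direction in the proof of Theorem~\ref{thm:YZ-into-Pauli}, which extracts $K := c'(z)\setminus\{z\}$ and the restricted flow from a focused Pauli flow on $\Gamma'$ and verifies the three conditions. The only methodological difference is that the paper carries out that verification via the algebraic matrix formulation (showing $M\v{k}'+\v{x}=\v{0}$ and analysing $N_{\Gamma'}C_{\Gamma'}$), whereas you verify \ref{F1}--\ref{F3} and the order conditions by direct combinatorial case analysis; the two arguments are equivalent translations of each other through Lemma~\ref{lem:focusing-flow-demand} and Observation~\ref{obs:NC-interpretation}.
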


In the other direction, for $YZ$-insertion, it must be checked each time whether the three conditions are satisfied by the chosen set of neighbours $S$ of the new vertex and the proposed correction set (minus the vertex itself).
Yet there are some cases where the conditions are always satisfied: for example, $Z$-measured neighbours `come for free' in the sense that they will not affect whether or not the $YZ$-insertion is flow-preserving: this can be seen by inserting the $YZ$-measurement $u$ with non-$Z$ neighbours only, and then in turn deleting each desired $Z$-measured neighbour and re-inserting it with the extra edge to $u$.
As $Z$-deletions and -insertions always preserve Pauli flow (cf.\ Proposition~\ref{prop:Z-insertion}), the entire process is flow-preserving if and only if the original $YZ$-insertion preserves the existence of flow.
Any proof that the old \LOG\ has Pauli flow if the new \LOG\ has it follows immediately from Corollary~\ref{cor:YZ-deletion}, so we only have to show that $\Gamma$ having Pauli flow implies that $\Gamma'$ has Pauli flow.
Two additional corollaries handling all-input and all-output neighbours, respectively, are in Appendix~\ref{s:appendix-gflow}.

\begin{corollary}\label{cor:YZ-insertion-single-neighbour}
 Let $\Gamma = (G,I,O,\ld)$ be a \LOG\ with focused Pauli flow $(c,\prec)$ and let $x\in V$ be s.t.\ $\ld(x)=XY$.
 Define $\Gamma'$ to be the \LOG\ that results from inserting a new $YZ$-measured vertex $z$ with neighbourhood $S=\{x\}$ as in Definition~\ref{def:YZ-insertion-gflow}.
 Then $\Gamma'$ has focused Pauli flow $(c',\prec')$ where $c'$ is the extension of $c$ satisfying $c'(z) = c(x)\cup\{z\}$, and ${\prec'} = {\prec} \cup \{(w,z)\mid w\prec x\} \cup \{(z,v)\mid x\prec v\}$.
\end{corollary}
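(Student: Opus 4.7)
The plan is to apply Theorem~\ref{thm:YZ-into-Pauli} with the choice $K := c(x)$, verify its three conditions, and then check that the order claimed in the corollary extends the one produced by the theorem while remaining a valid strict partial order. For Condition~\ref{it:K-focused}: because $c$ is focused, $c(x)$ is focused over $\comp{O}\setminus\{x\}$; and since $\ld(x)=XY$ we have $S\cap\Xlike=\{x\}$, so the target set is exactly $\comp{O}\setminus\{x\}$. That this is the largest such set uses \ref{P4}, which gives $x\in\odd{c(x)}$: since $\ld(x)=XY\notin\{XZ,YZ,Y,Z\}$, \ref{F2} fails for $x$, so $c(x)$ cannot be focused over any larger set. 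Condition~\ref{it:K-S-even} is immediate from \ref{P4}: $x\notin c(x)$, hence $|S\cap K|=0$.

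Condition~\ref{it:order} carries the bulk of the work. Since $S=\{x\}$, the set $\Wpred$ reduces to $\{w : x\in c(w)\}$; applying \ref{P1} with $\ld(x)=XY\notin\{X,Y\}$ (and noting $w\neq x$ is forced by \ref{P4}) gives $\Wpred\subseteq\{w:w\prec x\}$. Using $x\in\odd{c(x)}$ again, one simplifies $\odds{G}{K}\symd S = \odd{c(x)}\setminus\{x\}$, so $\Vsucc\subseteq\planar\cap\bigl(c(x)\cup(\odd{c(x)}\setminus\{x\})\bigr)$. For any $v\in\Vsucc$, the planar label excludes both $\{X,Y\}$ and $\{Y,Z\}$, so \ref{P1} (if $v\in c(x)$) or \ref{P2} (if $v\in\odd{c(x)}\setminus\{x\}$) yields $x\prec v$. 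Transitivity of $\prec$ then delivers $\neg(v\prec w\vee v=w)$ for all $w\in\Wpred$ and $v\in\Vsucc$.

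By Theorem~\ref{thm:YZ-into-Pauli}, $\Gamma'$ therefore has a focused Pauli flow with $c'(z)=c(x)\cup\{z\}$ and order the transitive closure of ${\prec}\cup\{(w,z):w\in\Wpred\}\cup\{(z,v):v\in\Vsucc\}$. The relation $\prec'$ stated in the corollary contains this one, since $\Wpred\subseteq\{w:w\prec x\}$ and $\Vsucc\subseteq\{v:x\prec v\}$; it is already transitive (any chain $w\to z\to v$ is subsumed by $w\prec v$, which follows from $w\prec x$ and $x\prec v$) and irreflexive because $\prec$ is strict. Enlarging the partial order of a Pauli flow never breaks any of \ref{P1}--\ref{P9}, so $(c',\prec')$ is the focused Pauli flow claimed. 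The principal obstacle is parsing the symmetric-difference form of $\Vsucc$ and picking the appropriate Pauli-flow property to invoke in each case.
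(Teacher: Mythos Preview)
Your proof is correct and follows essentially the same route as the paper's: choose $K=c(x)$, verify the three hypotheses of Theorem~\ref{thm:YZ-into-Pauli} using focusing together with \ref{P4}, \ref{P1}, \ref{P2}, and conclude. You are in fact more careful than the paper on one point: the order produced by Theorem~\ref{thm:YZ-into-Pauli} involves $\Wpred$ and $\Vsucc$, whereas the corollary asserts the potentially larger order with $\{w:w\prec x\}$ and $\{v:x\prec v\}$; you explicitly check that this enlargement is still a strict partial order and that extending the order cannot break \ref{P1}--\ref{P9}, a step the paper leaves implicit.
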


\begin{remark}
 Corollary~\ref{cor:YZ-insertion-single-neighbour} shows one way of getting a Pauli flow on $\Gamma'$ for any focused Pauli flow on the original \LOG\ $\Gamma$.
 Yet not all focused Pauli flows on $\Gamma'$ necessarily arise in this way: consider, for example, the line graph $G = (\{x,o_1,o_2\}, \{\{o_1,x\},\{x,o_2\}\})$ and the resulting \LOG\ $(G, \emptyset, \{o_1,o_2\}, x\mapsto XY)$.
\[
\tikzfig{rmk-4-9/1}\hspace*{3em}\DtoD{}\hspace*{3em}\tikzfig{rmk-4-9/2}
\]
 This has two focused Pauli flows, with $c(x)=\{o_1\}$ or $c(x)=\{o_2\}$.
 After inserting a new $YZ$-type measurement $z$ with sole neighbour $x$, there are four focused Pauli flows on the resulting \LOG: either of the correction sets $\{z,o_1\}$ and $\{z,o_2\}$ works for $z$, independent of the choice of correction set for $x$.
\end{remark}

If a \LOG\ represents a unitary operator -- \ie the number of outputs equals the number of inputs -- then focused Pauli flow, if it exists, is unique (since the flow-demand matrix is square and thus has at most one right inverse).
In this case, the possible correction set of the new vertex is entirely determined by the choice of $S$.

\begin{observation}\label{obs:I-O-unique-correction}
 Suppose $\Gamma = (G,I,O,\ld)$ is a \LOG\ such that $\abs{I}=\abs{O}$, and it has Pauli flow.
 Let $S\sse V$ and define $\Gamma'$ to be the \LOG\ that results from inserting a new $YZ$-measured vertex $z$ with neighbourhood $S$ as in Definition~\ref{def:YZ-insertion-gflow}.
 Write $M$ for the flow-demand matrix of $\Gamma$, $\v{x}$ for the indicator vector of $S\cap\Xlike$ in $\comp{O}$, and $\v{k}$ for the indicator vector of an (as yet unknown) set $K\sse\comp{I}$ such that $K\cup\{z\}$ is a focused correction set for $z$.
 Condition~\ref{it:K-focused} is equivalent to $M\v{k} = \v{x}$ by Lemma~\ref{lem:focusing-flow-demand}.

 Since $\Gamma$ has Pauli flow, $M$ has a right inverse $C$.
 Now, $\abs{I}=\abs{O}$ implies that $M$ is square, therefore $C$ is in fact the unique two-sided inverse for $M$.
 Hence $M\v{k}=\v{x}$ if and only if $\v{k} = C\v{x}$.
 The set corresponding to $C\v{x}$ satisfies condition~\ref{it:K-focused} by construction, yet it need not satisfy the other two conditions of Theorem~\ref{thm:YZ-into-Pauli}.
\end{observation}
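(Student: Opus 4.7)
The plan is to verify the statement's assertions in the order they appear, since this observation is essentially a linear-algebraic repackaging of Theorem~\ref{thm:algebraic-Pauli} together with condition~\ref{it:K-focused} of Theorem~\ref{thm:YZ-into-Pauli}. First, I would apply Lemma~\ref{lem:focusing-flow-demand} to translate condition~\ref{it:K-focused} into a matrix equation. That lemma asserts that $K$ is focused over a set $T\sse\comp{O}$ if and only if $(M\v{k})_u = 0$ for every $u\in T$. Taking $T = \comp{O}\setminus(S\cap\Xlike)$ gives the vanishing of $M\v{k}$ off $S\cap\Xlike$; the maximality clause of condition~\ref{it:K-focused} (that this is the \emph{largest} such $T$) then forces $(M\v{k})_u = 1$ for every $u\in S\cap\Xlike$, yielding precisely $M\v{k}=\v{x}$.

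Next, I would invoke the hypothesis that $\Gamma$ has Pauli flow. By Lemma~\ref{lem:focused-Pauli-flow}, $\Gamma$ admits a \emph{focused} Pauli flow, and Theorem~\ref{thm:algebraic-Pauli} then produces a correction matrix $C$ satisfying $MC=\I_{\comp{O}}$, i.e.\ a right inverse of $M$. When $\abs{I}=\abs{O}$, the flow-demand matrix $M$ has dimensions $(n-n_O)\times(n-n_I)$ with $n-n_O = n-n_I$, so $M$ is a square matrix over $\mathbb{F}_2$, and any right inverse of a square matrix over a field is automatically the unique two-sided inverse. Multiplying $M\v{k}=\v{x}$ on the left by $C$ then gives $\v{k}=C\v{x}$, while conversely $\v{k}=C\v{x}$ yields $M\v{k}=MC\v{x}=\v{x}$, establishing the stated equivalence and showing that the candidate correction set is uniquely determined by $S$.

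The last sentence of the observation is then immediate: the set with indicator vector $C\v{x}$ satisfies $M\v{k}=\v{x}$ by construction, hence condition~\ref{it:K-focused}, but conditions~\ref{it:K-S-even} and~\ref{it:order} concern respectively the parity of $\abs{S\cap K}$ and the placement of $K$ and $\odds{G}{K}\symd S$ relative to the partial order $\prec$ — data which lies entirely outside the content of the equation $M\v{k}=\v{x}$ and cannot be inferred from it. A witnessing example can be exhibited if desired, but none is needed for the observation as stated.

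The main obstacle, in so far as one exists, is the careful reading of the maximality clause of condition~\ref{it:K-focused}: being focused over $\comp{O}\setminus(S\cap\Xlike)$ by itself only gives $M\v{k} \leq \v{x}$ in componentwise terms, and it is the maximality that upgrades this to equality. Once this is observed, everything else is standard bookkeeping about square matrices over $\mathbb{F}_2$.
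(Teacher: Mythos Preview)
Your proposal is correct and follows exactly the argument already embedded in the observation itself; the paper does not provide a separate proof for this statement, as the justification is given inline. Your elaboration of the maximality clause (that being focused over $\comp{O}\setminus(S\cap\Xlike)$ alone only gives $(M\v{k})_u=0$ off $S\cap\Xlike$, while maximality forces $(M\v{k})_u=1$ on it) makes explicit what the paper's statement leaves implicit, but the underlying reasoning is the same.
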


In the general case, it is difficult to analyse the complexity of checking the conditions in Theorem~\ref{thm:YZ-into-Pauli} for a given set of neighbours $S$ as the theorem requires only that there exists a Pauli flow which satisfies them.
Yet if the \LOG\ satisfies $\abs{I}=\abs{O}$ as in Observation~\ref{obs:I-O-unique-correction}, then focused Pauli flow is unique and that difficulty disappears.
Suppose this unique Pauli flow on $\Gamma$ is known in the form of the correction matrix $C$ and the product $N C$, where $N$ is the order-demand matrix.
We will assume $M$ and $N$ are known too (otherwise they can be constructed in time $\bigO(n^2)$).
From $S$, we can construct in linear time the indicator vectors $\v{x}$, $\v{n}$ and $\v{z}$, and then compute in $\bigO(n^2)$ the indicator vector of the correction set $\v{k} = C\v{x}$.
Then, again in $\bigO(n^2)$, we can compute the additional row and column for the product $N_{\Gamma'} C_{\Gamma'}$.
Checking whether the resulting matrix represents a directed acyclic graph is also in $\bigO(n^2)$ (see \cite[Section 20.4]{cormenIntroductionAlgorithmsFourth2022}).
Thus the entire process of checking the conditions and updating the flow proceeds in $\bigO(n^2)$, which is more efficient than finding a flow from scratch (the latter has the complexity of matrix multiplication, \ie $\bigO(n^{\log_2 7})$ using Strassen's algorithm).
Thus, if the number of vertices to be inserted is less than $\bigO(n^{\log_2 7 - 2})$, it is more efficient to do this step by step and update the known flow, rather than inserting all the vertices and re-running the flow-finding algorithm.

Another consequence of Observation~\ref{obs:I-O-unique-correction} is that, if $\abs{S\cap K}\equiv 1\bmod 2$ for the unique $K$ that satisfies condition~\ref{it:K-focused}, then it is not possible to insert a $YZ$-measured vertex with neighbourhood $S$, as $u$ would be in the odd neighbourhood of its own correction set, contradicting \ref{P6}.
Yet in that case, the right-hand side of \ref{P5} would be satisfied for a vertex with that neighbourhood: so it may be possible to insert an $XZ$-measured vertex with neighbourhood $S$ instead, assuming the partial order conditions hold.
Thus, in the next section, we will consider inserting vertices with measurement labels other than $YZ$ (or $Z$).

\section{Inserting vertices with other measurement labels}
\label{s:other-labels}

$XZ$-measurements behave similarly to $YZ$-measurements: the focusing conditions are the same for both; equivalently, they give rise to the same types of rows in the flow-demand matrix.
The rows of the order-demand matrix corresponding to an $XZ$- or a $YZ$-measured vertex are also nearly the same, the only difference is at the intersection of the column corresponding to the same vertex: a `1' in that position encodes that $XZ$-measurements are in the odd neighbourhood of their own correction set whereas a `0' in that position encodes that $YZ$-measurements are not.
The flow preservation theorem for $YZ$ insertions can thus straightforwardly be adapted to $XZ$-measurements instead.
Proofs for this section are in Appendix~\ref{s:appendix-other-labels}.

\begin{definition}\label{def:XZ-insertion-gflow}
 Let $\Gamma = (G,I,O,\ld)$ be a \LOG.
 Define $\Gamma'=(G',I,O,\ld')$ to be the \LOG\ that results from inserting a new XZ-measured vertex $u$ with neighbourhood $S\sse V$, where
 $G' = (V\cup\{z\}, E\cup\{\{z,v\} \mid v \in S\})$,
 and $\ld':(\comp{O}\cup\{z\})\to\{X,Y,Z,XY,XZ,YZ\}$ is the extension of $\ld$ satisfying $\ld'(z)=XZ$ and $\ld'(v)=\ld(v)$ for all $v\in\comp{O}$.
\end{definition}

\begin{theorem}\label{thm:XZ-into-Pauli}
 Let $\Gamma = (G,I,O,\ld)$ be a \LOG\ and let $\Gamma'$ be the \LOG\ that results from inserting a new $XZ$-measured vertex $z$ with neighbourhood $S\sse V$ as in Definition~\ref{def:XZ-insertion-gflow}.
 Then $\Gamma'$ has Pauli flow if and only if there exists a Pauli flow $(c,\prec)$ on $\Gamma$ and a set $K\sse\comp{I}$ such that:
 \begin{enumerate}
  \item $K$ is focused over $\comp{O}\setminus(S\cap\Xlike)$, and moreover $\comp{O}\setminus(S\cap\Xlike)$ is the largest set over which $K$ is focused.
  \item $\abs{S\cap K} \equiv 1 \bmod 2$.
  \item Let $\Wpred := \{w\in\comp{O} : \abs{c(w)\cap S}\equiv 1 \bmod 2\}$ and $\Vsucc := \planar\cap (K \cup (\odds{G}{K} \symd S))$, then all $w\in\Wpred$ and all $v\in\Vsucc$ satisfy $\neg(v\prec w \vee v = w)$;
  \ie no vertex of $\Vsucc$ is equal to or precedes any vertex of $\Wpred$ for the original order $\prec$.
 \end{enumerate}
 If the properties hold, the focused Pauli flow on $\Gamma'$is given by the extension $c'$ of $c$ to domain $\comp{O}\cup\{z\}$ satisfying $c'(z) = K\cup\{z\}$,
 and the transitive closure of ${\prec} \cup \{(w,z)\mid w\in\Wpred\} \cup \{(z,v)\mid v\in\Vsucc \}$.
\end{theorem}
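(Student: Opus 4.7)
The plan is to mirror the proof of Theorem~\ref{thm:YZ-into-Pauli} almost verbatim, identifying the single place in the algebraic argument where the label of the inserted vertex enters, and then checking that flipping $YZ$ to $XZ$ there produces exactly the parity change stated in condition~2. Concretely, I would invoke Theorem~\ref{thm:algebraic-Pauli} and look for a focused correction matrix $C_{\Gamma'}$ extending $C_{\Gamma}$ by the column $c'(z) = K \cup \{z\}$, such that $M_{\Gamma'} C_{\Gamma'} = \I_{\comp{O}\cup\{z\}}$ and $N_{\Gamma'}C_{\Gamma'}$ represents a DAG.

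The key observation is that Definitions~\ref{def:flow-demand} and~\ref{def:order-demand} treat $XZ$ and $YZ$ rows identically, \emph{except} at the diagonal of the extended adjacency matrix $\A$: the entry $\A_{z,z}$ is $1$ for $\ld'(z)=XZ$ and $0$ for $\ld'(z)=YZ$. Since $M_{z,z}=1$ in both cases, the equality $(M_{\Gamma'}C_{\Gamma'})_{z,z}=1$ still holds trivially, and the off-diagonal entries $(M_{\Gamma'}C_{\Gamma'})_{u,z}$ for $u\in\comp{O}$ depend only on $S$ and $K$, not on $\ld'(z)$. Hence condition~1 (focusing of $K$ over $\comp{O}\setminus(S\cap\Xlike)$, with that set being maximal) is unchanged. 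Similarly, the newly added column of $C_{\Gamma'}$ contributes to $(M_{\Gamma'}C_{\Gamma'})_{z,u}$ for $u\in\comp{O}$ in exactly the same way as in the $YZ$ case, since $M$'s $z$-th row is the indicator of $\{z\}$ regardless of whether $\ld'(z)$ is $XZ$ or $YZ$.

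The label-sensitive entry is $(N_{\Gamma'}C_{\Gamma'})_{z,z}$. Unpacking the definitions,
\[
(N_{\Gamma'}C_{\Gamma'})_{z,z} \;=\; \sum_{w\in K\cup\{z\}} \A^{G'}_{z,w} \;=\; \abs{K\cap S} + [\,\ld'(z)=XZ\,] \pmod 2,
\]
using that $z$'s neighbours in $G'$ are exactly $S$ and that $\A^{G'}_{z,z}=1$ iff $\ld'(z)=XZ$. Requiring this to vanish (no self-loop in the DAG) therefore forces $\abs{K\cap S}\equiv 1\bmod 2$ in the $XZ$ case instead of $\equiv 0$ as in the $YZ$ case; this is precisely condition~2. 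For every other entry of $N_{\Gamma'}C_{\Gamma'}$, the reasoning of Theorem~\ref{thm:YZ-into-Pauli} carries over unchanged: using Observation~\ref{obs:NC-interpretation}, the off-diagonal $z$-row encodes successors of $z$, which by focusing of $K$ lie in $\planar\cap(K\cup(\odds{G}{K}\symd S))=\Vsucc$; the off-diagonal $z$-column encodes predecessors of $z$, which correspond exactly to vertices $w$ with $\abs{c(w)\cap S}$ odd, i.e.\ to $\Wpred$. Acyclicity of $N_{\Gamma'}C_{\Gamma'}$ (together with acyclicity of $N_{\Gamma}C_{\Gamma}$, \ie with the induced order of $c$ on $\Gamma$) then reduces exactly to condition~3, just as in the $YZ$ proof.

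The only potential obstacle would be if the parity flip on the diagonal of $N_{\Gamma'}C_{\Gamma'}$ had knock-on effects elsewhere, but it does not: the diagonal of $\A$ at $z$ interacts only with the diagonal of $C_{\Gamma'}$ at $z$, and all other relevant sums are unchanged. Hence the proof reduces to a one-line replacement in the calculation of $(N_{\Gamma'}C_{\Gamma'})_{z,z}$, yielding the stated theorem. The explicit construction of $c'$ and of the extended order as the transitive closure of ${\prec}\cup\{(w,z):w\in\Wpred\}\cup\{(z,v):v\in\Vsucc\}$ then follows verbatim from Theorem~\ref{thm:YZ-into-Pauli}, with Lemma~\ref{lemma_order} (or its Pauli-flow analogue used there) ensuring the closure is a strict partial order exactly when condition~3 holds.
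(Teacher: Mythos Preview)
Your proposal is correct and is essentially the same argument as the paper's: the paper's proof consists of a single sentence noting that everything in the proof of Theorem~\ref{thm:YZ-into-Pauli} carries over verbatim except that $N_{\Gamma'} = \smm{N & \v{z} \\ \v{n} & 1}$ (the bottom-right entry becoming $1$ because $\A_{z,z}=1$ for $\ld'(z)=XZ$), so that the $(z,z)$ entry of $N_{\Gamma'}C_{\Gamma'}$ becomes $\v{n}\cdot\v{k}+1$ and vanishing now requires $\abs{S\cap K}\equiv 1\bmod 2$. You have unpacked this in more detail but identified exactly the same mechanism and the same single point of divergence.
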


\begin{proof}
 The proof is analogous to that of Theorem~\ref{thm:YZ-into-Pauli} with the only difference being that $N_{\Gamma'} = \smm{N & \v{z} \\ \v{n} & 1}$, which corresponds to the changed requirement that $\abs{S\cap K} \equiv 1 \bmod 2$.
\end{proof}

We have now shown how to insert $YZ$ or $XZ$-measured vertices while preserving flow, and it was already known that $Z$-insertion preserves the existence Pauli flow (cf.\ Proposition~\ref{prop:Z-insertion}).
Now, inserting an $X, Y$ or $XY$ measurements into an MBQC pattern must change the interpretation, so it does not make sense to look for insertion rules for those measurement types.
Nevertheless, we can use the existing insertion rules in and then apply local complementations or pivots that change the measurement label of the new vertex.
This effectively inserts a measurement of a different type into the pattern, at the cost of introducing additional changes elsewhere in the \LOG.

Indeed, with just one appeal to standard ZX-calculus rewrite rules to handle phase labels and a new lemma that explicitly shows how flow changes under the pivot operation, we can use $YZ$-insertions to prove the `vertex splitting' or `neighbour unfusion' rule that has previously been used in various forms in the literature about rewriting MBQC and ZX-calculus diagrams \cite{staudacherReducing2QuBitGate2023,mcelvanneyFlowpreservingZXcalculusRewrite2023,holkerCausal2023}.
This connection was previously pointed out in \cite[p.~215]{mcelvanneyFlowpreservingZXcalculusRewrite2023}.
The vertex splitting rule always preserves Pauli flow if the new phase-0 vertex is taken to be an $X$-measurement \cite[Corollary 6.1]{mcelvanneyFlowpreservingZXcalculusRewrite2023}, yet if that vertex is taken to be an $XY$-measurement, flow is not always preserved \cite[Section~5]{staudacherReducing2QuBitGate2023}.

\begin{lemma}\label{lem:pivot-Pauli-flow}
 Suppose the \LOG\ $(G,I,O,\ld)$ has Pauli flow $(c,\prec)$.
 Let $u,v\in\comp{I}\cap\comp{O}$ such that $\{u,v\}\in E$.
 Define $\ld':\comp{O}\to\{XY,XZ,YZ,X,Y,Z\}$ and $c':\comp{O}\to\pow{\comp{I}}$ as
 \[
  \ld'(w) := \begin{cases}
              Z &\text{if } w\in\{u,v\}\wedge\ld(w) = X \\
              X &\text{if } w\in\{u,v\}\wedge\ld(w) = Z \\
              YZ &\text{if } w\in\{u,v\}\wedge\ld(w) = XY \\
              XY &\text{if } w\in\{u,v\}\wedge\ld(w) = YZ \\
              \ld(w) &\text{otherwise.}
             \end{cases}
  \qquad
  c'(w) := \begin{cases}
            c(w) &\text{if } u,v\notin\codds{G}{c(w)} \\
            c(w)\symd\{u\} &\text{if } u\in\codds{G}{c(w)} \not\ni v \\
            c(w)\symd\{v\} &\text{if } u\notin\codds{G}{c(w)} \ni v \\
            c(w)\symd\{u,v\} &\text{if } u,v\in\codds{G}{c(w)},
           \end{cases}
 \]
 then $(G\wedge uv, I, O, \ld')$ has Pauli flow $(c',\prec)$.
 Moreover, for any $w\in\comp{O}$, we have $\codds{G\wedge uv}{c'(w)} = \codds{G}{c(w)}$.
 If $(c,\prec)$ is focused and $\ld(u),\ld(v)\neq XZ$, then $(c',\prec)$ is also focused.
 Finally, if $(c,\prec)$ is a gflow, then $(c',\prec)$ is also a gflow.
\end{lemma}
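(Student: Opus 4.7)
My plan is to reduce the entire lemma to the one closed odd neighbourhood identity $\codds{G\wedge uv}{c'(w)} = \codds{G}{c(w)}$ for every $w\in\comp{O}$, which is in fact already asserted in the lemma but is also the crux of the Pauli flow verification: the definition of $c'$ is engineered precisely so that toggling $u$ and/or $v$ in and out of $c(w)$ compensates for the rewiring performed by the pivot. First I would establish this identity by exploiting the decomposition $G\wedge uv = G\star u\star v\star u$ together with the standard formula describing how closed odd neighbourhoods behave under local complementation: toggling the edges within $N_G(u)$ changes $\codds{}{A}$ by a set determined by the parities $\abs{A\cap N_G(u)}\bmod 2$ and the membership $u\in A$. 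Composing three such steps and tracking which of $u,v$ must be toggled in $A$ to cancel the net effect yields the identity; the cleanest organisation is a four-case analysis on the memberships of $u$ and $v$ in $\codds{G}{c(w)}$, matching the four branches in the definition of $c'(w)$.

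With the identity in hand, I would verify conditions \ref{P1}--\ref{P9} for $(c',\prec)$ on $(G\wedge uv, I, O, \ld')$ by splitting on whether the vertex under consideration lies in $\{u,v\}$ or not. For $w\notin\{u,v\}$, the label is unchanged, and the identity combined with the observation that $c'(w)$ and $c(w)$ can differ only on $\{u,v\}$ implies that for any $x\notin\{u,v\}$ the memberships $x\in c'(w)$ and $x\in\odds{G\wedge uv}{c'(w)}$ agree with the originals, so the parts of \ref{P1}--\ref{P3} that do not mention $u,v$ are immediate. The parts that do mention $u$ or $v$ are handled by the label swap: $X\leftrightarrow Z$ and $XY\leftrightarrow YZ$ is precisely the exchange that converts an ``in $c$ with $\ld\notin\{X,Y\}$'' clause into an ``in $\odd{c}$ with $\ld\notin\{Y,Z\}$'' clause, so the requirements imposed on the common partial order $\prec$ are the same as those $(c,\prec)$ already satisfies. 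For $w\in\{u,v\}$, conditions \ref{P4}--\ref{P9} transpose correctly because $\ld'(w)$ has swapped, and the identity together with the persistence of $\{u,v\}\in E$ under pivoting gives the correct memberships of $w$ in $c'(w)$ and in $\odds{G\wedge uv}{c'(w)}$.

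The focused and gflow statements then follow more easily. Focusing conditions \ref{F1}--\ref{F3} for each $c'(w)$ reduce to checking memberships of vertices in $c'(w)$, $\odds{G\wedge uv}{c'(w)}$, and $\codds{G\wedge uv}{c'(w)}$, all of which are controlled by the identity up to the status of $u$ and $v$; the exclusion $\ld(u),\ld(v)\neq XZ$ rules out the only remaining pathology, namely the self-loop at an $XZ$-vertex in the extended adjacency matrix, which would otherwise spoil the parity bookkeeping at $u$ or $v$ after the label swap. The gflow claim is immediate since $\ld'$ permutes the planar labels $\{XY,XZ,YZ\}$ among themselves ($XZ$ fixed, $XY\leftrightarrow YZ$), so no Pauli labels are introduced. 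The main technical obstacle will be the closed odd neighbourhood identity itself: although each of its four cases is a direct computation, tracking the parities through three successive local complementations requires careful combinatorial bookkeeping, and everything else is largely mechanical once this identity is in place.
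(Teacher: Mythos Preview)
Your proposal is correct and follows essentially the same architecture as the paper. The one organisational difference is that the paper does not derive the closed odd neighbourhood identity by composing three local complementations; instead it invokes a known formula (Lemma~\ref{lem:pivot-odd-neighbourhoods}) giving $\odds{G\wedge uv}{A}$ directly as a function of which of $u,v$ lie in $\codds{G}{A}$, applies it with $A=c'(w)$ after first computing which of $u,v$ lie in $\codds{G}{c'(w)}$ via the rule $\codds{G}{A\symd\{u\}}=\codds{G}{A}\symd N_G[u]$, and reads off $\odds{G\wedge uv}{c'(w)}=\odds{G}{c(w)}\symd S$ with $S$ the same subset of $\{u,v\}$ used to pass from $c(w)$ to $c'(w)$. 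Your three-step route via $G\star u\star v\star u$ would essentially reprove that lemma along the way; it works but is more bookkeeping. For the Pauli-flow verification your uniform ``swap'' argument---that $u\in c'(w)\Leftrightarrow u\in\odds{G}{c(w)}$ and $u\in\odds{G\wedge uv}{c'(w)}\Leftrightarrow u\in c(w)$, so the label permutation $X\leftrightarrow Z$, $XY\leftrightarrow YZ$ exactly interchanges \ref{P1} with \ref{P2} and the paired self-conditions \ref{P4}/\ref{P6}, \ref{P7}/\ref{P8}---is a cleaner packaging of what the paper does by an explicit case split on $\ld(u)$. One minor point: your diagnosis of why $XZ$ breaks focusing is not quite right; it is not an extended-adjacency self-loop issue but simply that $XZ$ is the one label fixed by the label permutation while the membership swap still moves $u$ from $\odds{G}{c(w)}$ (permitted by \ref{F2}) into $c'(w)$ (forbidden by \ref{F1}).
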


Interestingly, this lemma is rather simpler than equivalent ones for local complementations \cite{backensThereBackAgain2021,simmonsRelatingMeasurementPatterns2021}, despite pivots arising from the former.
We are now ready to consider the vertex splitting operation.

\begin{definition}\label{def:vertex-splitting}
 Let $\Gamma=(G,I,O,\ld)$ be a \LOG\, let $x\in\comp{O}\cap\comp{I}$ be such that $\ld(x)=XY$, and let $W\sse V\setminus\{x\}$.
 The \LOG\ $\Gamma'=(G',I,O,\ld')$ that results from splitting $x$ over the set $W$ is defined as follows:
 \begin{itemize}
  \item Let $V' = V \cup \{x',x''\}$, \ie we add two new vertices $x',x''$ to the graph.
  \item Let $E' = E \symd \{\{x,x'\},\{x',x''\}\} \symd (\{x,x''\}\times W$), \ie we connect $x$ to $x'$ and $x'$ to $x''$, toggle the edges between $x$ and elements of $W$, and add edges between $x''$ and elements of $W$.
  \item Let $\ld'$ be the extension of $\ld$ to domain $V'$ which satisfies $\ld'(x')=\ld'(x'')=XY$.
 \end{itemize}
\end{definition}

This definition is slightly more general than those appearing in the literature, which usually restrict $W$ to be a subset of $N_G(x)$ or even just a single element of $N_G(x)$ \cite{staudacherReducing2QuBitGate2023}.

We will explicitly state the condition of the vertex splitting rule for the case $\abs{I}=\abs{O}$ only.
This is both because the motivating neighbour unfusion rule is mainly used within diagrams arising from circuits -- \ie where the numbers of inputs and outputs are equal -- and because the flow-preservation conditions get even more complicated in the general case where we would need to consider two separate unknown sets of vertices, one for the correction set of $x'$ and the other for $x''$.

\begin{theorem}\label{thm:vertex-splitting}
 Let $\Gamma=(G,I,O,\ld)$ be a \LOG\ such that $\abs{I}=\abs{O}$, let $x\in\comp{O}\cap\comp{I}$ be such that $\ld(x)=XY$, and let $W\sse V\setminus\{x\}$.
 Suppose $\Gamma'$ is the \LOG\ that results from splitting $x$ over the set $W$ as in Definition~\ref{def:vertex-splitting}.
 Then $\Gamma'$ has Pauli flow if and only if there exists a focused Pauli flow $(c,\prec)$ on $\Gamma$ such that:
 \begin{enumerate}
  \item[{\crtcrossreflabel{(V1)}[VS1]}] If $K$ is the unique set that is both focused over $\comp{O}\setminus (W\cap\Xlike)$ and has the property that $\comp{O}\setminus (W\cap\Xlike)$ is the largest set over which $K$ is focused, then $\abs{W\cap K} \equiv 0 \bmod 2$.
  \item[{\crtcrossreflabel{(V2)}[VS2]}] If $\Wpred := \{w\in\comp{O} : \abs{c(w)\cap W}\equiv 1 \bmod 2\}$ and $\Vsucc := \mathcal{L}\cap (K\cup(\odds{G}{K}\symd W))$, then all $w\in\Wpred$ and all $v\in\Vsucc$ satisfy $\neg(v\prec w \vee v = w)$;
  \ie no vertex of $\Vsucc$ is equal to or precedes any vertex of $\Wpred$ for the original order $\prec$.
  \item[{\crtcrossreflabel{(V3)}[VS3]}] For all $u\in V$, we have $x\prec u \implies u\notin\Wpred$ and $u\prec x \implies u\notin\Vsucc$;
  \ie vertices in $\Wpred$ do not succeed $x$ and vertices in $\Vsucc$ do not precede $x$ for the original order $\prec$.
  \item[{\crtcrossreflabel{(V4)}[VS4]}] $x\in K$ if and only if $\abs{c(x)\cap W}\equiv 0\bmod 2$.
 \end{enumerate}
 If all measurements in $\Gamma$ are planar, then so are all measurements in $\Gamma'$, thus the same conditions imply preservation of gflow.
\end{theorem}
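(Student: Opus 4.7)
The strategy is to decompose the vertex-splitting operation of Definition~\ref{def:vertex-splitting} into three flow-preserving moves already characterised in this paper: two $YZ$-insertions (Theorem~\ref{thm:YZ-into-Pauli}) followed by a pivot (Lemma~\ref{lem:pivot-Pauli-flow}). Specifically, $\Gamma'$ coincides with the \LOG\ obtained from $\Gamma$ by first inserting $x'$ as a $YZ$-measured vertex with neighbourhood $W$, then inserting $x''$ as a $YZ$-measured vertex with neighbourhood $\{x, x'\}$, and finally pivoting on the edge $\{x', x''\}$. Unfolding $G\wedge x'x'' = G\star x'\star x''\star x'$ confirms this: immediately before the pivot, the disjoint-set decomposition $A = N(x')\setminus N[x''] = W$, $B = N(x'')\setminus N[x'] = \{x\}$, $C = N(x')\cap N(x'') = \emptyset$ yields the post-pivot neighbourhoods $N_{\Gamma'}(x') = \{x, x''\}$ and $N_{\Gamma'}(x'') = W\cup\{x'\}$, toggles the edges between $x$ and $W$, and, by Lemma~\ref{lem:pivot-Pauli-flow}, switches the labels of $x'$ and $x''$ from $YZ$ to $XY$. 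This matches Definition~\ref{def:vertex-splitting} exactly.

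For the ``if'' direction I would apply Theorem~\ref{thm:YZ-into-Pauli} to each insertion. Since $|I|=|O|$ is preserved throughout, Observation~\ref{obs:I-O-unique-correction} pins down the correction sets uniquely: for the first insertion, the set $K$ of~\ref{VS1} is the unique $C\v{w}_X$ where $\v{w}_X$ is the indicator of $W\cap\Xlike$, and conditions~\ref{VS1} and~\ref{VS2} restate the hypotheses of Theorem~\ref{thm:YZ-into-Pauli} for this step. For the second insertion of $x''$ into the intermediate \LOG\ $\Gamma_1$, a direct calculation with the block-triangular flow-demand matrix $M_{\Gamma_1} = \smm{M & \v{w}_X \\ \v{0}^T & 1}$ shows that the unique correction set for $x''$ must be $c(x)\cup\{x''\}$, and the parity condition on $|\{x, x'\}\cap c(x)|$ is automatic (since $x\notin c(x)$ by~\ref{P4} and $x'\notin\comp{I}$). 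The partial-order condition of the second insertion splits into a ``self-cycle'' constraint at $x'$ itself and constraints at other vertices: the former produces~\ref{VS4}, as one computes $x'\in\Wpred$ iff $x\notin K$ and $x'\in\Vsucc$ iff $|c(x)\cap W|$ is even, so these must not coincide; the latter gives~\ref{VS3}, which pins down how $x$ sits in the order $\prec$ relative to the relevant sets. Lemma~\ref{lem:pivot-Pauli-flow} then transports the flow through the pivot, adjusting correction sets by $\{x', x''\}$-subsets as prescribed.

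For the ``only if'' direction I would reverse all three steps: pivot $\{x', x''\}$ back via Lemma~\ref{lem:pivot-Pauli-flow}, then delete $x''$ and $x'$ using Corollary~\ref{cor:YZ-deletion}, yielding a focused Pauli flow on $\Gamma$; the ``only if'' halves of Theorem~\ref{thm:YZ-into-Pauli} applied to the two deletions, together with uniqueness of the focused correction sets, extract conditions~\ref{VS1}--\ref{VS4}. The main obstacle is the careful unpacking of the second insertion's conditions, especially~\ref{VS4}: one must trace precisely how the new vertex $x'$ enters $\Wpred$ and $\Vsucc$ for the second insertion through the parities of $x\in K$ and $|c(x)\cap W|$, and verify that the required ``no self-cycle'' requirement produces exactly the biconditional of~\ref{VS4}. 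Once this correspondence is nailed down, together with the bookkeeping for~\ref{VS3}, the remainder of the argument is a routine combination of Theorem~\ref{thm:YZ-into-Pauli} and Lemma~\ref{lem:pivot-Pauli-flow}.
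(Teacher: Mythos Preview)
Your decomposition differs from the paper's: you insert $x'$ (neighbours $W$) first and then $x''$ (neighbours $\{x,x'\}$), whereas the paper (Lemma~\ref{lem:vertex-splitting-steps}) inserts $x''$ with single neighbour $\{x\}$ first and then $x'$ with neighbours $W\cup\{x''\}$. Both sequences produce the same $\Gamma'$ after the pivot, and both reduce the proof to two applications of Theorem~\ref{thm:YZ-into-Pauli} plus Lemma~\ref{lem:pivot-Pauli-flow}. The paper's ordering has the advantage that its first insertion is \emph{always} flow-preserving by Corollary~\ref{cor:YZ-insertion-single-neighbour}, so all of \ref{VS1}--\ref{VS4} arise from a single application of Theorem~\ref{thm:YZ-into-Pauli}, with $x''$ sitting in $\prec_2$ exactly like $x$ (which is what makes \ref{VS3} transparent). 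Your ordering makes \ref{VS1} and \ref{VS2} immediate but complicates the extraction of \ref{VS3} and \ref{VS4}. The self-cycle at $x'$ in your second step forbids only $(x\notin K)\wedge(|c(x)\cap W|\text{ even})$, which is one direction of \ref{VS4}; the other direction already follows from \ref{VS2}, since $x\in K$ together with $|c(x)\cap W|$ odd puts $x$ itself into both $\Wpred$ and $\Vsucc$. For \ref{VS3}, note that your second insertion's predecessor and successor sets restricted to $V$ are $\{w:w\trl_c x\}$ and $\{v:x\trl_c v\}$ --- the immediate $\trl_c$-neighbours of $x$ --- rather than $\Wpred,\Vsucc$ themselves, and the relevant order is the enlarged $\prec_{c_1}$ in which $x'$ sits between $\Wpred$ and $\Vsucc$; recovering \ref{VS3} then requires chasing chains through $x'$ in $\prec_{c_1}$. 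The approach is sound, but the paper's choice of insertion order sidesteps this extra layer of bookkeeping.
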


Regarding the restricted version of neighbour unfusion, where $W$ consists of exactly one neighbour of $x$ (as used by Staudacher et al.~\cite{staudacherReducing2QuBitGate2023}), we find the following corollary.

\begin{corollary}\label{cor:neighbour-unfusion}
 Let $\Gamma=(G,I,O,\ld)$ be a \LOG\ such that $\abs{I}=\abs{O}$, let $a\in\comp{O}\cap\comp{I}$ be such that $\ld(a)=XY$, and let $b\in N_G(a)$ be such that $\ld(b)=XY$.
 Suppose $\Gamma'$ is the \LOG\ that results from splitting $a$ over the set $W:=\{b\}$ as in Definition~\ref{def:vertex-splitting}.
 Then $\Gamma'$ has Pauli flow if and only if there exists a focused Pauli flow $(c,\prec)$ on $\Gamma$ such that:
 \begin{itemize}
  \item Either $a\in c(b)$ or $b\in c(a)$; \ie one of the two vertices is in the correction set of the other.
  \item For all $u\in V$, we have $a\prec u \implies \neg(u\prec b)$ and $u\prec a \implies \neg(b\prec u)$; \ie there is no third vertex in
  between $a$ and $b$ in the original order $\prec$.
 \end{itemize}
 If all measurements in $\Gamma$ are planar, then so are all measurements in $\Gamma'$, thus the same conditions imply preservation of gflow.
\end{corollary}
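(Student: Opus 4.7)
My plan is to apply Theorem~\ref{thm:vertex-splitting} with $W=\{b\}$ and simplify each of the four conditions \ref{VS1}--\ref{VS4} using the hypotheses $\ld(a)=\ld(b)=XY$ and $\abs{I}=\abs{O}$. The first step is to identify $K$ explicitly: since $b\in\Xlike$, the indicator vector of $W\cap\Xlike=\{b\}$ has a single $1$ in position $b$, so multiplying by $C$ picks out the $b$-column of the correction matrix, which encodes $c(b)$; hence $K=c(b)$ by Observation~\ref{obs:I-O-unique-correction}. Condition \ref{VS1} then reads $\abs{\{b\}\cap c(b)}\equiv 0\bmod 2$, which holds automatically because \ref{P4} applied to the $XY$-measured vertex $b$ forces $b\notin c(b)$.

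Next I would simplify \ref{VS4}: with $W=\{b\}$ it reads $a\in c(b)\iff b\notin c(a)$, i.e.\ exactly one of $a\in c(b)$ and $b\in c(a)$ holds. In any focused Pauli flow both inclusions cannot hold simultaneously, since with $\ld(a)=\ld(b)=XY$, Observation~\ref{obs:NC-interpretation} (equivalently \ref{P1}) shows $a\in c(b)$ forces $b\prec a$ while $b\in c(a)$ forces $a\prec b$, contradicting strictness of $\prec$. Thus \ref{VS4} is equivalent to ``at least one of $a\in c(b)$, $b\in c(a)$ holds'', which is the corollary's first condition.

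The main work is reducing \ref{VS2} and \ref{VS3} to the corollary's second condition. Using the focusing of $c(b)$: for any $v\in\Vsucc\setminus\{b\}$, either $v\in c(b)$ with $\ld(v)=XY$ by \ref{F1}, or $v\in\odd{c(b)}\setminus\{b\}$ with $\ld(v)\in\{XZ,YZ\}$ by \ref{F2}; in either case Observation~\ref{obs:NC-interpretation} yields $b\prec v$. Dually, any $w\in\Wpred$ satisfies $b\in c(w)$ with $\ld(b)=XY$, forcing $w\prec b$. Also $b\notin\Vsucc$: \ref{P4} gives $b\in\odd{c(b)}$ and $b\notin c(b)$, so $b\notin\odd{c(b)}\symd\{b\}$. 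Transitivity of $\prec$ then gives $w\prec v$ for all $w\in\Wpred$ and $v\in\Vsucc$, so \ref{VS2} is automatic. Condition \ref{VS3} unpacks directly: $a\prec u\Rightarrow u\notin\Wpred$ becomes ``no $u$ satisfies $a\prec u\prec b$'', and $u\prec a\Rightarrow u\notin\Vsucc$ becomes ``no $u$ satisfies $b\prec u\prec a$''; together these are exactly the corollary's second condition.

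The main subtlety will be tracking the direction of $\prec$ when reading order relations off from membership in $c(\cdot)$ or odd neighbourhoods; once this is pinned down via Observation~\ref{obs:NC-interpretation}, the rest is essentially bookkeeping. The gflow claim follows exactly as in Theorem~\ref{thm:vertex-splitting}, since the split vertices $x',x''$ are declared $XY$-measured so $\Gamma'$ remains all-planar whenever $\Gamma$ is.
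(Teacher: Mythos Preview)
Your approach is essentially the paper's: apply Theorem~\ref{thm:vertex-splitting} with $W=\{b\}$, identify $K=c(b)$ via Observation~\ref{obs:I-O-unique-correction}, verify that \ref{VS1} and \ref{VS2} hold automatically, and reduce \ref{VS4} to the first bullet. Your remark that $a\in c(b)$ and $b\in c(a)$ cannot both hold (via \ref{P1} and strictness of $\prec$) is a detail the paper leaves implicit.

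There is, however, a genuine gap in your treatment of \ref{VS3}. What you have actually established is $\Wpred\subseteq\{w\mid w\prec b\}$ and $\Vsucc\subseteq\{v\mid b\prec v\}$, so the second bullet implies \ref{VS3}; but your claim that ``\ref{VS3} becomes the second bullet'' is not an equivalence for an arbitrary Pauli-flow order $\prec$. A vertex $u$ with $a\prec u\prec b$ need not itself lie in $\Wpred$: it may precede $b$ only transitively, through a chain of $\trl_c$-steps none of which is $u$ itself. The paper closes this by first replacing $\prec$ with the induced order $\prec_c$ (noting that shrinking $\prec$ can only make \ref{VS2} and \ref{VS3} easier to satisfy, so \ref{VS1}--\ref{VS4} persist) and then running a chain argument: if $a\prec_c u\prec_c b$, the $\trl_c$-chain witnessing $u\prec_c b$ ends with some step $w\trl_c b$; since $c$ is focused and $\ld(b)=XY$ this forces $b\in c(w)$, i.e.\ $w\in\Wpred$, and $a\prec_c u\preceq_c w$ then contradicts \ref{VS3}. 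The case $b\prec u\prec a$ is handled symmetrically via $\Vsucc$. Without this step your plan establishes only the forward direction of the biconditional.
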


This shows that the necessary and sufficient flow-preservation criterion derived by McElvanney for neighbour unfusion in the setting of gflow~\cite{McElvanneyThesis} generalises to Pauli flow as well.

\begin{remark}
 The conditions of Corollary~\ref{cor:neighbour-unfusion} are symmetric under interchange of $a$ and $b$.
 This reflects the fact that, when both new vertices are treated as having planar measurements (as we have done), then splitting $a$ over the set $\{b\}$ results in the same labelled open graph as splitting $b$ over the set $\{a\}$.

 If only the first insertion of Lemma~\ref{lem:vertex-splitting-steps} is treated as a planar measurement and the second insertion is treated as a Pauli measurement, then the existence of Pauli flow is preserved for all choices of~$W$ \cite[Proposition~5.1]{mcelvanneyFlowpreservingZXcalculusRewrite2023}.
 On the other hand, gflow is not preserved since $\Gamma'$ contains at least one non-planar measurement, and the symmetry between $a$ and $b$ is broken.
\end{remark}

\section{Conclusions}
\label{s:conclusions}

We have derived conditions under which planar-measured qubits can be inserted into one-way computations without breaking different types of flow.
This is a step towards a better theoretical understanding of flow-preserving rewriting for MBQC and for the ZX-calculus.
In applications such as the reduction of two-qubit gate counts, it is desirable to remain within the framework of gflow instead of inserting Pauli-measured vertices that have a different extraction procedure \cite{staudacherReducing2QuBitGate2023}: we have characterised when this is possible.

By choosing suitable measurement angles, the insertion of planar measurements into a one-way computation or a ZX-diagram can be performed without changing the semantics, i.e.\ without changing the linear operator that is implemented; this is desired for example in the above two-qubit gate count reduction procedure.
Yet by allowing measurement angles to vary, flow-preserving insertion of one (or more) planar measurements can also increase the range of linear operators that are implementable using a given MBQC pattern or ZX-diagram.
This has applications in the generation of ans\"atze for variational eigensolvers within measurement-based \cite{fergusonMeasurement-based2021} or ZX-based settings \cite{ewenApplication2024}.

In the common case of working with unitary computations or ZX-diagrams (where the number of inputs is equal to the number of outputs), checking the flow-preservation conditions and then updating the gflow or Pauli flow is more efficient than finding a flow on the new \LOG\ from scratch.
The same is not guaranteed if there are more outputs than inputs: in that situation, there are generally different flows on the same \LOG, and the insertion may be flow-preserving for some and not others.
The question of whether there are efficient ways to modify a known flow to make it compatible with an insertion for a given neighbourhood is left to future work.

\subsection*{Acknowledgments}

The authors would like to thank Tommy McElvanney, Piotr Mitosek, and Korbinian Staudacher for interesting and useful conversations during the course of the internship on which this paper is based.

\bibliographystyle{eptcs}
\bibliography{refs}

\appendix
\section{Proofs for Section~\ref{s:causal}}
\label{s:appendix-causal}

    \begin{proof}[Proof of Lemma~\ref{lemma_remove_z}]
		Let $u \in V$. Assume for a contradiction that $c'(u) = z$.
        As $u \neq z$, we have by condition~\ref{extcflow3} that $\lambda(u) = XY$, which implies by condition~\ref{extcflow2} that $z \in N_G(u)$.
        Then, condition~\ref{extcflow4} gives $z = c'(u) \prec' u$, which contradicts condition~\ref{extcflow1}.

        Therefore, $c'^{-1}(z) = \{z\}$, and $c'_{|V}$ is well defined.

        We have that $c'_{|V}(u) = c'(u) \in V$. Then as $u \prec' c'(u)$ or $u = c'(u)$, we have $u \prec'_{|V\times V} c'_{|V}(u)$ or $u = c'_{|V}(u)$.
        Condition~\ref{extcflow1} is therefore satisfied.
        Similarly, as the connectivity, labelling and correction are the all induced by $\Gamma'$, and as nothing changes on $V$,
        conditions~\ref{extcflow2},~\ref{extcflow3} and~\ref{extcflow4} are satisfied.
	\end{proof}

    \begin{proof}[Proof of Lemma~\ref{lemma_order}]
		Let's first show the direct implication by contraposition.

			Let $v,w \in S_1 \times S_2$ such that $ w \prec v \vee v = w$. We have $v\in S_1 \Rightarrow v \prec' z$ and  $w\in S_2 \Rightarrow z \prec' w$ which gives $v \prec' w$.
			As we have $w \prec v \vee v = w$, in both cases we get $w \prec' w$, and therefore $\prec'$ is not strict.

		Now, for the converse implication, suppose that for all $v\in S_1, w\in S_2. \neg(w \prec v \vee v = w)$. Let us show
        that $\prec'$ is a strict order.

				We directly have that $\prec'$  is transitive as the transitive closure of
				$\prec''$. Assume for a contradiction that it is not strict.

				Let $v \in V\cup \{z\}$ such that $v\prec'v$.
				Then as $\prec'$ is the transitive closure of $\prec''$, we have \[\exists n >1: \exists (v_1,\ldots,v_n) \in V: v = v_1 \prec'' v_2 \prec'' \ldots \prec'' v_n = v\]
				Denote by $[n]$ the set of all integers between 1 and $n$ (inclusive).

				If $n = 2$, we have $v \prec'' v$, which can only be true if $v \prec v$ due to the definition of $\prec''$. This contradicts the strictness of $\prec$.

				If $n>2$, the same argument can be used if we have $v_1 \prec v_2 \prec \ldots \prec v_n$.

                If not, the construction of $\prec''$ gives
					that \[\exists i \in [n] : v_i = z\]

					If $v \neq z$, denote \[i_{min} = \min\{i \in [n-1]\setminus\{1\} \mid v_i = z \},\hspace*{1em} i_{max} = \max\{i \in [n-1]\setminus\{1\} \mid v_i = z \}.\]
                        Then if we denote ${\preceq} = {\prec} \cup {=}$, we have that $v \preceq v_{i_{min}-1} $ and $v_{i_{max}+1} \preceq v$.
						This gives $v_{i_{max}+1} \preceq v  \preceq v_{i_{min}-1}$ with $v_{i_{min}-1} \in S_1$, and $v_{i_{max}+1} \in S_2$
						giving that either $v_{i_{max}+1} \prec v_{i_{min}-1}$ or $v_{i_{max}+1} = v_{i_{min}-1}$, which contradicts the order hypothesis in both cases.

					Now if $v = z$, we have that $v_2 \neq z$. We then have $v_2 \prec' z \prec' v_2$. We can then use the previous argument with $v \leftarrow v_2$ to get a contradiction.
	\end{proof}

	\begin{proof}[Proof of Theorem~\ref{characcausal}]
		Let us begin with the direct implication.

			Let's suppose that $G'$ has an extended causal flow $(c', \prec')$. Then let's suppose for a contradiction that
			\[\exists v \in S : \exists w \in c'^{-1}(S): v \prec w\]
		    As $v \in S$ we have $v \in N_{G'}(z) = N_{G'}(c'(z))$ and therefore $z \prec' v$.
            Besides, $w \in c'^{-1}(S) $ gives that $z \in N_{G'}(c'(w))$, and therefore $w \prec' z$.
            This gives $z \prec' z$ which contradicts the strictness of $\prec'$.
            A similar argument holds if $\exists v \in S : \exists w \in c'^{-1}(S): v = w$.

            Therefore, $(c', \prec')$ is such that $\forall v \in S, \forall w \in c'^{-1}(S), \neg(v \prec' w)$. Using Lemma~\ref{lemma_remove_z} to remove $z$, we get
			an extended causal flow $(c,\prec)$ for  $(G,I,O,\lambda)$ with the same property, implying $\forall v \in S, \forall w \in c^{-1}(S), \neg(v \prec w \vee v = w)$.
			\\

        Let us now show the converse implication.
            Let $(c,\prec)$ be an extended causal flow for $(G,I,O,\lambda)$ such that $\forall v \in S, \forall w \in f^{-1}(S), \neg(v \prec w \vee v = w)$.
            Let $(c', \prec')$ be as in Definition~\ref{def:c_prime}.
            Using Lemma~\ref{lemma_order} with $S_1 \leftarrow c^{-1}(S)$ and $S_2 \leftarrow S$ we get that $\prec'$ is a strict order.

            Let us now show that $(c', \prec')$ is an extended causal flow.
            One can easily check that $(c',\prec')$ satisfies conditions \ref{extcflow1}, \ref{extcflow2}, \ref{extcflow3} of Definition~\ref{def_extcflow} by construction.
            Let's elaborate on why it satisfies condition \ref{extcflow4}.

            Let $v \in V\cup\{z\}$ and $w \in N_{G'}(c(v))$ such that $ v\neq w$. Let's show that $v \prec' w$:

                If $v=z$, then $N_{G'}(c'(v)) = S$, thus $w\in S$, and by construction $v \prec' w$.

                Otherwise, $v\neq z$, and then $c'(v) = c(v) \neq z$.
                Recall that $E' =  E\cup\{\{z,v\} \mid v \in S\}$.
                Then as we have $w \in N_{G'}(c(v))$, either $w \in N_G(c(v))$ or $\{w,c(v)\} \in \{\{z,v\} \mid v \in S\}$.

                If $w \in N_G(c(v))$, as $v\neq w$ we get that $v\prec w$ from condition \ref{extcflow4} of $(c,\prec)$ being an extended causal flow.
                By construction we get $v \prec' w$.

                Otherwise, $w=z$ and $c(v) \in S$, as the case $c(v) = z$ would imply $v = z$.
                Then $v \in c^{-1}(S)$. Therefore by construction we have $v \prec' w$.

        Therefore $v \prec' w$.
           Therefore $\Gamma'$ has extended causal flow.

	\end{proof}

\begin{proof}[Proof of Corollary~\ref{cor:number-neighbours-outputs}]
 If all measurements are $XY$, the causal flow function $c$ defines a path covering of the \LOG, in which each path is given by a finite sequence $v, c(v), c(c(v)),\ldots$ for some $v\in V$, with the final element of the sequence being an output.
 Thus the number of paths is equal to the number of outputs \cite[Section~III]{deBeaudrapFinding2008}.
 By Lemma~\ref{lemma_remove_z}, this generalises to extended causal flow, where the paths cover only the $XY$-measured vertices.

 Suppose for a contradiction that $\abs{S} > \abs{O}$, then by the pigeonhole principle there exist two distinct vertices $s_1,s_2\in S$ which belong to the same path.
 Without loss of generality assume $s_2$ is the one that is closer to the output, \ie $s_2 = c^j(s_1)$ for some positive integer $j$.
 But then $c^{j-1}(s_1) = c^{-1}(s_2)$ and thus $s_1 =c^{-1}(s_2) \vee s_1 \prec c^{-1}(s_2)$ by repeated application of \ref{extcflow1}.
 This contradicts the flow-preservation condition of Theorem~\ref{characcausal}.
\end{proof}

\section{Proofs for Section~\ref{s:gflow}}
\label{s:appendix-gflow}

\begin{proof}[Proof of Lemma~\ref{lem:focusing-flow-demand}]
 Lemma~\ref{lem:row-by-col-meaning} implies that for any $u\in\comp{O}$, $(M\v{a})_u = 0$ if and only if
    \begin{align*}
        \triplecase
            {\lambda(u) \in \{ X, XY \}}{u \notin \odd{\someset}}
            {\lambda(u) \in \{ XZ, YZ, Z \}}{u \notin \someset}
            {\lambda(u) = Y}{u \notin \codd{\someset}}
    \end{align*}

 First, assume $\someset$ is focused and consider each focusing condition in turn.
 \begin{itemize}
  \item \ref{F1} states $w \in S\cap\someset \implies \lambda(w) \in \{ XY, X, Y \}$, so if $w\in S$ satisfies $\ld(w)\in\{XZ,YZ,Z\}$, then $w\notin\someset$, which by the above implies $(M\v{a})_w = 0$.
  \item \ref{F2} states $w \in S\cap\odd{\someset} \implies \lambda(w) \in \{ XZ, YZ, Y, Z \}$, so if $w\in S$ satisfies $\ld(w)\in\{X,XY\}$, then $w\notin\odd{\someset}$, which again implies $(M\v{a})_w = 0$.
  \item \ref{F3} states $w\in S$ and $\lambda(w) = Y \Rightarrow w \notin \codd{\someset}$, thus once more $(M\v{a})_w = 0$.
 \end{itemize}
 The vertices in $S$ are non-outputs and -- between the three conditions -- we have considered all the measurement labels, so we may conclude $(M\v{a})_w = 0$ for all $w\in S$ as desired.

 Conversely, assume $(M\v{a})_u = 0$ for all $u\in S$.
 Distinguish cases according to the measurement label of $u$:
 \begin{itemize}
  \item If $\lambda(u) \in \{ X, XY \}$, then $(M\v{a})_u = 0$ implies $u \notin \odd{\someset}$, thus \ref{F2} holds for $u$.
  \item If $\lambda(u) \in \{ XZ, YZ, Z \}$, then $(M\v{a})_u = 0$ implies $u\notin\someset$, thus \ref{F1} holds for $u$.
  \item If $\lambda(u) = Y$, then $(M\v{a})_u = 0$ implies $u \notin \codd{\someset}$, thus \ref{F3} holds for $u$.
 \end{itemize}
 In each case, the other two focusing conditions hold trivially for $u$.
 Hence $\someset$ is focused, as desired.
\end{proof}

Several observations follow straightforwardly from Lemma~\ref{lem:focusing-flow-demand}.

\begin{observation}\label{obs:focusing-properties}
 Let $\Gamma = (G,I,O,\ld)$ be a \LOG. Then:
 \begin{itemize}
  \item If $\someset\sse\comp{I}$ is focused over $\otherset\sse\comp{O}$ and $\otherset'$ is a subset of $\otherset$, then $\someset$ is focused over $\otherset'$.
  \item If $\someset_1,\ldots, \someset_n\sse\comp{I}$ are all focused over $\otherset\sse\comp{O}$, then $\symdi{k=1..n} \someset_k$ is focused over $\otherset$.
   This follows by linearity from noting that the indicator vector of $\symdi{k=1..n} \someset_k$ is the element-wise sum (modulo 2) of the indicator vectors of the individual sets.
  \item If $\someset_1,\someset_2\sse\comp{I}$, then $\someset_1\symd\someset_2$ is focused over the singleton set $\{v\}\sse\comp{O}$ if and only if either both sets are focused over $\{v\}$, or neither is. (This corresponds to \cite[Lemmas~B.2 and~B.3]{simmonsRelatingMeasurementPatterns2021}.)
  \item If a set $\someset\sse\comp{I}$ is focused over sets $\otherset_1\sse\comp{O}$ and $\otherset_2\sse\comp{O}$, then $\someset$ is also focused over $\otherset_1\cup\otherset_2$.
 \end{itemize}
\end{observation}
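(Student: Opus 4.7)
The plan is to derive all four items as essentially immediate consequences of Lemma~\ref{lem:focusing-flow-demand}. That lemma reduces focusing of $\someset\sse\comp{I}$ over $S\sse\comp{O}$ to the purely algebraic condition that $(M\v{a})_u = 0$ for every $u\in S$, where $\v{a}$ is the indicator vector of $\someset$ in $\comp{I}$. Combined with $\mathbb{F}_2$-linearity of the matrix action, each of the four properties becomes a direct manipulation of entries of $M\v{a}$, so I would simply go through the items in turn.

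First, I would dispatch the two ``set-theoretic'' items. For the subset property, if $(M\v{a})_u = 0$ holds for every $u\in\otherset$, then it holds in particular for every $u$ in any subset $\otherset'\sse\otherset$, so $\someset$ remains focused over $\otherset'$. For the union property, combining the hypothesis $(M\v{a})_u = 0$ for all $u\in\otherset_1$ with the analogous statement for $u\in\otherset_2$ yields the required condition on $\otherset_1\cup\otherset_2$.

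Next I would handle the two ``linearity'' items. For the symmetric difference of $n$ focused sets, the indicator vector of $\symdi{k=1..n}\someset_k$ is precisely the element-wise $\mathbb{F}_2$-sum of the vectors $\v{a}_k$, as already noted in the statement. By linearity of $M$ over $\mathbb{F}_2$, its image equals $\sum_k M\v{a}_k$; each summand has zero entries on the rows indexed by $\otherset$, so the sum does too, and Lemma~\ref{lem:focusing-flow-demand} closes the argument. The singleton version for two sets follows from the same linearity: $(M(\v{a}_1+\v{a}_2))_v = (M\v{a}_1)_v + (M\v{a}_2)_v$ in $\mathbb{F}_2$, and this entry vanishes exactly when both summands are zero or both are one, which via Lemma~\ref{lem:focusing-flow-demand} translates into the stated ``both or neither'' condition.

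There is no genuine obstacle here: the entire content of the observation is an unpacking of the algebraic reformulation supplied by Lemma~\ref{lem:focusing-flow-demand}. The only minor care required is to remember that the matrix action is over $\mathbb{F}_2$, so that cancellations in symmetric differences correspond correctly to cancellations mod~$2$ in the indicator vectors; everything else is book-keeping.
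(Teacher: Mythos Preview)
Your proposal is correct and matches the paper's approach exactly: the paper presents this observation as following ``straightforwardly from Lemma~\ref{lem:focusing-flow-demand}'' with no further proof beyond the inline remarks about linearity of indicator vectors that are already in the statement. Your write-up simply spells out those one-line algebraic verifications, which is all that is needed.
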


\begin{proof}[Proof of Theorem~\ref{thm:YZ-into-Pauli}]
 First, suppose $\Gamma$ has a Pauli flow $(c,\prec)$ and there exists a set $K\sse\comp{I}$ which satisfies the three conditions.
 Let $C$ be the correction matrix representing $c$, and let $M, N$ be the flow-demand and order demand-matrices for $\Gamma$.
 Then by Theorem~\ref{thm:algebraic-Pauli} we know that $M C = \I$ and that $N C$ forms a directed acyclic graph.

 The flow-demand and order-demand matrices for $\Gamma'$ can be written in block-matrix form where rows and columns labelled by the original vertices come first and those labelled by $z$ come last.
 Then, with $M$ and $N$ the respective matrices for $\Gamma$, we can write:
 \[
  M_{\Gamma'} = \pmm{M & \v{x} \\ \v{0} & 1}
  \qquad \text{and} \qquad
  N_{\Gamma'} = \pmm{N & \v{z} \\ \v{n} & 0},
 \]
 where
 \begin{itemize}
  \item $\v{x}$ is the indicator vector among the non-outputs for the set of neighbours of $u$ which are measured $X,Y$, or $XY$, \ie $x_w = 1 \Leftrightarrow w\in S\cap\Xlike$.
  \item $\v{z}$ is the indicator vector among the non-outputs for the set of neighbours of $u$ which are measured $YZ$ or $XZ$, \ie $z_w = 1 \Leftrightarrow w\in S\cap\Zlike\cap\planar$.
  \item $\v{n}$ is the indicator vector among the non-inputs for the set of all neighbours of $u$, \ie $n_w = 1 \Leftrightarrow w\in S\setminus I$.
 \end{itemize}
 Let $\v{k}$ be the indicator vector for the set $K$ among the non-inputs and let $C_{\Gamma'}' = \smm{C&\v{k}\\\v{0}&1}$, then
 \[
  \pmm{M & \v{x} \\ \v{0} & 1} \pmm{C & \v{k} \\ \v{0} & 1}
  = \pmm{MC & M\v{k} + \v{x} \\ \v{0} & 1}
  = \pmm{\I & M\v{k} + \v{x} \\ \v{0} & 1}
 \]
 Consider the $v$-th element of $M\v{k}+\v{x}$ for some $v\in\comp{O}$.
 If $v\in\comp{O}\setminus (S\cap\Xlike)$, then $x_v=0$ and assumption~\ref{it:K-focused} together with Lemma~\ref{lem:focusing-flow-demand} implies that $(M\v{k})_v = 0$.
 If $v\in S\cap\Xlike$, then $x_v=1$.
 By assumption~\ref{it:K-focused}, $K$ would not be focused over $(\comp{O}\setminus (S\cap\Xlike))\cup\{v\}$, so by Lemma~\ref{lem:focusing-flow-demand} we must have $(M\v{k})_v = 1$.
 Since $(M\v{k})_v = x_v$ in either case, we find that $M\v{k}+\v{x} = \v{0}$.
 Hence $M_{\Gamma'}C'=\I$, \ie the new correction matrix satisfies the first of the algebraic Pauli flow properties.

 For the product of order-demand and correction matrix, we have:
 \[
  \pmm{N & \v{z} \\ \v{n} & 0} \pmm{C & \v{k} \\ \v{0} & 1}
  = \pmm{NC & N\v{k} + \v{z} \\ \v{n} C & \v{n}\cdot\v{k}}
 \]
 Now, each non-zero term in the sum $\v{n}\cdot\v{k} = \sum_{w\in\comp{I}} n_w k_w$ arises from a vertex $w\in S\cap K$, so $\v{n}\cdot\v{k}\equiv \abs{S\cap K}\bmod 2$.
 Thus, assumption~\ref{it:K-S-even} implies that $\v{n}\cdot\v{k} = 0$.

 Since $M_{\Gamma'}C_{\Gamma'} = \I$, by Lemma~\ref{lem:focusing-flow-demand} applied to each column of $C_{\Gamma'}$ in turn, the matrix $C_{\Gamma'}$ represents a focused correction function on $\Gamma'$.
 Thus, by Observation~\ref{obs:NC-interpretation}, we have
 \begin{itemize}
  \item $(\v{n}C)_w = 1$ for $w\in\comp{O}$ if and only if $z\in\odds{G'}{c(w)}$, which is equivalent to $\abs{c(w)\cap S} \equiv 1 \bmod 2$, and
  \item $(N\v{k} + \v{z})_v = 1$ for $v\in\comp{I}$ if and only if either $\ld(v)=XY$ and $v\in K$, or $\ld(v)\in\{XZ,YZ\}$ and $v\in\odds{G'}{K\cup\{z\}}$.
  As $z\notin K$ and $z\notin\odds{G'}{K}$, we have:
  \[
   \odds{G'}{K\cup\{z\}}
   = \odds{G'}{K\symd\{z\}}
   = \odds{G'}{K}\symd\odds{G'}{\{z\}}
   = \odds{G}{K} \symd S.
  \]
  By focusing, $XY$-measured vertices don't appear in the odd neighbourhood of $K\cup\{z\}$ and other planar-measured vertices do not appear in $K$.
  Thus, equivalently $(N\v{k} + \v{z})_v = 1$ for $v\in\comp{I}$ if and only if $v\in\planar\cap (K\cup (\odds{G}{K} \symd S))$.
 \end{itemize}
 Since we know $NC$ is a DAG and thus extends to a strict partial order $\prec$, Lemma~\ref{lemma_order} together with assumption~\ref{it:order} implies that the product $N_{\Gamma'}C_{\Gamma'}$ also extends to a strict partial order $\prec'$ and hence is a DAG.
 Then, as both $M_{\Gamma'}C_{\Gamma'}$ and $N_{\Gamma'}C_{\Gamma'}$ satisfy the properties of Theorem~\ref{thm:algebraic-Pauli}, the \LOG\ $\Gamma'$ has Pauli flow.

 For the other direction, assume $\Gamma'$ has Pauli flow, \ie there exists a matrix $C_{\Gamma'}'$ such that $M_{\Gamma'}C_{\Gamma'}'=\I$ and $N_{\Gamma'}C_{\Gamma'}'$ is a DAG.
 Express $C_{\Gamma'}'$ in block matrix form (separating out the row and column labelled by $z$, like before) as $\smm{C'&\v{k}'\\\v{h}&d}$, then
 \[
  \I
  = M_{\Gamma'}C_{\Gamma'}'
  = \pmm{M & \v{x} \\ \v{0} & 1} \pmm{C' & \v{k}' \\ \v{h} & d}
  = \pmm{MC' + \v{x}\otimes\v{h} & M\v{k}'+\v{x} \\ \v{h} & d}
 \]
 which implies $M\v{k}'+\v{x}=\v{0}$, $\v{h}=\v{0}$ and $d=1$.
 Then setting $\v{h}=\v{0}$ yields $MC'=\I$ by the properties of block matrices.
 Moreover, $NC'$ is a DAG since a subgraph of a DAG must itself be a DAG, so we immediately have that $\Gamma$ also has Pauli flow.
 It remains to show there exists a set $K$ with the desired properties

 Let $K'\sse\comp{I}$ be the set whose indicator vector is $\v{k}'$.
 Consider first the property $M\v{k}'+\v{x}=\v{0}$.
 Then for all $v\notin S\cap\Xlike$, we have $x_v=0$ and thus $(M\v{k}')_v=0$.
 By Lemma~\ref{lem:focusing-flow-demand}, this means $K'$ is focused over $\comp{O}\setminus (S\cap\Xlike)$, \ie the first part of property~\ref{it:K-focused} holds for $K'$.
 Now, if $v\in S\cap\Xlike$, then $x_v=1$ and thus $(M\v{k}')_v=1$, so by Lemma~\ref{lem:focusing-flow-demand}, $K'$ is not focused over any set that contains $v$.
 This establishes the second part of property~\ref{it:K-focused}.

 Furthermore,
 \[
  N_{\Gamma'}C_{\Gamma'}
  = \pmm{N & \v{z} \\ \v{n} & 0} \pmm{C' & \v{k}' \\ \v{0} & 1}
  = \pmm{NC' & N\v{k}' + \v{z} \\ \v{n} C' & \v{n}\cdot\v{k}'}
 \]
 being a DAG implies that $\v{n}\cdot\v{k}'=0$, which is equivalent to $\abs{S\cap K'}\equiv 0 \bmod 2$: this is property~\ref{it:K-S-even}.
 Finally, if $\abs{c(w)\cap S}\equiv 1\bmod 2$, then $u\in\odds{G'}{c(w)}$ for some $w\in\comp{O}$, so $w\prec z$ by \ref{P2}.
 Similarly, by the definition of $K'$ and the properties of a focused Pauli flow, if $v\in\planar$ satisfies $v\in K' \cup (\odds{G}{K'} \symd S)$, then $v\in K'$ or $v\in\odds{G'}{K'\cup\{z\}}$.
 Then by \ref{P1} and \ref{P2}, $z\prec v$.
 Thus for all such $v,w$ we have $w\prec v$ by transitivity of $\prec$, which implies $\neg(v\prec w \vee v=w)$, \ie property~\ref{it:order}.
 This completes the proof of the reverse direction.
\end{proof}

\begin{corollary}
 Let $\Gamma = (G,I,O,\ld)$ be a \LOG\ and let $S\sse O$, \ie all neighbours of the new vertex are outputs.
 Define $\Gamma'$ to be the \LOG\ that results from inserting a new YZ-measured vertex $z$ with neighbourhood $S$ as in Definition~\ref{def:YZ-insertion-gflow}.
 Then if $\Gamma$ has Pauli flow, $\Gamma'$ also has Pauli flow.
\end{corollary}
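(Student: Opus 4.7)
The plan is to invoke Theorem~\ref{thm:YZ-into-Pauli} with the simplest possible candidate for the new correction set, namely taking $K = \emptyset$. Since $\Gamma$ already has Pauli flow by assumption, we may fix any focused Pauli flow $(c,\prec)$ on $\Gamma$ (using Lemma~\ref{lem:focused-Pauli-flow}) and check that the three conditions of Theorem~\ref{thm:YZ-into-Pauli} are satisfied with this choice of $K$.

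First I would observe that because $S \sse O$ and $\Xlike \sse V\setminus(I\cup O)$, we have $S \cap \Xlike = \emptyset$, so $\comp{O}\setminus(S\cap\Xlike) = \comp{O}$. Condition~\ref{it:K-focused} thus reduces to requiring that $K=\emptyset$ be focused over all of $\comp{O}$, which is vacuously true, and that $\comp{O}$ be the largest such set, which holds since $K = \emptyset$ is focused over every subset of $\comp{O}$. Condition~\ref{it:K-S-even} is immediate since $|S \cap \emptyset| = 0$.

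For condition~\ref{it:order}, I would compute $\Vsucc$ explicitly: with $K = \emptyset$ we have $\odds{G}{K} = \emptyset$, so $\odds{G}{K} \symd S = S$ and hence $\Vsucc = \planar \cap (\emptyset \cup S) = \planar \cap S$. Since $\planar \sse V\setminus(I\cup O)$ while $S \sse O$, the intersection is empty. Therefore $\Vsucc = \emptyset$, and the partial-order condition holds vacuously regardless of what $\Wpred$ happens to be. All three hypotheses of Theorem~\ref{thm:YZ-into-Pauli} are then satisfied, yielding Pauli flow on $\Gamma'$ with $c'(z) = \{z\}$ and no new order relations except possibly $w \prec' z$ for $w \in \Wpred$.

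There is no serious obstacle here; the point of the corollary is really that inserting a $YZ$-vertex whose neighbourhood lies entirely in the outputs is unconditionally flow-preserving, because such a vertex can simply correct itself, its correction set has no effect on any non-output vertex (being empty aside from $z$ itself), and it induces no downstream constraints since all the partners that could force an order are outputs, which are never in $\planar$.
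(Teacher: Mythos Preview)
Your proof is correct and follows the same approach as the paper: you take $K=\emptyset$, observe that $S\cap\Xlike=\emptyset$ forces condition~\ref{it:K-focused} to reduce to focusing over all of $\comp{O}$ (which holds vacuously), and then note that $\Vsucc=\planar\cap S=\emptyset$ makes condition~\ref{it:order} vacuous. The paper's proof is essentially identical, differing only in how briefly the emptiness of $\Vsucc$ is justified.
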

\begin{proof}
 Suppose $\Gamma$ has Pauli flow.
 Take $K=\emptyset$, then $K$ is trivially focused over $\comp{O}\setminus S = \comp{O}$ and there is no larger set for which focusing is defined, so condition~\ref{it:K-focused} holds.
 Moreover, condition~\ref{it:K-S-even} is trivially true.
 Finally, condition~\ref{it:order} is also satisfied since the set $\{v\in\planar\mid v\in K\cup\odds{G'}{\{z\}}\}$ is empty.
 Thus by Theorem~\ref{thm:YZ-into-Pauli}, $\Gamma'$ has Pauli flow.
\end{proof}

\begin{corollary}
 Let $\Gamma = (G,I,O,\ld)$ be a \LOG\ and let $S\sse I$, \ie all neighbours of the new vertex are inputs.
 Define $\Gamma'$ to be the \LOG\ that results from inserting a new YZ-measured vertex $z$ with neighbourhood $S$ as in Definition~\ref{def:YZ-insertion-gflow}.
 Then if $\Gamma$ has Pauli flow, $\Gamma'$ also has Pauli flow.
\end{corollary}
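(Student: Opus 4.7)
The plan is to apply Theorem~\ref{thm:YZ-into-Pauli} directly, taking $K = \emptyset$ as a candidate correction set (minus the new vertex itself), analogously to the proof of the previous corollary for $S \sse O$. Since any Pauli flow on $\Gamma$ can be replaced by a focused one via Lemma~\ref{lem:focused-Pauli-flow}, let $(c,\prec)$ be a focused Pauli flow on $\Gamma$.

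First I would check condition~\ref{it:K-focused}. Because $\Xlike \sse V\setminus(I\cup O)$ and $S \sse I$, we have $S \cap \Xlike = \emptyset$, so $\comp{O}\setminus(S\cap\Xlike) = \comp{O}$. The empty set is trivially focused over $\comp{O}$, and since focusing is only defined for subsets of $\comp{O}$, there is no larger set, so condition~\ref{it:K-focused} holds with $K = \emptyset$. Next, condition~\ref{it:K-S-even} is immediate since $S \cap K = \emptyset$.

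For condition~\ref{it:order}, I would observe that any correction set $c(w)$ for $w\in\comp{O}$ is a subset of $\comp{I}$, and $S\sse I$, so $c(w)\cap S = \emptyset$ for every $w\in\comp{O}$. Hence $\Wpred = \emptyset$, and the quantified condition on $(v,w) \in \Vsucc \times \Wpred$ holds vacuously. (In fact, $\Vsucc$ is also empty with $K = \emptyset$, since $\odds{G}{\emptyset} = \emptyset$ and $\planar \cap S = \emptyset$, but this is not needed.)

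All three conditions of Theorem~\ref{thm:YZ-into-Pauli} are therefore satisfied, so $\Gamma'$ has Pauli flow. There is no real obstacle in this proof: the restriction $S\sse I$ forces both the focusing content and the order-interaction content of the conditions to be vacuous, making $K=\emptyset$ an automatic witness. The corresponding Pauli flow on $\Gamma'$ is the extension of $(c,\prec)$ by $c'(z) = \{z\}$ with $z$ incomparable to all vertices of $\comp{O}$ in the partial order.
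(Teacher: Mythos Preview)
Your argument hinges on $S \cap \Xlike = \emptyset$, which is true under the paper's literal definition $\Xlike \subseteq V \setminus (I \cup O)$.  But that definition is at odds with how $\Xlike$ is actually used in the proof of Theorem~\ref{thm:YZ-into-Pauli}: there $\v{x}$ is the $z$-column of $M_{\Gamma'}$ restricted to the old rows, so $x_w = 1$ whenever $w \in S \cap \comp{O}$ has $\ld(w) \in \{X,Y,XY\}$ --- and this includes inputs.  Since any non-output input must carry an $X$-like label once Pauli flow exists (otherwise \ref{P5}, \ref{P6} or \ref{P8} would force $v \in c(v) \subseteq \comp{I}$), the set playing the role of $S \cap \Xlike$ in the theorem is really $S \cap \comp{O}$, which is non-empty unless $S \subseteq I \cap O$.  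With that reading, $K = \emptyset$ is focused over all of $\comp{O}$, strictly larger than $\comp{O}\setminus(S\cap\comp{O})$, so condition~\ref{it:K-focused} fails.  Concretely: your $c'(z) = \{z\}$ has $\odds{G'}{c'(z)} = S$, whose non-output members are $X$-like, so \ref{F2} is violated and $M_{\Gamma'}C' \neq \I$.  Thus the analogy with the $S \subseteq O$ corollary breaks down --- there $S \cap \comp{O} = \emptyset$, here it is not.  (Your closing remark that $z$ is incomparable is also off: even bypassing the theorem and checking \ref{P1}--\ref{P9} directly for the non-focused flow with $c'(z)=\{z\}$, conditions \ref{P2} and \ref{P3} force $z \prec' v$ for every $v \in S \cap \comp{O}$.)

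The paper accordingly takes a different witness, $K = \symdi{s \in S} c(s)$.  Each $c(s)$ is focused over $\comp{O}\setminus\{s\}$ but not over $\{s\}$ (because $MC = \I$), so by Observation~\ref{obs:focusing-properties} their symmetric difference is focused over exactly $\comp{O}\setminus S$, which matches condition~\ref{it:K-focused} in its intended form.  Condition~\ref{it:K-S-even} then holds because $K \subseteq \comp{I}$ while $S \subseteq I$, and condition~\ref{it:order} holds via your (correct) observation that $c(w) \cap S = \emptyset$ for all $w$, giving $\Wpred = \emptyset$.
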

\begin{proof}
 Assume $\Gamma$ has focused Pauli flow $(c,\prec)$, let $K = \symdi{s\in S} c(s)$ and recall Observation~\ref{obs:focusing-properties}.
 From the definition of a focused Pauli flow, for each $s\in S$, the correction set $c(s)$ is focused over $\comp{O}\setminus\{s\}$, and thus $c(s)$ is also focused over $\comp{O}\setminus S$.
 Therefore, $K$ is focused over $\comp{O}\setminus S$.
 Moreover, for each $v\in S$ all but one of the components of $K$ are focused over $\{v\}$ and thus $K$ as a whole is not focused over $\{v\}$.
 This establishes condition~\ref{it:K-focused}.
 Condition~\ref{it:K-S-even} is immediate as $K\sse\comp{I}$ by the definition of correction sets.
 By the same reasoning, $\Wpred=\emptyset$ and therefore condition~\ref{it:order} holds.
 Thus by Theorem~\ref{thm:YZ-into-Pauli}, $\Gamma'$ has Pauli flow.
\end{proof}

\begin{proof}[Proof of Corollary~\ref{cor:YZ-insertion-single-neighbour}]
 Given the focused Pauli flow $(c,\prec)$ on $\Gamma$, take $K := c(x)$.
 Firstly, $K$ is focused over $\comp{O}\setminus\{x\}$ since $c$ is a focused correction function, so the first part of condition~\ref{it:K-focused} holds.
 Furthermore, $x\in\odds{G}{c(x)}$, so $K$ is not focused over any set containing $x$, which establishes the second part of condition~\ref{it:K-focused}

 We have $x \notin c(x)$ by \ref{P4}, so $\abs{S\cap K} = \abs{\{x\}\cap c(x)} = 0$ and condition~\ref{it:K-S-even} holds.

 Note that $\Wpred := \{w\in\comp{O} : \abs{c(w)\cap \{x\}}\equiv 1 \bmod 2\} = \{w\in\comp{O}\mid x\in c(w)\}$, so $w\in\Wpred$ implies $w\prec x$ by \ref{P1}.
 Similarly $\Vsucc := \planar\cap (c(x) \cup (\odds{G}{c(x)} \symd \{x\}))$ has the property that $v\in\Vsucc$ implies $x\prec v$ by \ref{P1} and \ref{P2}.
 Hence for all $w\in\Wpred$ and for all $v\in\Vsucc$ we have $w\prec x \prec v$, which implies $\neg(v\prec w \vee v= w)$ since $\prec$ is strict.
 Therefore condition~\ref{it:order} holds.

 Thus, $\Gamma'$ has the given Pauli flow by Theorem~\ref{thm:YZ-into-Pauli}.
\end{proof}

\section{Proofs for Section~\ref{s:other-labels}}
\label{s:appendix-other-labels}

\begin{observation}[Phase shifting]
    If a YZ-measured vertex $z$ is connected to a single XY-measured vertex $x$, then shifting
    part from the phase of $x$ to $z$ is sound.
    \[
    \tikzfig{phase-shifting/1}\hspace*{1em}\eq{}\hspace*{1em}\tikzfig{phase-shifting/8}
    \]
\end{observation}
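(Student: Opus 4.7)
The plan is to prove this equation by unfolding both the $YZ$- and $XY$-measurement effects into their ZX-calculus representations and then applying the rewrite rules from Figure~\ref{fig:ZX-rules}. Concretely, the $XY$-measured vertex $x$ is a green $Z$-spider carrying its phase on the measurement-leg, while the $YZ$-measured vertex $z$ is a green $Z$-spider whose measurement-leg is capped by an $X$-spider carrying the $YZ$-phase (equivalently, a Hadamard followed by a $Z$-spider). Since the single connection between $x$ and $z$ is a Hadamard edge, the whole sub-diagram around $z$ is very small: two $Z$-spiders joined by a Hadamard edge, with one of them also carrying a phase on a capped leg.

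The key steps, in order, would be: first, replace the $YZ$-effect symbol by its explicit spider expansion; second, apply the colour-change rule (\tikzfig{zx-rewrite-rules/color1}$=$\tikzfig{zx-rewrite-rules/color2}) to push the Hadamard through and turn the red spider on $z$'s capped leg into a green spider, possibly with an adjusted Hadamard; third, use spider fusion (\tikzfig{zx-rewrite-rules/fusion1}$=$\tikzfig{zx-rewrite-rules/fusion2}) to merge the now-adjacent green spiders on the $x$-$z$ leg, adding their phases modulo $2\pi$. This reduces the left-hand side to a single green spider on $x$ carrying the total phase $\alpha+\beta$, together with whatever measurement-effect structure is standard for $x$. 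The right-hand side is obtained by running the same rewrite chain in reverse, un-fusing an arbitrary sub-phase $\beta$ back onto $z$ through a green-to-red colour change and a Hadamard re-insertion, thus recovering a valid $YZ$-effect of phase $\beta$ adjacent to an $XY$-effect of phase $\alpha$.

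The main obstacle will not be the rewriting itself, which is a direct application of standard stabiliser rules, but rather the careful bookkeeping of sign and normalisation conventions: different conventions are used in the literature for encoding $XY$- versus $YZ$-measurement angles as spider phases (some carry a minus sign arising from the $\bra{+_\alpha}$ measurement effect, and some merge the Hadamard into an $X$-spider). Once the conventions used in the paper are fixed by reading off the exact tikz figures for \tikzfig{measurement-effects/XY} and \tikzfig{measurement-effects/YZ}, the computation is routine. A subtle point to double-check is that no auxiliary scalar factor is introduced by the colour change and fusion rules on a leaf vertex; if one arises, it is constant and absorbs into the global scalar, so soundness is unaffected.

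Finally, I would note that this equation is genuinely an equality of ZX-diagrams interpreted as linear maps, and hence is an equation of \emph{measurement patterns} only up to the usual scalar identifications; no flow condition needs to be checked, since the underlying graph structure and vertex labels remain unchanged. In particular, any Pauli flow or causal flow on the original pattern is automatically a flow on the pattern after phase shifting, since Definitions~\ref{def:Pauli-flow} and~\ref{def_extcflow} depend only on the graph, the sets $I$ and $O$, and the measurement planes, none of which are altered by this rewrite.
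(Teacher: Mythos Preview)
Your proposal is correct and follows essentially the same approach as the paper: the paper's proof is a chain of seven diagrammatic rewrites using precisely the colour-change, fusion, and Hadamard rules from Figure~\ref{fig:ZX-rules}, collapsing the $YZ$-effect into the $XY$-vertex and then un-fusing the desired sub-phase back out. Your closing remark about flow preservation is correct but superfluous here, since the observation is purely about soundness of the ZX rewrite.
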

\begin{proof}
    \(
    \hspace*{2em}\tikzfig{phase-shifting/1}\hspace*{1em}\eq{}\hspace*{1em}\tikzfig{phase-shifting/2}
    \hspace*{1em}\eq{}\hspace*{1em}\tikzfig{phase-shifting/3}
    \hspace*{1em}\eq{}\hspace*{1em}\tikzfig{phase-shifting/4}
    \)
    \\
    \(
    \hspace*{12em}\eq{}\hspace*{1em}\tikzfig{phase-shifting/4}
    \hspace*{1em}\eq{}\hspace*{1em}\tikzfig{phase-shifting/5}
    \hspace*{1em}\eq{}\hspace*{1em}\tikzfig{phase-shifting/6}
    \)
    \\
    \(
    \hspace*{12em}\eq{}\hspace*{1em}\tikzfig{phase-shifting/7}
    \hspace*{1em}\eq{}\hspace*{1em}\tikzfig{phase-shifting/8}
    \)
    \end{proof}

\begin{lemma}\label{lem:vertex-splitting-steps}
 The vertex splitting operation of Definition~\ref{def:vertex-splitting} can be effected via the following sequence of steps:
 \begin{enumerate}
  \item Insert a $YZ$-measured vertex $x''$ with neighbourhood $\{x\}$ according to Definition~\ref{def:YZ-insertion-gflow}.
  \item If desired, split measurement angles between $x$ and $x''$.
  \item Insert a $YZ$-measured vertex $x'$ with neighbourhood $W\cup\{x''\}$ according to Definition~\ref{def:YZ-insertion-gflow}.
  \item Pivot on the edge $\{x',x''\}$.
 \end{enumerate}
 \begin{align*}
    \tikzfig{vertex-splitting/1}\hspace*{1em}
    &\eq{1,2}\hspace*{1em}\tikzfig{vertex-splitting/2} \\
    &\eq{3}\hspace*{1em}\tikzfig{vertex-splitting/3} \\
    &\eq{4}\hspace*{1em}\tikzfig{vertex-splitting/4}
\end{align*}
 The sequence can also be reversed.
\end{lemma}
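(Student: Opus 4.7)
The plan is to verify directly that the sequence of four operations produces a \LOG\ matching the one specified by Definition~\ref{def:vertex-splitting}, checking both the graph structure and the measurement labels; the reverse direction then follows by self-invertibility (the pivot is involutive, and $YZ$-insertion is inverted by $YZ$-deletion via Corollary~\ref{cor:YZ-deletion}).

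For the graph structure, I would track the edge set through each step. Steps 1 and 3 simply add the new vertices $x''$ and $x'$ with their specified neighbourhoods, which is immediate. Step 4, the pivot $G\wedge x'x'' = G\star x'\star x''\star x'$, is the bookkeeping-heavy part: I would enumerate the edges toggled by each of the three local complementations and then use $A\symd A = \emptyset$ to cancel duplicates. Pairs within $W$ get toggled during both $\star x'$ (since $W\sse N(x')$ just after step~3) and $\star x''$ (since $W\sse N(x'')$ after the first local complementation), so they cancel; the remaining toggles add $\{x,x'\}$, preserve $\{x', x''\}$, remove $\{x, x''\}$, add each $\{x'', w\}$ for $w\in W$, and toggle each $\{x, w\}$ for $w\in W$, which is exactly the edge modification prescribed by Definition~\ref{def:vertex-splitting}.

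For the measurement labels, $x'$ and $x''$ are inserted with label $YZ$ by Definition~\ref{def:YZ-insertion-gflow}, and Lemma~\ref{lem:pivot-Pauli-flow} confirms that pivoting on the edge $\{x',x''\}$ converts both labels from $YZ$ to $XY$, matching Definition~\ref{def:vertex-splitting}. All other vertices retain their original labels since insertions do not touch existing labels and the pivot alters only the labels at its two endpoints. To confirm semantic equivalence (i.e.\ that the four-step sequence corresponds to a sequence of sound ZX-rewrites on the underlying diagram), I would appeal to the standard ZX identity for inserting a phase-$0$ $YZ$-measured vertex with a single neighbour (step~1), the phase-shifting observation above to redistribute phases between $x$ and $x''$ (step~2), the analogous insertion rule for $x'$ with neighbourhood $W\cup\{x''\}$ (step~3), and the standard ZX-pivot rewrite (step~4).

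The main obstacle is the edge-toggle bookkeeping for the pivot: correctly tracking which edges are toggled an odd versus even number of times across the three nested local complementations, and verifying that self-loops and edges incident to the pivot vertices themselves behave as expected. Once that is handled carefully, every remaining verification is a matter of direct inspection against Definition~\ref{def:vertex-splitting}.
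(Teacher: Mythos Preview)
Your overall plan matches the paper's: track the \LOG\ through the four steps, verify edges and labels against Definition~\ref{def:vertex-splitting}, then invoke involutivity of the pivot and reversibility of $YZ$-insertion for the reverse direction. The one structural difference is in how the pivot is handled. The paper does \emph{not} unwind the three local complementations; instead it applies Lemma~\ref{lem:pivot-odd-neighbourhoods} to singleton sets to obtain the neighbourhoods $N_{G_3\wedge x'x''}(v)$ directly, case-splitting on whether $v$ lies in $W$, equals $x$, or equals $x'$ or $x''$. This yields the edge changes in one pass and avoids exactly the ``edge-toggle bookkeeping'' you flag as the main obstacle.

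Your explicit-LC route is perfectly valid, but your sketch contains a concrete slip that you should fix before writing it out: in the second local complementation ($\star x''$), the neighbourhood $N(x'')$ at that point is $\{x,x'\}\cup W$, so the pairs $\{x',w\}$ for $w\in W$ are toggled (and removed, since they were added in step~3). You omit this from your list of ``remaining toggles''. Without it, the graph you end up with still carries the edges $\{x',w\}$, which contradicts Definition~\ref{def:vertex-splitting}. Relatedly, the phrase ``which is exactly the edge modification prescribed by Definition~\ref{def:vertex-splitting}'' conflates the $G_3\to G_4$ change with the $G\to G'$ change; you need to compose with the edges added in steps~1 and~3 before comparing. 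Once you account for the $\{x',w\}$ removal, the net change from $G$ to $G_4$ is indeed $E\symd\{\{x,x'\},\{x',x''\}\}\symd(\{x,x''\}\times W)$, as required.

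One minor point: Corollary~\ref{cor:YZ-deletion} is a flow-preservation statement, not a semantics statement; for the reverse direction of this lemma you only need that $YZ$-deletion (with angle~$0$) is the operational inverse of $YZ$-insertion on the \LOG\ and on the ZX-diagram, which is immediate.
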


\begin{proof}
 Let $\Gamma_0:=\Gamma$ be the initial \LOG, with $G=(V,E)$, and denote by $\Gamma_k = (G_k, I, O, \ld_k)$ the \LOG\ after step $k$ in the above list.
 Then the \LOG{}s after each step have the following graphs and measurement labellings:
 \begin{enumerate}
  \item $G_1 = (V\cup\{x''\}, E\cup\{\{x,x''\}\})$ and $\ld_1$ is the extension of $\ld$ that satisfies $\ld_1(x'')=YZ$.

  \item $\Gamma_2=\Gamma_1$ as changing the measurement angles of planar-measured vertices does not affect the underlying \LOG.

  \item $G_3 = (V_3,E_3)$ where $V_3 = V\cup\{x',x''\}$ and $E_3 = E\cup\{\{x,x''\}, \{x',x''\}\}\cup (\{x'\}\times W)$, with $\ld_3$ being the extension of $\ld_1$ that satisfies $\ld_3(x')=YZ$.
  Note that all the new edges involve at least one vertex that did not appear in the original graph and hence are definitely not contained in $E$.
  Thus we may equivalently write $E_3 = E\symd\{\{x,x''\}, \{x',x''\}\}\symd (\{x'\}\times W)$.

  \item By Lemma~\ref{lem:pivot-odd-neighbourhoods} applied to singleton sets (so that the odd neighbourhood reduces to the usual neighbourhood), we have
  \[
   N_{G_3\wedge x'x''}(v) =
   \begin{cases}
    N_{G_3}(v) &\text{if } x',x''\notin N_{G_3}[v] \\
    N_{G_3}(v) \symd N_{G_3}[x''] &\text{if } x'\in N_{G_3}[v] \wedge x''\notin N_{G_3}[v] \\
    N_{G_3}(v) \symd N_{G_3}[x'] &\text{if } x'\notin N_{G_3}[v] \wedge x''\in N_{G_3}[v] \\
    N_{G_3}(v) \symd N_{G_3}[x'] \symd N_{G_3}[x''] &\text{if } x',x''\in N_{G_3}[v]
   \end{cases}
  \]
  or, by noting that $u\in N_{G_3}[v] \Leftrightarrow v\in N_{G_3}[u]$ and $N_{G_3}[x'] = \{x', x''\}\cup W$ and $N_{G_3}[x''] = \{x,x',x''\}$:
  \begin{equation}\label{eq:neighbourhoods-pivoting}
   N_{G_4}(v) =
   N_{G_3\wedge x'x''}(v) =
   \begin{cases}
    N_{G_3}(v) &\text{if } v\notin\{x,x',x''\}\cup W \\
    N_{G_3}(v) \symd \{x, x', x''\} &\text{if } v\in W \\
    N_{G_3}(v) \symd \{ x', x''\} \symd W &\text{if } v = x \\
    N_{G_3}(v) \symd \{x\} \symd W &\text{if } v\in\{x',x''\}
   \end{cases}
  \end{equation}
  This implies we toggle edges between $W$ and $\{x,x',x''\}$ as well as edges between $x$ and $\{x',x''\}$.
  (Note that each toggled edge appears twice in \eqref{eq:neighbourhoods-pivoting}, once for each endpoint.)
  Thus, $G_4 = (V_4,E_4)$ where $V_4 = V_3 = V\cup\{x',x''\}$ and:
  \begin{align*}
   E_4
   &= E_3 \symd (\{x,x',x''\}\times W) \symd \{\{x,x'\},\{x,x''\}\} \\
   &= E\symd\{\{x,x''\}, \{x',x''\}\} \symd (\{x'\}\times W) \symd (\{x,x',x''\}\times W) \symd \{\{x,x'\},\{x,x''\}\} \\
   &= E\symd\{\{x,x'\}, \{x',x''\}\} \symd (\{x,x''\}\times W)
  \end{align*}
  Furthermore, $\ld_4$ satisfies $\ld_4(x')=\ld_4(x'')=XY$ and $\ld_4(v)=\ld(v)$ for all other non-outputs.
 \end{enumerate}
 It is straightforward to check that $\Gamma_4$ is the same as the \LOG\ $\Gamma'$ of Definition~\ref{def:vertex-splitting}.

 For the reverse sequence, note that pivoting is involutive and that $YZ$-measured vertices can be deleted without affecting the interpretation of the underlying diagram if their measurement angle is~0.
\end{proof}

\begin{remark}\label{rem:pivoting}
    It is well-known {\cite{duncanGraphtheoreticSimplificationQuantum2020}} that in ZX-calculus, pivoting on edge $\{u,v\}$ reads:
    \[
    \tikzfig{pivoting/1}\hspace*{1em}\eq{\wedge uv}\hspace*{1em}\tikzfig{pivoting/2}
    \]
    What happens to the measurement effects when pivoting on an edge $\{u,v\}$ can be derived from this
    and rewrite rules. For instance, as $\tikzfig{pivoting/5}\hspace*{0.5em}\eq{}\hspace*{0.5em}\tikzfig{pivoting/6}$, if $u$ and $v$ were YZ-measured
    before the pivot, they will be XY-measured afterwards.
\end{remark}

\begin{lemma}[{\cite[Lemma~B.9]{duncanGraphtheoreticSimplificationQuantum2020}}]\label{lem:pivot-odd-neighbourhoods}
 Given a graph $G=(V,E)$, $A\sse V$, $u\in V$, and $v\in N_G(u)$,
 \[
  \odds{G\wedge uv}{A} = \begin{cases}
                         \odds{G}{A} &\text{if } u,v\notin\codds{G}{A} \\
                         \odds{G}{A}\symd N_G[v] &\text{if } u\in\codds{G}{A}, v\notin\codds{G}{A} \\
                         \odds{G}{A}\symd N_G[u] &\text{if } u\notin\codds{G}{A}, v\in\codds{G}{A} \\
                         \odds{G}{A}\symd N_G[u]\symd N_G[v] &\text{if } u,v\in\codds{G}{A}.
                        \end{cases}
 \]
\end{lemma}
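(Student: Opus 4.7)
My plan is to prove the lemma by directly computing $\odds{G \wedge uv}{A} = \symdi{a \in A} N_{G \wedge uv}(a)$ using the identity $G \wedge uv = G \star u \star v \star u$, together with a simpler auxiliary formula for local complementation applied three times.

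First I would establish an auxiliary lemma: for any graph $H$ and vertex $x$,
$$\odds{H \star x}{A} \;=\; \odds{H}{A} \symd \bigl([\,|A \cap N_H(x)|\text{ odd}\,]\cdot N_H(x)\bigr) \symd (A \cap N_H(x)),$$
which follows directly from the identity $N_{H \star x}(a) = N_H(a) \symd (N_H(x) \setminus \{a\})$ valid for $a \in N_H(x)$, summed over $a \in A \cap N_H(x)$ via symmetric difference (vertices $a \in A \setminus N_H(x)$ contribute $N_H(a)$ unchanged).

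Next I would apply this auxiliary lemma three times in succession: first with $H = G$ and $x = u$, then with $H = G \star u$ and $x = v$, and finally with $H = G \star u \star v$ and $x = u$. Each step requires knowing the current-graph neighborhood of the next pivot vertex: a short calculation gives $N_{G \star u}(v) = N_G(v) \symd N_G(u) \symd \{v\}$, and a somewhat longer one handles $N_{G \star u \star v}(u)$. Most terms in the resulting threefold sum cancel by symmetric-difference arithmetic, leaving a correction expressible purely in terms of $N_G[u]$, $N_G[v]$, and the parity indicators $[u \in \codds{G}{A}]$ and $[v \in \codds{G}{A}]$.

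The main technical obstacle will be keeping careful track of the contributions of $u$ and $v$ themselves, since their neighborhoods change twice during the iteration. A direct trace through the three local complementations yields $N_{G \wedge uv}(u) = N_G[v] \setminus \{u\}$ and, symmetrically, $N_{G \wedge uv}(v) = N_G[u] \setminus \{v\}$, so whenever $u \in A$ or $v \in A$ the \emph{closed} neighborhoods $N_G[u]$ and $N_G[v]$ (rather than the open ones) appear naturally in the final expression --- precisely matching the lemma statement. Once all contributions are consolidated, observing that $[|A \cap N_G[u]|\text{ odd}] = [u \in \codds{G}{A}]$ (and analogously for $v$) recovers exactly the four cases as stated. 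An alternative route would be to first establish the well-known edge-toggling characterization of pivoting (which toggles edges between the classes $N_G(u) \setminus N_G[v]$, $N_G(v) \setminus N_G[u]$, and $N_G(u) \cap N_G(v)$, plus corrections at $u,v$) and then case-split directly on each $a \in A$; this is conceptually cleaner but requires proving the edge characterization as a separate lemma.
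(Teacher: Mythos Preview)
The paper does not prove this lemma: it is quoted verbatim from \cite[Lemma~B.9]{duncanGraphtheoreticSimplificationQuantum2020} and used as a black box in the proofs of Lemmas~\ref{lem:vertex-splitting-steps} and~\ref{lem:pivot-Pauli-flow}. So there is nothing in the present paper to compare your proposal against.

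That said, your approach is sound. The auxiliary identity
\[
  \odds{H\star x}{A} \;=\; \odds{H}{A}\;\symd\;\bigl[\,|A\cap N_H(x)|\text{ odd}\,\bigr]\cdot N_H(x)\;\symd\;(A\cap N_H(x))
\]
is correct (it drops out of $N_{H\star x}(a)=N_H(a)\symd(N_H(x)\setminus\{a\})$ for $a\in N_H(x)$ exactly as you say), and iterating it along $G\to G\star u\to G\star u\star v\to G\star u\star v\star u$ does yield the four cases after the bookkeeping you describe. Your observation that $N_{G\wedge uv}(u)=N_G[v]\setminus\{u\}$ is also correct and is the key to seeing why \emph{closed} neighbourhoods appear. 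The alternative you mention---first proving the tripartite edge-toggling description of the pivot and then case-splitting on each $a\in A$---is in fact closer to how the cited reference argues, and is probably less error-prone than tracking three successive local complementations; either route works.
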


\begin{proof}[Proof of Lemma~\ref{lem:pivot-Pauli-flow}]
 First, note (e.g.\ via Remark~\ref{rem:pivoting}) that $u,v\in N_G[u]$, that symmetrically $u,v\in N_G[v]$, and that
 \[
  \codds{G}{A\symd\{u\}} = \codds{G}{A}\symd N_G[u].
 \]
 Thus, for any set $A\sse V$, we have $u\in\codds{G}{A\symd\{u\}} \Leftrightarrow u\notin\codds{G}{A}$ and similarly $v\in\codds{G}{A\symd\{u\}} \Leftrightarrow v\notin\codds{G}{A}$.

 By Lemma~\ref{lem:pivot-odd-neighbourhoods}, in the four cases (depending on which of $u,v$ are in $\codds{G}{c(w)}$), we have:
 \begin{itemize}
  \item If $u,v\notin\codds{G}{c(w)}$, then
   \[
    \odds{G\wedge uv}{c'(w)} = \odds{G\wedge uv}{c(w)} = \odds{G}{c(w)}
   \]
  \item If $u\in\codds{G}{c(w)}$ and $v\notin\codds{G}{c(w)}$, then $u\notin\codds{G}{c(w)\symd\{u\}}$ and $v\in\codds{G}{c(w)\symd\{u\}}$; thus
   \begin{align*}
    \odds{G\wedge uv}{c'(w)}
    &= \odds{G\wedge uv}{c(w)\symd\{u\}} \\
    &= \odds{G}{c(w)\symd\{u\}}\symd N_G[u] \\
    &= \odds{G}{c(w)}\symd N_G(u)\symd N_G[u] \\
    &= \odds{G}{c(w)}\symd\{u\}
   \end{align*}
  \item If $u\notin\codds{G}{c(w)}, v\in\codds{G}{c(w)}$, then symmetrically to the previous case,
   \[
    \odds{G\wedge uv}{c'(w)}
    = \odds{G\wedge uv}{c(w)\symd\{v\}}
    = \odds{G}{c(w)}\symd\{v\}
   \]
  \item If $u,v\in\codds{G}{c(w)}$, then $u,v\in\codds{G}{c(w)\symd\{u,v\}}$ and thus
   \begin{align*}
    \odds{G\wedge uv}{c'(w)}
    &= \odds{G\wedge uv}{c(w)\symd\{u,v\}} \\
    &= \odds{G}{c(w)\symd\{u,v\}}\symd N_G[u]\symd N_G[v] \\
    &= \odds{G}{c(w)}\symd N_G(u)\symd N_G(v)\symd N_G[u]\symd N_G[v] \\
    &= \odds{G}{c(w)}\symd\{u,v\}
   \end{align*}
 \end{itemize}
 These results are summarised in Table~\ref{tab:pivoting-codd}.
 They straightforwardly imply $\codds{G\wedge uv}{c'(w)} = \codds{G}{c(w)}$ for all $w\in\comp{O}$: in each row, the changes to correction set and odd neighbourhood are the same, so they cancel out when computing the closed odd neighbourhood.

 Note that the only changes to correction sets and odd neighbourhoods are symmetric differences with some subset of $\{u,v\}$, so Pauli flow and (where applicable) focusing conditions involving only vertices other than $u,v$ remain satisfied.

 \begin{table}
  \centering
  \begin{tabular}{l||l|l}
   Case & $c'(w)$ & $\odds{G\wedge uv}{c'(w)}$ \\ \hline
   $u,v\notin\codds{G}{c(w)}$ & $c(w)$ & $\odds{G}{c(w)}$ \\
   $u\in\codds{G}{c(w)}, v\notin\codds{G}{c(w)}$ & $c(w)\symd\{u\}$ & $\odds{G}{c(w)}\symd\{u\}$ \\
   $u\notin\codds{G}{c(w)}, v\in\codds{G}{c(w)}$ & $c(w)\symd\{v\}$ & $\odds{G}{c(w)}\symd\{v\}$ \\
   $u,v\in\codds{G}{c(w)}$ & $c(w)\symd\{u,v\}$ & $\odds{G}{c(w)}\symd\{u,v\}$
  \end{tabular}
  \caption{Correction sets and odd neighbourhoods after pivoting in Lemma~\ref{lem:pivot-Pauli-flow}, depending on the relationship of $u$ and $v$ to the original closed odd neighbourhood.}\label{tab:pivoting-codd}
 \end{table}

 Additionally, whether a symmetric difference is taken with $u$ (and what effect this may have) is independent of what is done regarding $v$.
 By symmetry, it thus suffices to consider only $u$.
 First, consider the correction set for $u$ itself.
 \begin{itemize}
  \item If $\ld(u)=XY$, then $u\notin c(u) \wedge u\in\odds{G}{c(u)}$ by \ref{P4}, so $u\in\codds{G}{c(u)}$.
  Thus we take symmetric differences with $u$, which means $u\in c'(u)\wedge u\notin\odds{G\wedge uv}{c'(u)}$.
  Yet note that $\ld'(u)=YZ$ and the above is exactly what is needed to satisfy \ref{P6}.
  \item The case $\ld(u)=YZ$ is symmetric to $\ld(u)=XY$.
  \item If $\ld(u)=XZ$, then $u\in c(u) \wedge u\in\odds{G}{c(u)}$, so $u\notin\codds{G}{c(u)}$.
  Thus the membership of $u$ in those sets does not change and \ref{P5} remains satisfied.
  \item If $\ld(u)=X$, then $u\in\odds{G}{c(u)}$ by \ref{P7}, and $\ld'(u)=Z$.
  There are two subcases:
   \begin{itemize}
    \item If furthermore $u\in c(u)$, then $u\notin\codds{G}{c(u)}$ and nothing changes, meaning $u\in c'(u)$ and $u\in\odds{G\wedge uv}{c'(u)}$.
    Hence \ref{P9} is satisfied.
    \item If instead $u\notin c(u)$, then $u\in\codds{G}{c(u)}$.
    Thus symmetric differences with $u$ are taken and we have $u\in c'(u)$ and $u\notin\odds{G\wedge uv}{c(u')}$.
    Again, \ref{P9} is satisfied.
   \end{itemize}
  \item The case $\ld(u)=Z$ is symmetric to $\ld(u)=X$.
  \item If $\ld(u)=Y$, then there are two possibilities but in either case $u\in\codds{G}{c(u)}$ by \ref{P8}.
  Thus symmetric differences are taken, and afterwards $u\in\codds{G\wedge uv}{c'(u)}$ again.
  The measurement label remains $\ld'(u)=Y$ and \ref{P8} remains satisfied.
 \end{itemize}
Now suppose $w\in\comp{O}\setminus\{u\}$.
Note that by the first part of the proof,
\begin{equation}\label{eq:pivot-changes}
 u\in c'(w) \Leftrightarrow u\in\odds{G}{c(w)} \quad\text{and}\quad
 u\in\odds{G\wedge uv}{c'(w)} \Leftrightarrow u\in c(w).
\end{equation}
If $u\notin c(w)$ and $u\notin\odds{G}{c(w)}$, then analogous properties will hold after pivoting, so there is nothing to prove.
We will therefore only consider cases where $u$ is in the correction set, the odd neighbourhood, or both.
\begin{itemize}
 \item If $\ld(u)=XY$, and $u$ appears in $c(w)$ and/or $\odds{G}{c(w)}$, then $w\prec u$ by \ref{P1} and \ref{P2}.
 So, whenever $u$ appears in $c'(w)$ and/or $\odds{G\wedge uv}{c'(w)}$, we have $w\prec u$ and thus \ref{P1} and \ref{P2} remain satisfied for $c'(w)$.
 \item The case $\ld(u)=YZ$ is analogous to $\ld(u)=XY$.
 \item The case $\ld(u)=XZ$ is also analogous to $\ld(u)=XY$.
 \item If $\ld(u)=X$, there are two subcases.
  \begin{itemize}
   \item Suppose $u\in\odds{G}{c(w)}$, then $w \prec u$.
    Then $u\in c'(w)$ but \ref{P1} remains satisfied after the change of label to $\ld'(u)=Z$ since we still have $w\prec u$.
   \item Suppose $u\in c(w)$ and $u\notin\odds{G}{c(w)}$.
    In this case, nothing is implied about the order of $u$ and $w$.
    After the pivot, we have $\ld'(u)=Z$, $u\notin c'(w)$ and $u\in\odds{G\wedge uv}{c'(w)}$.
    Thus, \ref{P2} is satisfied.
  \end{itemize}
 \item The case $\ld(u)=Z$ is symmetric to $\ld(u)=X$.
 \item If $\ld(u)=Y$, there are again two cases.
  \begin{itemize}
   \item If $u\notin\codds{G}{c(w)}$, then we know nothing about the order of $u$ and $w$ because of \ref{P3}, but nothing changes regarding the correction set or the label of $u$, so \ref{P3} remains satisfied.
   \item If $u\in\codds{G}{c(w)}$, then by \ref{P3}, we have $w\prec u$.
    In this case $u$ moves from the correction set to the odd neighbourhood, or conversely, but \ref{P3} remains satisfied.
  \end{itemize}
\end{itemize}
The arguments regarding $v$ are independent of and analogous to the ones for $u$, so this completes the proof that $(p',\prec)$ is a Pauli flow.

For focusing, it suffices to consider the correction sets of vertices other than $u$ (as focusing does not talk about the relationship between a vertex and its own correction set or the latter's odd neighbourhood).
So suppose $w\in\comp{O}\setminus\{u\}$.
\begin{itemize}
 \item If $\ld(u)=Y$ then $(p,\prec)$ being focused implies $u\notin\codds{G}{c(w)}$ via \ref{F3}.
  But then nothing changes during the pivot, not even the measurement label of $u$, so \ref{F3} remains satisfied.
 \item If $\ld(u)=XY$ or $\ld(u)=X$, then $(p,\prec)$ being focused implies $u\notin\odds{G}{c(w)}$ via \ref{F2}.
  After the pivot, $\ld'(u)\in\{YZ, Z\}$ and $u\notin {c'(w)}$ by \eqref{eq:pivot-changes}, so \ref{F1} is satisfied.
 \item If $\ld(u)=YZ$ or $\ld(u)=Z$, the argument is symmetric to the previous case.
 \item The case $\ld(u)=XZ$ is excluded.
\end{itemize}
Thus $(p,\prec)$ being focused implies that $(p',\prec)$ is focused, assuming the vertices incident on the pivot edge are not $XZ$-measured.

Finally, since the gflow conditions are a subset of the Pauli flow conditions and pivoting maps planar measurements to planar measurements, if $(p,\prec)$ is a gflow, then $(p',\prec')$ is also a gflow.
\end{proof}

\begin{remark}
 In Lemma~\ref{lem:pivot-Pauli-flow}, focusing may break if one of the vertices incident on the pivot edge is $XZ$-measured, say $\ld(u)=XZ$: assume $(p,\prec)$ is focused, then we know $u\notin c(w)$ for any $w\neq u$ but we may have $u\in\odds{G}{c(w)}$.
 In the latter case, we would have $u\in c'(w)$ after the pivot but $\ld(u)=XZ$ still, so the flow is no longer focused.
 Of course it can be re-focused by replacing $c'(w)$ with $c'(w)\symd c'(u)$ for all $w$ such that $u\in\odds{G}{c(w)}$, but the straightforward modification of Lemma~\ref{lem:pivot-Pauli-flow} is no longer sufficient.
\end{remark}

\begin{proof}[Proof of Theorem~\ref{thm:vertex-splitting}]
 By Lemma~\ref{lem:vertex-splitting-steps}, the vertex splitting operation of Definition~\ref{def:vertex-splitting} can be expressed as a sequence of two $YZ$-insertions and a pivot, as well as an operation on measurement angles that does not affect the \LOG{}s.
 As $\abs{I}=\abs{O}$, the focused Pauli flow on each \LOG\ is unique.
 Using the same numbering as in the above lemma, $\Gamma_1$ has Pauli flow if and only if $\Gamma$ has Pauli flow by Corollary~\ref{cor:YZ-insertion-single-neighbour}.
 Similarly, by Lemma~\ref{lem:pivot-Pauli-flow} and the property that pivoting is involutive, $\Gamma_4$ has Pauli flow if and only if $\Gamma_3$ has Pauli flow.
 In each case, there are no conditions on the flows.
 Thus it remains only to show that step~3 -- the second $YZ$-insertion -- preserves the existence of Pauli flow under the given conditions.

 First, suppose $\Gamma$ has focused Pauli flow $(c,\prec)$ where without loss of generality $\prec$ is the induced partial order.
 Then, by Corollary~\ref{cor:YZ-insertion-single-neighbour}, the unique focused Pauli flow on $\Gamma_2 = \Gamma_1$ satisfies
 \[
  {\prec_2}
  = {\prec} \cup \{(w,x'')\mid w\prec x\} \cup \{(x'',v)\mid x\prec v\}
  \quad\text{and}\quad
  c_2(v)
  =
  \begin{cases}
   c(x)\cup\{x''\} &\text{if } v = x'' \\
   c(v) &\text{otherwise.}
  \end{cases}
 \]
 Furthermore, by Theorem~\ref{thm:YZ-into-Pauli}, $\Gamma_3$ has focused Pauli flow if and only if there exists $K\sse V_2 = V\cup\{x''\}$ such that three conditions hold with respect to the neighbourhood $S=W\cup\{x''\}$ of the new vertex.
 We now show those conditions follow from \ref{VS1}--\ref{VS4}.
 \begin{itemize}
  \item The only difference between $\Gamma_2$ and $\Gamma$ is the insertion of the $YZ$-measured vertex $x''$ with a single edge to $x$.
   Thus $K$ being focused over $\comp{O}\setminus(W\cap\Xlike)$ in $\Gamma$ by \ref{VS1} implies $K$ being focused over $(V\setminus O)\setminus(W\cap\Xlike_2)$ in $\Gamma_2$.

   Note that $x''\notin K$ by assumption, so $K$ is focused over $\{x''\}$.
   Therefore by Observation~\ref{obs:focusing-properties}, $K$ is focused over $(V\cup\{x''\}\setminus O)\setminus(W\cap\Xlike_2)$ in $\Gamma_2$.
   We can also add $x''$ to $W$ since $x''\notin\Xlike_2$.
   Hence $K$ is focused over $\comp{O}_2\setminus((W\cup\{x''\})\cap\Xlike_2)$, which is the first part of condition~\ref{it:K-focused}.
   The second part follows similarly from \ref{VS1}.

  \item We have $\abs{W\cap K}\equiv 0\bmod 2$ by \ref{VS1} and $x''\notin K$.
  Thus $\abs{(W\cup\{x''\})\cap K}\equiv 0\bmod 2$, which is condition~\ref{it:K-S-even}.

  \item Denote by $\Wpred^{(2)}$ and $\Vsucc^{(2)}$ the sets from condition~\ref{it:order}, \ie
  \begin{align}
   \Wpred^{(2)}
   &= \left\{w\in\comp{O}_2: \abs{c(w)\cap (W\cup\{x''\})}\equiv 1 \bmod 2\right\} \label{eq:Wpred2-def}\\
   \Vsucc^{(2)}
   &= \planar_2 \cap (K \cup (\odds{G_2}{K} \symd (W\cup\{x''\}))) \label{eq:Vsucc2-def}
  \end{align}
   Note that $x''$ does not appear in any correction set other than its own because of focusing, so the first definition reduces to
   \begin{equation}\label{eq:Wpred}
    \Wpred^{(2)} =
    \begin{cases}
     \Wpred &\text{if } \abs{c(x)\cap W} \equiv 1\bmod 2 \\
     \Wpred \cup\{x''\} &\text{if } \abs{c(x)\cap W} \equiv 0\bmod 2
    \end{cases}
   \end{equation}
   Similarly, since $x''\notin K$, the set $\odds{G_2}{K}$ is equal to $\odds{G}{K}$ if $x\notin K$ and to $\odds{G}{K}\cup\{x''\}$ if $x\in K$.
   Thus:
   \begin{equation}\label{eq:Vsucc}
    \Vsucc^{(2)} =
    \begin{cases}
     \Vsucc &\text{if } x\in K \\
     \Vsucc \cup \{x''\} &\text{if } x\notin K
    \end{cases}
   \end{equation}
   We distinguish cases depending on the parity of $\abs{c(x)\cap W}$ and on whether $x$ is in $K$.
   \begin{itemize}
    \item Suppose $\abs{c(x)\cap W} \equiv 1\bmod 2$ and $x\in K$.
    This is forbidden by \ref{VS4} so can be ignored.
    \item Suppose $\abs{c(x)\cap W} \equiv 0\bmod 2$ and $x\in K$.
     Take $w\in\Wpred^{(2)}$.

     If $w=x''$, then for all $u\in V$ such that $u\prec_2 x''$ we also have $u\prec x$ by the definition of $\prec_2$, and thus by \ref{VS3} $u\notin\Vsucc^{(2)}$.
     Moreover, $x''\notin\Vsucc^{(2)}$ since $x\in K$.
     Thus $\neg(v\prec_2 w \vee v = w)$ for all $v\in\Vsucc^{(2)}$.

     If $w\neq x''$, the result follows from \ref{VS2}, noting that for all $u,v\in V$ we have $u\prec_2 v$ if and only if $u\prec v$.

    \item Suppose $\abs{c(x)\cap W} \equiv 1\bmod 2$ and $x\notin K$.
    This is again forbidden by \ref{VS4}.

    \item Suppose $\abs{c(x)\cap W} \equiv 0\bmod 2$ and $x\notin K$.
     Take $v\in\Vsucc^{(2)}$.

     If $v=x''$, then for all $u\in V$ such that $x''\prec_2 u$ we also have $x\prec u$ by the definition of $\prec_2$, and thus by \ref{VS3} $u\notin\Wpred^{(2)}$.
     Moreover, $x''\notin\Wpred^{(2)}$ since $\abs{c(x)\cap W} \equiv 0\bmod 2$.
     Thus $\neg(v\prec_2 w \vee v = w)$ for all $w\in\Wpred^{(2)}$.

     If $v\neq x''$, the result again follows from \ref{VS2} and noting that for all $u,w\in V$ we have $u\prec_2 w$ if and only if $u\prec w$.
   \end{itemize}
   Thus condition~\ref{it:order} follows from \ref{VS2}--\ref{VS4}.
 \end{itemize}
 We have shown that $\Gamma'$ has Pauli flow if $\Gamma$ has a focused Pauli flow and we can find a set $K$ such that \ref{VS1}--\ref{VS4} hold.

 For the reverse direction, assume $\Gamma'$ has focused Pauli flow $(c_4,\prec_4)$, then by Lemma~\ref{lem:pivot-Pauli-flow}, $\Gamma_3$ has focused Pauli flow $(c_3,\prec_3)$, where ${\prec_3}={\prec_4}$ and for all $w\in V_4 = V\cup\{x',x''\}$:
 \[
  c_3(w) := \begin{cases}
            c_4(w) &\text{if } x',x''\notin\codds{G_4}{c_4(w)} \\
            c_4(w)\symd\{x'\} &\text{if } x'\in\codds{G_4}{c_4(w)}, x''\notin\codds{G_4}{c_4(w)} \\
            c_4(w)\symd\{x''\} &\text{if } x'\notin\codds{G_4}{c_4(w)}, x''\in\codds{G_4}{c_4(w)} \\
            c_4(w)\symd\{x',x''\} &\text{if } x',x''\in\codds{G_4}{c_4(w)}
           \end{cases}
 \]
 Take $K' = c_3(x')\setminus\{x'\}$, \ie $K' = c_4(x')\setminus\{x''\}$ if $x''\in c_4(x')$ and $K' = c_4(x')$ otherwise.
 By Theorem~\ref{thm:YZ-into-Pauli}, we know that $\Gamma_2$ (and thus iteratively $\Gamma$) has Pauli flow, and conditions~\ref{it:K-focused}--\ref{it:order} hold for $\Gamma_2$.
 \begin{enumerate}
  \item $K'$ being focused over $\comp{O}_2\setminus((W\cup\{x''\})\cap\Xlike_2)$ and no larger set implies $K'$ is focused over $(\comp{O}\cup\{x''\})\setminus(W\cap\Xlike)$ since $\ld_2(x'') = \YZ$.
  Now $x''\notin K'$ by definition, so \ref{F1}--\ref{F3} hold with respect to $x''$ and we may limit attention to the remaining vertices.
  This yields the condition of $K'$ being focused over $\comp{O}\setminus(W\cap\Xlike)$ and no larger set, which is exactly the definition of $K$.
  Moreover, $\abs{(W\cup\{x''\})\cap K'} \equiv 0 \bmod 2$ implies $\abs{W\cap K'} \equiv 0 \bmod 2$ since $x''\notin K'$ by the definition above.
  Thus \ref{VS1} holds.

  \item Let $\Wpred^{(2)}$ and $\Vsucc^{(2)}$ as defined in \eqref{eq:Wpred2-def} and\eqref{eq:Vsucc2-def}.
  Then all $w\in\Wpred^{(2)}$ and all $v\in\Vsucc^{(2)}$ satisfying $\neg(v\prec_2 w \vee v = w)$ immediately implies \ref{VS2} via \eqref{eq:Wpred} and \eqref{eq:Vsucc}, as well as noting that for vertices $u,u'\in V$, we have $u\prec_2 u' \Leftrightarrow u\prec u'$.

  Then from \eqref{eq:Wpred}, we see that $x''\in\Wpred^{(2)}$ if and only if $\abs{c(x)\cap W}\equiv 0 \bmod 2$ and $x\in\Wpred^{(2)}$ if and only if $\abs{c(x)\cap W}\equiv 1 \bmod 2$.
  Similarly, from \eqref{eq:Vsucc}, we deduce that $x''\in\Vsucc^{(2)}$ if and only if $x\notin K$ and $x\in\Vsucc^{(2)}$ if and only if $x\in K$.
  Thus the requirement that $\Wpred^{(2)}$ and $\Vsucc^{(2)}$ do not share elements implies either $\abs{c(x)\cap W}\equiv 0 \bmod 2$ and $x\in K$, or $\abs{c(x)\cap W}\equiv 1 \bmod 2$ and $x\notin K$: \ie \ref{VS4} holds.
  It also means that exactly one of $x,x''$ is in $\Wpred^{(2)}$ and the other is in $\Vsucc^{(2)}$.

  Finally, by the uniqueness of focused Pauli flow on \LOG{}s with $\abs{I}=\abs{O}$ and by Corollary~\ref{cor:YZ-insertion-single-neighbour}, we know that $x$ and $x''$ must have the same partial order relationships in $\prec_2$.
  Suppose $x\in\Wpred^{(2)}$ and $x''\in\Vsucc^{(2)}$ (the other case is analogous).
  Then for all $v\in\Vsucc^{(2)}$, condition~\ref{it:order} implies $\neg(v\prec_2 x)$.
  So if $u\in V$ satisfies $u\prec x$ and thus $u\prec_2 x$, then $u\notin\Vsucc^{(2)}$.
  Similarly, for all $w\in\Wpred^{(2)}$, condition~\ref{it:order} implies $\neg(v\prec_2 x'')$.
  So if $u\in V$ satisfies $x\prec u$ and thus $x''\prec_2 u$, then $u\notin\Wpred^{(2)}$.
  Thus \ref{VS3} holds.
 \end{enumerate}
 This establishes all six conditions hold if $\Gamma'$ has Pauli flow.
\end{proof}

\begin{proof}[Proof of Corollary~\ref{cor:neighbour-unfusion}]
 Suppose $\Gamma$ has a focused Pauli flow $(c,\prec)$ which satisfies the two conditions and where without loss of generality $\prec$ is the induced partial order.
 Take $K := c(b)$ in Theorem~\ref{thm:vertex-splitting}.
 Then focusing of $(c,\prec)$ implies $K$ is focused over $\comp{O}\setminus\{b\}$ and \ref{P4} implies $K$ is not focused over $\{b\}$ as well as $\abs{\{b\}\cap c(b)}=0\bmod 2 \Leftrightarrow b\notin c(b)$.
 Together these imply \ref{VS1}.
 We also have
 \begin{align*}
  \Wpred
  := \{w\in\comp{O} : \abs{c(w)\cap\{b\}}\equiv 1 \bmod 2\}
  = \{w\in\comp{O}\mid b\in c(w)\}
  = \{w\in\comp{O}\mid w\trl_c b\}
 \end{align*}
 and
 \begin{align*}
  \Vsucc
  := \mathcal{L}\cap (c(b)\cup(\odds{G}{c(b)}\symd \{b\}))
  = \{v\in\comp{O}\mid b\trl_c v\}.
 \end{align*}
 Thus \ref{VS2} follows directly from $\prec$ being the induced partial order (cf.\ Definition~\ref{def:trl_c}).
 Next, \ref{VS3} reduces to $a\prec u\implies \neg(u\trl_c b)$ and $u\prec a\implies \neg(b\trl_c u)$ for all $u\in V$, which follows from the second bullet point.
 Finally, \ref{VS4} becomes $a\in c(b) \Leftrightarrow b\notin c(a)$, which is equivalent to the first bullet point.
 Thus all the conditions of Theorem~\ref{thm:vertex-splitting} are satisfied and the operation preserves the existence of Pauli flow.

 For the other direction, suppose $\Gamma'$ has Pauli flow.
 Then by Theorem~\ref{thm:vertex-splitting}, $\Gamma$ has a focused Pauli flow $(c,\prec)$ such that conditions~\ref{VS1}--\ref{VS4} hold.
 We may assume without loss of generality that $\prec$ is the induced partial order (cf.\ Section~\ref{s:further-properties}), since removing relationships from the partial order will not break any of the four conditions.
 Then, from the above, \ref{VS4} implies the first bullet point.
 Now suppose for a contradiction $u\in V$ satisfies $b\prec u\prec a$.
 Then there exists a sequence $u_1,\ldots,u_n$ such that $b\trl_c u_1 \trl_c \ldots \trl_c u_n \trl_c u$ since $\prec$ is the induced partial order.
 Thus there exists $u_1\in\Vsucc$ such that $u_1\prec a$, which contradicts condition~\ref{VS3} so this cannot happen and we have $u\prec a\implies \neg(b\prec u)$.
 A similar argument works for the case $a\prec u\prec b$.
 Thus the second bullet point holds.
\end{proof}

\end{document}